\theoremstyle{plain}
\newtheorem{theorem}{Theorem}[section]
\newtheorem{lemma}[theorem]{Lemma}
\newtheorem{claim}{Claim}
\newtheorem{corollary}[theorem]{Corollary}
\theoremstyle{definition}
\newtheorem{definition}{Definition}[section]
\newtheorem{remark}{Remark}
\newcommand{\ignore}[1]{}
\DeclareMathOperator*{\E}{\mathbb{E}}
\DeclareMathOperator*{\argmin}{argmin}
\DeclareMathOperator{\nnz}{nnz}
\DeclareMathOperator{\diag}{diag}
\DeclareMathOperator{\poly}{poly}
\DeclareMathOperator{\polylog}{polylog}
\renewcommand{\epsilon}{\varepsilon}
\newcommand{\eps}{\varepsilon}
\newcommand{\RR}{\mathbb{R}}
\title{Nearly Linear Row Sampling Algorithm for Quantile Regression}
\author{
Yi Li\\
{\small School of Physical and Mathematical Sciences}\\
{\small Nanyang Technological University, Singapore}\\
{\small \texttt{yili@ntu.edu.sg}}
\and
Ruosong Wang\\
{\small Department of Computer Science}\\
{\small Carnegie Mellon University, USA}\\
{\small \texttt{ruosongw@andrew.cmu.edu}}
\and
Lin Yang\\
{\small Department of Electrical and Computer Engineering}\\
{\small University of California Los Angeles, USA}\\
{\small \texttt{linyang@ee.ucla.edu}}
\and
Hanrui Zhang\\
{\small Department of Computer Science}\\
{\small Duke University, USA}\\
{\small \texttt{hrzhang@cs.duke.edu}}
}
\date{}
\begin{document}
\maketitle

\begin{abstract}
We give a row sampling algorithm for the quantile loss function with sample complexity nearly linear in the dimensionality of the data, improving upon the previous best algorithm whose sampling complexity has at least cubic dependence on the dimensionality. Based upon our row sampling algorithm, we give the fastest known algorithm for quantile regression and a graph sparsification algorithm for balanced directed graphs. Our main technical contribution is to show that Lewis weights sampling, which has been used in row sampling algorithms for $\ell_p$ norms, can also be applied in row sampling algorithms for a variety of loss functions. We complement our theoretical results by experiments to demonstrate the practicality of our approach. 
\end{abstract}

\newcommand{\loss}{\phi}
\section{Introduction}
Linear regression is a fundamental problem in machine learning and statistics.
For a data matrix $A \in \RR^{n \times d}$ and a response vector $b \in \RR^n$ with $n \gg d$, the overconstrained linear regression problem can be formulated as solving the optimization problem
$\min_{x\in \RR^d} \loss(Ax - b)$
where $\loss : \RR^n \to \RR$ is a loss function.
Via the technique of row sampling, we have obtained remarkable speedups for solving linear regression for a wide range of loss functions.
Such techniques involve designing an importance sampling scheme and showing that if one samples the data matrix $A$ and the response vector $b$ accordingly, $\phi(Ax - b)$ is approximately preserved for all $x \in \RR^d$.
Thus, one can obtain an approximate solution to the original linear regression problem by solving a linear regression instance on the sampled data matrix and response vector, which is usually much smaller in size.

The technique of row sampling has received much attention in randomized numerical linear algebra and machine learning.
It is known that if one samples $O(d \log d / \varepsilon^2)$ rows of $A$ according the to leverage scores, the resulting matrix $A'$ would satisfy \[(1 - \varepsilon) \|Ax\|_2 \le \|A'x\|_2 \le (1 + \varepsilon)\|Ax\|_2\] for all $x \in \RR^d$ with high probability~\cite{tropp2012user}. 
Moreover, such row sampling algorithm can be implemented in $\widetilde{O}(\nnz(A) + \poly(d))$ time\footnote{Throughout the paper, we use $\widetilde{O}(f)$ to denote $O(f  \polylog f)$.}~\cite{clarkson2013low, meng2013low, nelson2013osnap, li2013iterative, cohen2015uniform}, where $\nnz(A)$ is the number of non-zero entries of $A$.
Building upon the result of Clarkson for $p = 1$~\cite{clarkson2005subgradient}, Dasgupta et al.\ obtain the first row sampling algorithm for general $\ell_p$ norms~\cite{dasgupta2009sampling}.
They show that by sampling $\widetilde{O}(d^{\max\{p/2+1,p\}+1} / \varepsilon^2)$ rows of $A$ according to the $\ell_p$ leverage scores,  the resulting matrix $A'$ satisfies \[(1 - \varepsilon) \|Ax\|_p \le \|A'x\|_p \le (1 + \varepsilon)\|Ax\|_p\] for all $x \in \RR^d$ simultaneously with high probability.
The algorithm in~\cite{dasgupta2009sampling} runs in $\widetilde{O}(nd^5)$ time, and the running time is further improved to $\widetilde{O}(\nnz(A) + \poly(d))$ in \cite{sohler2011subspace, clarkson2016fast, meng2013low, woodruff2013subspace, wang2019tight}.
Later, based on results in Banach space theory~\cite{talagrand1990embedding, talagrand1995embedding, bourgain1989approximation}, Cohen and Peng show that by sampling according to $\ell_p$ Lewis weights~\cite{cohen2015p}, the number of sampled rows can be further reduced to $\widetilde{O}(d / \varepsilon^2)$ when $p < 2$ and $\widetilde{O}(d^{p / 2} / \varepsilon^2)$ when $p \ge 2$,  which is nearly optimal as recently shown by Li et al.~\cite{li2020tight}.
Row sampling algorithms are also obtained for other loss functions, including $M$-estimators~\cite{clarkson2015sketching, clarkson2015input}, the Tukey loss function~\cite{CWW:tukey} and Orlicz norms~\cite{song2019efficient}.

Yang et al.\@ study row sampling algorithms for the quantile loss function~\cite{yang2013quantile}. Let $\tau\in (0,1]$ be a parameter and $y = Ax-b$. The quantile loss function is defined to be $h_{\tau}(y) = \sum_{i = 1}^n h_{\tau}(y_i)$, where
\begin{equation}\label{equ:quantile_def1}
h_{\tau}(y_i) = \begin{cases}
\tau y_i & \text{if $y_i \ge 0$} \\
(\tau - 1) y_i & \text{if $y_i \le 0$}
\end{cases}.
\end{equation}
The corresponding quantile regression problem is of particular interest, due to its wide applications in medicine \cite{cole1992smoothing, royston1994regression, royston2000goodness}, in hydrology \cite{pandey1999comparative}, in economics~\cite{koenker2001quantile, koenker1978regression}, in survival analysis~\cite{koenker2001reappraising}, etc.
See, e.g.,~\cite{koenker2001quantile, davino2013quantile} for surveys on theory and applications of quantile regression.
Quantile regression is also an active research topic in statistical machine learning.
See, e.g., \cite{chen2019quantile, chowdhury2019nonparametric, ota2019quantile, zheng2018high, zhao2017quantile} for recent developments. 
Using the notion of the $\ell_1$ leverage score \cite{dasgupta2009sampling} and known $\ell_1$ oblivious subspace embeddings~\cite{sohler2011subspace, meng2013low}, Yang et al.\ obtain the first quantile regression algorithm that runs in $\widetilde{O}(\nnz + \poly(d / \varepsilon))$ time~\cite{yang2013quantile}. 
More specifically, they show that by sampling $O(\tau/(1 - \tau)\cdot d^3/\varepsilon^2\big)$ rows according to the $\ell_1$ leverage scores, $h_\tau(Ax)$ is preserved for all $x \in \RR^d$ up to a factor of $1 \pm \varepsilon$.
Furthermore, quantile regression can be formulated as a linear program, and thus one can readily solve the smaller instance returned by the sampling process in $\poly(d / \varepsilon)$ time.

Although the results in \cite{yang2013quantile} are interesting from a theoretical point of view, the $d^3$ factor in the number of sampled rows required by their algorithm is far from being practical due to the rise of large-scale datasets.
Even for a moderate-sized dataset with $d = 300$, the algorithm in \cite{yang2013quantile} requires at least $d^3 = 2.7 \times 10^7$ samples, which is too large to gain any speedup over the na\"ive implementation without using row sampling. 
For modern large-scale datasets, e.g., the ImageNet dataset~\cite{imagenet_cvpr09}, $d$ is much larger than $10^5$ even after pre-processing.
In that case, even quadratic number of samples is too large to be practical. 
Thus, one may naturally ask whether it is possible to obtain a row sampling algorithm for the quantile loss function which requires only nearly linear number of samples, similar to the case of $\ell_p$ norms when $1 \le p \le 2$.
In this paper, we provide an affirmative answer to this question.

\paragraph{Our Contributions.}

We provide the first row sampling algorithm with nearly linear number of sampled rows for the quantile loss function. Note that we can write $Ax-b = A'x'$ for $A' = [A\ b]\in \RR^{n\times (d+1)}$ (concantenation of $A$ and $b$) and $x' = [\begin{smallmatrix} x \\ 1\end{smallmatrix}]\in \RR^{d+1}$, we overload the notations by renaming $A'$ to $A$ and $x'$ to $x$ and replacing $d$ with $d+1$, and consider instead preserving loss functions on $Ax$. For notational simplicity,  
we consider the equivalent problem of preserving 
\begin{equation}\label{equ:rho}
\rho_\tau(Ax) = \sum_{i = 1}^n \rho_\tau((Ax)_i),
\end{equation}
where $\tau\in[0,1]$ and $\rho_\tau : \RR \to \RR$ is defined as
\begin{equation}\label{equ:quantile_def2}
\rho_\tau(t) = \begin{cases}
				t &\quad t \geq 0\\
				-\tau t &\quad t < 0
			  \end{cases}.
\end{equation}
The problem of preserving $h_\tau(\cdot)$ is equivalent to that of preserving $\rho_\tau(\cdot)$, since 
 $h_{\tau}(t) = \tau \cdot \rho_{(1-\tau)/\tau}(t)$ for $\tau>0$.
Now we formally state our sampling result for the loss function $\rho_\tau(\cdot)$.

\begin{theorem}\label{thm:main_quantile}
There is an algorithm that receives a matrix $A \in \RR^{n \times d}$ and a parameter $\tau \in (0, 1]$, and outputs a matrix $\widetilde{A} \in \RR^{N \times d}$ with $N=\widetilde{O}(d/(\eps^2\tau^{2}))$ such that with probability at least $1 - 1 / \poly(d)$, for all $x \in \RR^d$, \[(1-\eps) \rho_{\tau}(Ax) \leq \rho_\tau(\widetilde{A}x)\leq (1+\eps)\rho_{\tau}(Ax).\]
The algorithm runs in $\widetilde{O}(\nnz(A) + d^{\omega})$ time.\footnote{We use $\omega$ to denote the matrix multiplication exponent. It is known that $\omega \le 2.373$~\cite{williams2012multiplying}.}
\end{theorem}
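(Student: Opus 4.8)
The plan is to treat the quantile loss as a (one-sided) reweighted $\ell_1$ problem and run $\ell_1$ Lewis weights sampling, paying an extra $1/\tau^2$ factor in the sample size to absorb the asymmetry. The three facts about $\rho_\tau$ I would exploit are: it is positively homogeneous, $\rho_\tau(ct)=c\,\rho_\tau(t)$ for $c\ge 0$; it is $1$-Lipschitz; and it is sandwiched, $\tau|t|\le \rho_\tau(t)\le |t|$. Concretely, first compute (approximate) $\ell_1$ Lewis weights $w\in\RR^n$ of $A$, i.e.\ a vector with $\sum_i w_i = d$ and $w_i^2 = a_i^\top (A^\top W^{-1} A)^{-1} a_i$ where $W=\diag(w)$ and $a_i^\top$ is the $i$-th row of $A$; this can be done in $\widetilde O(\nnz(A)+d^\omega)$ time by known algorithms. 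Then set $p_i = \min\{1,\ C\, w_i \log d/(\tau^2\eps^2)\}$ for a sufficiently large absolute constant $C$, keep each row independently with probability $p_i$, and rescale a kept row $a_i$ to $a_i/p_i$; call the result $\widetilde A$. Since $\sum_i p_i = \widetilde O(d/(\tau^2\eps^2))$, a Chernoff bound gives that $\widetilde A$ has $N=\widetilde O(d/(\tau^2\eps^2))$ rows with high probability, and the running time is dominated by the Lewis-weight computation.

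By positive homogeneity the rescaling preserves the loss in expectation: for every fixed $x$, $\E[\rho_\tau(\widetilde Ax)] = \sum_i p_i\cdot\rho_\tau(a_i^\top x)/p_i = \rho_\tau(Ax)$. The substance of the proof is upgrading this to hold simultaneously over all $x$. Again by homogeneity it suffices to bound the error process $Z(x) = \rho_\tau(\widetilde A x) - \rho_\tau(A x) = \sum_i (\delta_i/p_i - 1)\,\rho_\tau(a_i^\top x)$, where $\delta_i\in\{0,1\}$ has mean $p_i$, uniformly over the body $D=\{x:\rho_\tau(Ax)\le 1\}$, and to show $\Pr[\sup_{x\in D}|Z(x)|>\eps]\le 1/\poly(d)$; note $D\subseteq\{x:\|Ax\|_1\le 1/\tau\}$ by the sandwich bound. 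Two consequences of the Lewis-weight identity drive everything: for every vector $z$, $\sum_i (a_i^\top z)^2/w_i \le \|Az\|_1^2$, and hence $|a_i^\top z|/w_i \le \|Az\|_1$ for each $i$. Applied with $\rho_\tau(\cdot)\le|\cdot|$ and $\|Ax\|_1\le 1/\tau$ on $D$, the first yields a per-point variance bound $\sum_i\rho_\tau(a_i^\top x)^2/p_i \le \eps^2/(C\log d)$ and the second yields a uniform bound on the largest summand, $\rho_\tau(a_i^\top x)/p_i \le \eps^2/(C\log d)$ --- precisely the sub-Gaussian and sub-exponential scales that a chaining argument needs.

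To obtain the supremum bound with only a nearly-linear number of rows one cannot afford a naive union bound over an $\eps$-net of $D$ (that would cost an extra factor of $d$); instead I would symmetrize $Z$ and apply a generic-chaining / Dudley-type estimate. Using the Lipschitz property, the increments of the symmetrized process between $x$ and $y$ are controlled by the two (pseudo)metrics $\big(\sum_i (a_i^\top(x-y))^2/p_i\big)^{1/2}$ and $\max_i |a_i^\top(x-y)|/p_i$, both of which, after inserting $p_i \ge C w_i\log d/(\tau^2\eps^2)$ (on light rows) and the Lewis-weight inequalities above, are dominated by $\|A(x-y)\|_1$ up to the scale $\tau\eps/\sqrt{C\log d}$. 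Thus the covering numbers of $D$ in these metrics reduce to covering numbers of an $\ell_1$-ball inside a $d$-dimensional subspace, and Talagrand's bound on the entropy numbers of $\ell_1$ subspaces --- the same Banach-space input behind the Cohen--Peng $\ell_1$ Lewis-weight embedding --- makes the resulting entropy integrals $O(\eps)$ once $C$ is a large enough constant. Combining this with Talagrand's concentration inequality for suprema of mixed sub-Gaussian/sub-exponential empirical processes (with the variance and max-increment scales $\eps^2/(C\log d)$ computed above) turns $\E\sup_{x\in D}|Z(x)| = O(\eps)$ into the claimed high-probability bound, which gives $(1-\eps)\rho_\tau(Ax)\le\rho_\tau(\widetilde Ax)\le(1+\eps)\rho_\tau(Ax)$ for all $x$ simultaneously.

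I expect the chaining step to be the main obstacle. Because $\rho_\tau$ is asymmetric we cannot invoke the existing $\ell_p$ row-sampling theorems as a black box; the chaining has to be carried out directly for the quantile loss, carefully propagating the $1/\tau$ blow-ups that come from the sandwich inequality through the entropy estimates and the concentration bound, and separately accounting for the ``heavy'' rows with $p_i=1$ (which contribute zero error and can be removed from the process) versus the ``light'' rows that actually get sampled.
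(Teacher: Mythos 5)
Your plan is in the same spirit as the paper's proof (\(\ell_1\) Lewis weights, the sandwich \(\tau\|Ax\|_1 \le \rho_\tau(Ax) \le \|Ax\|_1\) to pay \(1/\tau^2\), symmetrize, then control the Rademacher process via Talagrand's entropy bounds for \(\ell_1\) subspaces), and the key observations you single out (homogeneity, Lipschitzness, the sandwich) are exactly the right ones. But the route is genuinely different in one important respect, and it is where you flag the main obstacle. You propose to carry out the generic-chaining argument \emph{directly} for the asymmetric loss, re-deriving increment bounds and entropy integrals from scratch, precisely because you believe ``we cannot invoke the existing \(\ell_p\) row-sampling theorems as a black box.'' The paper shows that in fact you can, after one extra step you are missing: write \(\rho_\tau(t) = a|t| + bt\) with \(a = (1+\tau)/2\) and \(b = (1-\tau)/2\), symmetrize, and then apply the Rademacher comparison lemmas (\(\E F\bigl(\sup_t |\sum_i \sigma_i |t_i||\bigr) \le 2\,\E F\bigl(\sup_t |\sum_i \sigma_i t_i|\bigr)\) and the contraction principle). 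This replaces \(\sum_k \sigma_k \rho_\tau(\langle A_{i_k},x\rangle)/p_{i_k}\) by the \emph{linear} Rademacher sum \(\sum_k \sigma_k \langle A_{i_k},x\rangle/p_{i_k}\) at the cost of a constant factor, after which the constraint \(\phi(Ax)=1\) can be traded for \(\|A''x\|_1 = 1\) using the sandwich inequality (your step \(D \subseteq \{x : \|Ax\|_1 \le 1/\tau\}\)), and the remaining moment bound is literally the one Cohen--Peng already proved for \(\ell_1\) Lewis-weight sampling, with no new chaining needed. So the two proofs buy different things: yours, if pushed through, would be more self-contained and would also expose exactly where the mixed sub-Gaussian/sub-exponential tails and the heavy/light row split enter; the paper's is more modular, reducing the entire asymmetric problem to the known symmetric \(\ell_1\) result with three elementary lemmas. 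I would also note two smaller divergences: the paper samples a fixed number \(N\) of rows i.i.d.\ from the distribution proportional to \(p_i\), whereas you keep/discard rows independently with probability \(p_i\) (both work, but the paper's form matches Cohen--Peng's moment bound statement most directly), and the paper's final bound is obtained by a high-moment Markov argument rather than Talagrand's concentration inequality, though these are morally equivalent.
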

Our assumption that $\tau \le 1$ is without loss of generality, since $\rho_{1 / \tau}(t) = \frac{1}{\tau} \rho_{\tau}(-t)$, and thus our results can be directly translated to the case where $\tau > 1$.

Compared with~\cite{yang2013quantile}, the number of rows required by our row sampling algorithm has a nearly linear dependence on $d$, whereas the algorithm in~\cite{yang2013quantile} requires at least $O(d^3/(\varepsilon^2 \tau))$ sampled rows, in our notation. 
Thus, our result is better by a factor of $d^2$ when $\tau$ is constant (which is usually the case for applications in practice).
Furthermore, the running time of the row sampling algorithm in~\cite{yang2013quantile} is $\widetilde{O}(\nnz(A) + \poly(d))$, where the exponent hidden in the $\poly(d)$ term is much larger than $3$ due to the use of the ellipsoid rounding routine in~\cite{clarkson2016fast}. 


We then apply our sampling results to two problems: {\em quantile regression} and {\em balanced directed graph sparsification}.
\paragraph{Quantile Regression.}
For the quantile regression problem, notice that Theorem~\ref{thm:main_quantile} already implies an algorithm that runs in input-sparsity time.
Suppose that there is an algorithm that runs in time $\phi(s, d)$ for solving a quantile regression problem of size $s \times d$, then one can first use Theorem~\ref{thm:main_quantile} to reduce the problem size from $n \times d$ to $\widetilde{O}(d/\varepsilon^2\tau^2) \times d$, and thus the overall time complexity of the algorithm would be $\widetilde{O}(\nnz(A) + d^{\omega}) + \phi\left( \widetilde{O}(d/\varepsilon^2\tau^2), d\right)$.
As a comparison, the algorithm in~\cite{yang2013quantile} runs in $\widetilde{O}(\nnz(A) + \poly(d)) + \phi\big( \widetilde{O}(d^3 / \varepsilon^2\tau), d\big)$ time (see Theorem 1 in~\cite{yang2013quantile}).
Again, the exponent hidden in the $\poly(d)$ term is much larger than $3$.

To further demonstrate the flexibility and versatility of Theorem~\ref{thm:main_quantile}, we develop an algorithm for solving quantile regression using accelerated first-order methods. 
Here, our goal is to optimize the lower-order term that polynomially depends on $d$ in the time complexity while keeping the high-order term to be $\widetilde{O}(\nnz(A))$. To achieve this goal, we adopt the celebrated sketch-and-solve paradigm~\cite{woodruff2014sketching}, together with recently developed powerful tools for $\ell_1$ regression~\cite{durfee2018ell}.
As a result, we give the fastest known input-sparsity time algorithm for the quantile regression problem. 
\begin{theorem}[Informal version of Theorem~\ref{thm:reg}]
For a given $\tau\in (0,1]$, the quantile regression problem can be solved up to a factor of $(1 + \eps)$ in time $\widetilde{O}(\nnz(A) + d^{2.5} / (\eps^2 \tau^2))$.
\end{theorem}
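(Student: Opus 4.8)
\emph{Proof plan.} We solve the problem in two phases, in the sketch-and-solve style: a \emph{sketch} phase that shrinks the instance via Theorem~\ref{thm:main_quantile}, and a \emph{solve} phase that applies an $\ell_1$-regression engine to the small instance. For the sketch phase we first normalize the loss: since $h_\tau(t)=\tau\,\rho_{(1-\tau)/\tau}(t)$ and $\rho_{1/\sigma}(t)=\sigma^{-1}\rho_\sigma(-t)$, and since $Ax-b=[A\ {-b}]\,[\begin{smallmatrix}x\\ 1\end{smallmatrix}]$, it suffices to $(1+\eps)$-minimize $\rho_\tau([A\ {-b}]x')$ over $x'\in\RR^{d+1}$ with $x'_{d+1}=1$ (for some parameter, again written $\tau$, in $(0,1]$). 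Applying Theorem~\ref{thm:main_quantile} to $[A\ {-b}]\in\RR^{n\times(d+1)}$ yields $\widetilde A\in\RR^{m\times(d+1)}$ with $m=\widetilde O(d/(\eps^2\tau^2))$ on which $\rho_\tau$ is $(1\pm\eps)$-preserved over $\mathrm{colspan}([A\ {-b}])$; hence any $(1+\eps)$-minimizer of $\rho_\tau(\widetilde Ax')$ subject to $x'_{d+1}=1$ is a $(1+O(\eps))$-minimizer of the original problem, and rescaling $\eps$ by a constant recovers the claimed guarantee. This phase costs $\widetilde O(\nnz(A)+d^\omega)$, which is within the target budget since $\omega<2.5$ and $\eps,\tau\le1$; from here on only the reduced instance matters, with $\nnz(\widetilde A)\le m(d+1)=\widetilde O(d^2/(\eps^2\tau^2))$.

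\emph{Solve.} It remains to solve the quantile regression on $\widetilde A$ in time $\widetilde O(d^{2.5}/(\eps^2\tau^2))$. Using an $\ell_1$-regression routine that solves $\min_x\|Mx-c\|_1$ to relative accuracy $\eps$ on an $m'\times d$ instance in time $\widetilde O(\nnz(M)+d^{2.5}/\eps^2)$~\cite{durfee2018ell}, we can deal with $\rho_\tau$ in either of two ways. (a) Decompose $\rho_\tau(y)=\tfrac{1+\tau}{2}\|y\|_1+\tfrac{1-\tau}{2}\langle\mathbf 1,y\rangle$; fixing the value $v=\langle\mathbf 1,\widetilde Ax'\rangle$ of the (parameter-free) linear term turns the problem into $\ell_1$ regression over an affine hyperplane, so binary-searching $v$ reduces to $O(\log(1/\eps))$ calls of the routine on an $\widetilde O(d/(\eps^2\tau^2))\times d$ instance --- total $\widetilde O(d^2/(\eps^2\tau^2)+d^{2.5}/\eps^2)=\widetilde O(d^{2.5}/(\eps^2\tau^2))$. (b) Run the routine's framework --- Lewis-weight preconditioning plus its accelerated/stochastic first-order iteration --- directly on the convex $1$-Lipschitz loss $\rho_\tau$ (whose proximal and Nesterov-smoothing operations are as cheap as those of $\|\cdot\|_1$, since $\rho_\tau(t)=\max_{s\in[-\tau,1]}st$); here the two-sided comparison $\tau\|y\|_1\le\rho_\tau(y)\le\|y\|_1$ lets the $\ell_1$ Lewis weights of $\widetilde A$ (computable in $\widetilde O(\nnz(\widetilde A)+d^\omega)$ time~\cite{cohen2015p}) still control the sampling and conditioning quantities the iteration needs, at the price of a $\tau^{-2}$ factor that is again absorbed by $d^{2.5}/(\eps^2\tau^2)$. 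Either route gives the claimed running time; adding a preliminary constant-factor solve on a tiny sub-sketch followed by $O(\log(1/\eps))$ refinement rounds keeps the dependence on $\eps$ from inflating the cost beyond $\eps^{-2}$.

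\emph{Main obstacle.} The delicate part is the interface between the $\ell_1$ machinery and the \emph{asymmetry} of $\rho_\tau$: in the decomposition above the linear term is not coercive on its own, so one has to check that Lewis-weight preconditioning (tailored to the symmetric norm $\|\cdot\|_1$) stays sound and loses only a $\tau^{-2}$ factor; that the accelerated-iteration and smoothing analyses of~\cite{durfee2018ell} transfer to the convex $1$-Lipschitz function $\rho_\tau$; and --- the real bookkeeping --- that the multiplicative errors accrued across Phase~1, preconditioning, smoothing, and the refinement rounds compose to $1+O(\eps)$ while the overall dependence on $\tau$ stays at $\tau^{-2}$, and that each pass over $\widetilde A$ genuinely costs only $\widetilde O(\nnz(\widetilde A)+d^2)$. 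Carrying out this verification is where the bulk of the effort goes; the formal statement is Theorem~\ref{thm:reg}.
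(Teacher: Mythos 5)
Your sketch phase matches the paper exactly: apply Theorem~\ref{thm:main_quantile} to $[A,b]$, paying $\widetilde O(\nnz(A)+d^\omega)$ and reducing to an $\widetilde O(d/(\eps^2\tau^2))\times(d+1)$ instance. The solve phase is where you diverge. The paper's Algorithm~\ref{fig:alg} QR-factorizes the sketched matrix as $\widetilde A=QR$ and runs Katyusha (Allen-Zhu's accelerated SGD, \cite[Corollary 3.7]{allen2017katyusha}) on $x\mapsto\rho_\tau(\widetilde AR^{-1}x-\widetilde b)$ with warm start $x_0=(\widetilde AR^{-1})^\top b$; the engine of the analysis (Lemma~\ref{lem:conditions}) is that Lewis-weight sampling flattens the leverage scores of $[\widetilde A,\widetilde b]$ to $O(d/N)$, so each row of the orthonormal $Q=\widetilde AR^{-1}$ has norm $O(\sqrt{d/N})$, giving each summand $f_i$ an $O(\sqrt{Nd})$-Lipschitz constant, together with $\|x_0-\widetilde{x^{\mathrm{opt}}}\|_2\le\sqrt{d/(N\tau^2)}\cdot\rho_\tau(\widetilde AR^{-1}\widetilde{x^{\mathrm{opt}}}-\widetilde b)$; plugged into Katyusha's rate this yields $\widetilde O(d^{2.5}/(\eps^2\tau^2))$ and is exactly where the $\tau^{-2}$ arises. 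Your route~(b) is in the same family (precondition, then accelerated stochastic first-order), but you treat the $\ell_1$ engine of~\cite{durfee2018ell} as a black box rather than deriving the Lipschitz and warm-start facts that drive the iteration; the verification you defer is precisely Lemmas~\ref{lem:conditions}--\ref{lem:katyusha}. Your route~(a) is genuinely different and not in the paper: decompose $\rho_\tau(y)=\tfrac{1+\tau}{2}\|y\|_1+\tfrac{1-\tau}{2}\langle\mathbf 1,y\rangle$ and ternary-search the linear term's value, reducing to $O(\log(1/\eps))$ affinely constrained $\ell_1$-regression calls --- an appealing alternative. Beware, though: on an affine slice the quantile objective is $\|\cdot\|_1$ plus a possibly negative additive constant, so a $(1+\eps)$-multiplicative $\ell_1$ guarantee does not directly yield a $(1+\eps)$-multiplicative $\rho_\tau$ guarantee; you must invoke $\tau\|y\|_1\le\rho_\tau(y)\le\|y\|_1$ and rescale $\eps\to\eps\tau$ (re-deriving the $\tau^{-2}$), and you also need a constant-factor preliminary solve to bound the search range, as you note. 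With these details supplied either route is sound; the paper commits to the Katyusha-based one and does the bookkeeping you flag as the main obstacle.
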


\paragraph{Balanced Directed Graph Sparsification.}Another unique advantage of our sampling results over previous works is that, for constant $\eps$ and $\tau$, the number of sampled rows required by our algorithm is roughly linear in the dimensionality $d$. When $\tau = 0$, Theorem~\ref{thm:main_quantile} does not apply; nevertheless, a similar sampling scheme, when applied to the edge-vertex incidence matrix of a {\em balanced directed graph} (see Definition~\ref{def:balanced_graphs}), yields a graph sparsifier with nearly linear size. Recall Eqn.~\eqref{equ:rho} for loss functions applied to a vector. 

\begin{theorem}[Informal version of Corollary~\ref{cor:spar}]
For a given $\alpha$-balanced directed graph $G = (V, E, w)$ with $n$ vertices and $m$ edges, there is an algorithm that outputs $G' = (V, E', w')$ with $E' \subseteq E$ and $|E'| = O(n(\alpha/\eps)^2\log (n\alpha/\eps)))$ such that with probability at least $1-1/\poly(n)$, for all $x \in \RR^n$, 
\begin{equation}\label{equ:spar}
(1-\eps) \rho_0(Bx) \leq \rho_0(B'x)  \leq (1+\eps) \rho_0(Bx).
\end{equation}
Here $B$ is the edge-vertex incidence matrix of $G$ and $B'$ is the edge-vertex incidence matrix of $G'$.
\end{theorem}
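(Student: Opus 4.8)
The plan is to run the Lewis-weight sampling scheme underlying Theorem~\ref{thm:main_quantile} on the weighted edge-vertex incidence matrix $B\in\RR^{m\times n}$ of $G$, with the quantile loss $\rho_\tau$ replaced by $\rho_0$, and to show that $\alpha$-balancedness of $G$ supplies the one ingredient that the degenerate loss $\rho_0$ lacks on its own. Concretely: compute a constant-factor approximation to the $\ell_1$ Lewis weights $\bar w_1,\dots,\bar w_m$ of $B$; sample every edge $e$ independently with probability $p_e=\min\{1,\,c\,\alpha^2\bar w_e\log(n\alpha/\eps)/\eps^2\}$ for a large enough absolute constant $c$; and place $e$ into $E'$ with weight $w_e/p_e$. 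Since $\sum_e\bar w_e=\rank(B)\le n$, a Chernoff bound gives $|E'|=O(n(\alpha/\eps)^2\log(n\alpha/\eps))$ with high probability, as claimed; and because $\rho_0$ is positively homogeneous, $\rho_0((B'x)_e)=(1/p_e)\rho_0((Bx)_e)$, hence $\E[\rho_0(B'x)]=\rho_0(Bx)$ for every fixed $x$. Morally we are invoking Theorem~\ref{thm:main_quantile} with the role of $\tau$ played by $\Theta(1/\alpha)$: for $\tau=1/(1+\alpha)$ one has $\widetilde O(n/(\eps^2\tau^2))=\widetilde O(n(\alpha/\eps)^2)$. It remains to re-run that proof for $\rho_0$ on $B$.

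The only hypothesis of Theorem~\ref{thm:main_quantile} that is not immediate for $\rho_0$ is the lower sandwiching of the loss. Where that proof uses $\tau>0$ via the pointwise bound $\tau|t|\le\rho_\tau(t)\le|t|$, for $\rho_0$ the pointwise lower bound fails; but the proof uses it only through its \emph{aggregate} consequence, and the aggregate bound
\[
\frac{1}{1+\alpha}\,\|Bx\|_1 \ \le\ \rho_0(Bx) \ \le\ \|Bx\|_1 \qquad\text{for all } x\in\RR^n
\]
does hold here. The upper inequality is $\rho_0(t)\le|t|$ applied coordinatewise. For the lower one, the pointwise identity $\rho_0(t)+\rho_0(-t)=|t|$ gives $\|Bx\|_1=\rho_0(Bx)+\rho_0(-Bx)$, so it suffices to prove $\rho_0(-Bx)\le\alpha\,\rho_0(Bx)$. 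For an indicator $x=\mathbf 1_S$ this reads $w(\bar S,S)\le\alpha\,w(S,\bar S)$, i.e.\ the defining inequality of $\alpha$-balancedness for the cut $(S,\bar S)$ (Definition~\ref{def:balanced_graphs}); the general case follows from the coarea decompositions $\rho_0(Bx)=\int_{\RR}w(S_t,\bar S_t)\,dt$ and $\rho_0(-Bx)=\int_{\RR}w(\bar S_t,S_t)\,dt$ with $S_t=\{i:x_i>t\}$, by integrating the cut inequality over $t$. (If Definition~\ref{def:balanced_graphs} is phrased in terms of in/out degrees rather than cuts, one first derives the cut form, losing only an absolute constant that is absorbed into $\alpha$.)

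With the aggregate sandwiching in hand, the rest of the argument mirrors the proof of Theorem~\ref{thm:main_quantile}. For a fixed $x$, the $\ell_1$ Lewis-weight inequality $|(Bx)_e|\le\bar w_e\|Bx\|_1$ combined with the sandwiching yields $\rho_0((Bx)_e)\le(1+\alpha)\bar w_e\,\rho_0(Bx)$, so each summand $(1/p_e)\rho_0((Bx)_e)$ is $O(\eps^2/(\alpha\log(n\alpha/\eps)))\cdot\rho_0(Bx)$ and $\Var(\rho_0(B'x))$ is $O(\eps^2/(\alpha\log(n\alpha/\eps)))\cdot\rho_0(Bx)^2$; Bernstein's inequality then gives $\rho_0(B'x)=(1\pm\eps/2)\rho_0(Bx)$ except with small probability. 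One upgrades this to all $x$ simultaneously by the net/chaining argument of Theorem~\ref{thm:main_quantile}, applied over $\{x:\rho_0(Bx)=1\}$, which by the sandwiching is a bounded subset of the $\rank(B)$-dimensional space $(\ker B)^\perp$ (the kernel direction $\mathbf 1$ of $B$ is harmless, since $\rho_0(B\mathbf 1)=0=\rho_0(B'\mathbf 1)$, and a disconnected $G$ only decreases the dimension), using that $x\mapsto\rho_0(Bx)$ and $x\mapsto\rho_0(B'x)$ are $1$-Lipschitz in the seminorm $\|B\cdot\|_1$. A union bound over the failure of this step and of the size estimate, each $1/\poly(n)$, finishes the proof.

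The step I expect to be the main obstacle is the passage from a single $x$ to all $x$: a crude union bound over an $\eps$-net of the $\ell_1$-ball in $n$ dimensions would cost an extra factor of $n$ in the number of sampled edges, so obtaining nearly linear size $O(n\,\polylog(n))$ requires the finer, Cohen--Peng-style analysis of $\ell_1$ Lewis-weight sampling~\cite{cohen2015p} --- equivalently the chaining argument already carried out for Theorem~\ref{thm:main_quantile}, which transfers here because $\rho_0$ is $1$-Lipschitz and its increments are controlled by $\ell_1$ increments exactly as for $\rho_\tau$. A secondary point to verify carefully is that every step of that argument touches the lower bound on the loss only in its aggregate form, which is precisely what renders the pointwise degeneracy of $\rho_0$ on negative arguments harmless once $G$ is $\alpha$-balanced.
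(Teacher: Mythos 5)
Your high-level strategy matches the paper's: treat $\rho_0 = \tfrac12|\cdot|+\tfrac12(\cdot)$ as an instance of the loss class in Theorem~\ref{thm:quantile-sample}, supply the missing aggregate lower bound $\|Bx\|_1 \le (1+\alpha)\rho_0(Bx)$ from $\alpha$-balancedness (the paper's Lemma~\ref{lem:balanced_graph}), then invoke the Lewis-weight sampling machinery with $B=1+\alpha$ (i.e.\ $\tau\asymp 1/\alpha$). You correctly identify that the pointwise degeneracy of $\rho_0$ on negatives is harmless because the proof of Theorem~\ref{thm:quantile-sample} only uses the lower sandwiching in aggregate form, and that the ``all $x$ simultaneously'' step must come from the Cohen--Peng chaining rather than a crude Bernstein-plus-net argument, which you note would lose a factor of $n$.

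Where you genuinely diverge from the paper is in the proof of the sandwiching inequality itself. The paper fixes a direction, restricts to the polytope $P=\{a=x_1\le\cdots\le x_n=b\}$, and argues by linear-fractional programming that $\min_{x\in P}\rho_0(Bx)/\|Bx\|_1$ is attained at a vertex of $P$, i.e.\ at a threshold vector $\mathbf 1_S$, where the inequality is the $\alpha$-balancedness of the cut $(S,\bar S)$. You instead use the identity $\|Bx\|_1=\rho_0(Bx)+\rho_0(-Bx)$ and reduce to $\rho_0(-Bx)\le\alpha\,\rho_0(Bx)$, which you prove by the layer-cake (coarea) decompositions $\rho_0(Bx)=\int_{\RR} w(S_t,\bar S_t)\,dt$ and $\rho_0(-Bx)=\int_{\RR} w(\bar S_t,S_t)\,dt$ with $S_t=\{i:x_i>t\}$, integrating the cut inequality pointwise in $t$. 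This is correct: for $e=(u,v)$ one has $\int\mathbf 1[u\in S_t, v\notin S_t]\,dt=\max(x_u-x_v,0)$, and the needed inequality $w(\bar S_t,S_t)\le\alpha\,w(S_t,\bar S_t)$ is exactly Definition~\ref{def:balanced_graphs} applied to $\bar S_t$. The coarea route is arguably cleaner and more robust (it avoids the compactness/attainment and vertex-structure discussion), whereas the paper's LP argument is more elementary in the tools it uses; both bottom out at the same cut inequality.

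Two minor remarks, neither a real gap. First, you sample edges independently (Bernoulli), whereas Theorem~\ref{thm:quantile-sample} is stated for with-replacement sampling of $N$ rows; to literally cite that theorem black-box (as the paper does) you should use the with-replacement version, otherwise you would need to redo the moment calculation for Poissonized sampling. Second, the intermediate Bernstein/variance heuristic (including the step $|(Bx)_e|\le\bar w_e\|Bx\|_1$, which is not quite the form in which $\ell_1$ Lewis weights control coordinates) is not load-bearing and you rightly discard it in favor of the chaining argument, but as written it could mislead a reader into thinking a single-$x$ concentration bound is part of the proof.
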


We note that a sampling scheme gives meaningful sparsifiers for graphs of $n$ vertices only if the number of sampled rows is $o(n^2)$, since the largest possible number of edges in a graph is $O(n^2)$.
Hence the sampling scheme by Yang et al.~\cite{yang2013quantile}, for example, would sample $\Omega(n^3)$ rows and is not an option for graph sparsification.
Our sparsification result yields sparsifiers of roughly the same size as the state-of-the-art sparsification algorithm for balanced graphs by Ikeda and Tanigawa~\cite{IT:balanced_digraph}.\footnote{The two results are not directly comparable: for example, our bound has better dependence on the number of vertices $n$, while their bound has better dependence on the balance parameter $\alpha$.}
However, the sparsifier produced by~\cite{IT:balanced_digraph} is only guaranteed to be a cut sparsifier, meaning that \eqref{equ:spar} holds only for $x \in \{0, 1\}^n$ (see Remark~\ref{remark:spar}), whereas for our sparsifier, \eqref{equ:spar} holds for all $x \in \RR^n$, which is much stronger.
Furthermore, the algorithm in~\cite{IT:balanced_digraph} is specially tailored for balanced graphs, based on a number of highly combinatorial subroutines; in contrast, our algorithm follows a general-purpose sampling framework and utilizes the properties of balanced graphs in a rather black-box manner.
We believe our algorithm provides an alternative view of sparsification of (balanced) directed graphs, which also illustrates the power of our sampling scheme.

In Section~\ref{sec:exp}, we conduct an empirical evaluation of our row sampling algorithm for the quantile loss function to demonstrate the practicality of our approach. 

\paragraph{Organization.}
This paper is organized as follows. 
In Section~\ref{sec:pre}, we introduce necessary notations and definitions and also review known results regarding Lewis weights.
In Section~\ref{sec:quantile}, we present our main row sampling results.
In Section~\ref{sec:reg}, we present our algorithm for solving quantile regression.
In Section~\ref{sec:balanced_graph}, we give our results on balanced graph sparsification.
In Section~\ref{sec:extra}, we generalize our row sampling results in Section~\ref{sec:quantile} to a broader class of loss functions.
In Section~\ref{sec:exp}, we present our experimental results.
We conclude and discuss future directions in Section~\ref{sec:conclusion}.
Most technical proofs are deferred to the appendix.

\section{Preliminaries}\label{sec:pre}


\paragraph{Notations.}
We use $[n]$ to denote the set $\{1, 2, \ldots, n\}$.
For a vector $x \in \mathbb{R}^n$, we use $\|x\|_p$ to denote its $\ell_p$ norm i.e., $\|x\|_p = \left( \sum_{i = 1}^n |x_i|^p\right)^{1/ p }$.
For a matrix $A \in \mathbb{R}^{n \times d}$, we use $A^{\dagger}$ to denote its Moore-Penrose inverse.
For a matrix $A \in \mathbb{R}^{n \times d}$, we use $A_i$ to denote its $i$-th row, viewed as a column vector. The {\em leverage score} of $A_i$ is defined to be $\tau_{i}(A) = A_i^\top (A^\top A)^{\dagger}A_i $.
For two vectors $u$ and $v$, we use $\langle u, v \rangle$ to denote their inner product.
For a function $\phi : \mathbb{R} \to \mathbb{R}$ and a vector $y \in \mathbb{R}^n$, we use $\phi(y)$ to denote $\sum_{i = 1}^n \phi(y_i)$.

\paragraph{Graph Theory.}
We give some background on graph theory, which provides the context for our graph sparsification result.
In this paper, we focus on directed graphs $G = (V, E, w)$ with edge set $E \subseteq V \times V$, and edge weights $w : E \to \mathbb{R}_+$.
Throughout this paper, we assume that the directed graph $G$ is strongly connected.
For two sets of vertices $S$ and $T$, we use $w_G(S, T)$ to denote the total weight of edges from $S$ to $T$, i.e., 
$w_G(S, T) = \sum_{(u, v) \in (S \times T) \cap E} w(u, v)$.
We also say $w_G(S, T)$ is the {\em capacity} of the {\em directed cut} from $S$ to $T$.
We use $w(S, T)$ to denote $w_G(S, T)$ when $G$ is clear from the context.
For a directed graph $G = (V, E, w)$ with $n = |V|$ vertices and $m = |E|$ edges, the {\em edge-vertex incidence matrix} $B \in \mathbb{R}^{m \times n}$ is defined to be 
\[
B_{e, u} = \begin{cases}
w(e) & \text{if $e = (u, v)$} \\
-w(e) & \text{if $e = (v, u)$} \\
0 & \text{otherwise}
\end{cases}
.\]
In words, each row of $B$ corresponds to an edge $e = (u, v)$ in $G$, where the $u$-th entry is $w(e)$, $v$-th entry is $-w(e)$, and all other entries are $0$.

\paragraph{Lewis Weights.}

We now briefly review the definition and some basic properties of {\em Lewis weights} \cite{cohen2015p}, which play a key role in our sampling results.

\begin{definition}[$\ell_p$ Lewis weights {\cite{cohen2015p}}]\label{def:lewis_weights}
For a given matrix $A \in \mathbb{R}^{n \times d}$ and $p \ge 1$, 
we say $\{w_i\}_{i=1}^n$ are the {\em $\ell_p$ Lewis weights} of $A$ if they satisfy
$\tau_i(W^{1/2 - 1/p}A) = w_i$ for all $i \in [n]$,
where $W = \diag(w_1,\dots,w_n)$ is the $n\times n$ diagonal matrix whose $i$-th diagonal entry is $w_i$.
\end{definition}
We need the following fact regarding lewis weights proved in~\cite{cohen2015p}.
\begin{lemma}[\cite{cohen2015p}]\label{lem:sum_lewis}
The $\ell_p$ Lewis weights $\{w_i\}_{i=1}^n$ always exist and are unique, and $\sum_{i = 1}^n w_i \le d$.
\end{lemma}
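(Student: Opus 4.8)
The plan is to split the statement into its two independent halves. The bound $\sum_i w_i \le d$ is a one-line trace identity once the defining equation is rewritten correctly, while existence and uniqueness are the real content. Throughout I would assume without loss of generality that $A$ has full column rank $d$: the leverage scores $\tau_i(W^{1/2-1/p}A)$ depend only on the column space of $A$ viewed as a subspace of $\RR^n$, so we may replace $A$ by an $n\times r$ matrix with the same column space, where $r=\rank(A)\le d$, and it suffices to show $\sum_i w_i\le r\le d$.

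For the sum bound, observe that $(W^{1/2-1/p}A)^\top(W^{1/2-1/p}A)=A^\top W^{1-2/p}A=:M$ and $(W^{1/2-1/p}A)_i=w_i^{1/2-1/p}a_i$, so the equation $\tau_i(W^{1/2-1/p}A)=w_i$ is equivalent to $w_i^{2/p}=a_i^\top M^{\dagger}a_i$. Multiplying by $w_i^{1-2/p}$ and summing over $i$ gives $\sum_i w_i=\sum_i w_i^{1-2/p}\,a_i^\top M^{\dagger}a_i=\operatorname{tr}\!\big(M^{\dagger}\sum_i w_i^{1-2/p}a_ia_i^\top\big)=\operatorname{tr}(M^{\dagger}M)=\rank(M)\le d$. (A zero row $a_i=0$ forces $w_i=0$ and can be discarded; after that all $w_i>0$ and $\rank(M)=\rank(A)$.)

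For existence I would use the variational characterization, cleanest when $p\ge 2$ (the case $p=2$ is immediate since then the Lewis weights are exactly the leverage scores $\tau_i(A)$, which exist, are unique, and sum to $\rank(A)$). Consider maximizing $\mathcal{F}(w)=\log\det(A^\top W^{1-2/p}A)$ over the compact convex set $\{w\ge 0:\sum_i w_i=d\}$. Since $1-2/p\in(0,1)$, the map $w\mapsto\diag(w^{1-2/p})$ is concave in the Loewner order, and $\log\det$ is concave and Loewner-monotone on the positive-definite cone, so $\mathcal{F}$ is concave and a maximizer $w^{\star}$ exists. The partial derivative $\partial\mathcal{F}/\partial w_i=(1-2/p)\,w_i^{-2/p}\,a_i^\top M^{\dagger}a_i$ blows up as $w_i\to 0^{+}$, so $w^{\star}$ is in the positive orthant, and the KKT conditions produce a constant $c$ with $(w^{\star}_i)^{2/p}=c\,a_i^\top (M^{\star})^{\dagger}a_i$ for all $i$; feeding this into the trace computation above gives $d=\sum_i w^{\star}_i=c\,\rank(M^{\star})=cd$, so $c=1$ and $w^{\star}$ solves the Lewis-weight equation exactly. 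To reach the range $p\in[1,2)$, where $1-2/p<0$ and the concavity of $\mathcal{F}$ fails, and to obtain uniqueness across the whole range, I would invoke the contraction argument: the iteration $T(w)_i=\big(a_i^\top(A^\top W^{1-2/p}A)^{\dagger}a_i\big)^{p/2}$ is a contraction of ratio $|1-p/2|$ in the metric $d_{\log}(u,v)=\max_i|\log u_i-\log v_i|$, so for $p\in[1,4)$ Banach's fixed-point theorem yields a unique positive fixed point, i.e.\ the unique Lewis weights; for $p\ge 4$ uniqueness follows instead from the classical analysis of the above variational problem.

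The \textbf{main obstacle} is uniqueness, and specifically the contraction estimate for $T$. The heart of it is showing that a multiplicative perturbation of the weights perturbs each leverage score $a_i^\top(A^\top W^{1-2/p}A)^{\dagger}a_i$ by a controlled amount in log-scale — concretely, that leverage scores are, in a precise sense, $1$-Lipschitz under multiplicative reweighting in the $d_{\log}$ metric, so that composing with the exponents $1-2/p$ (inside $M$) and $p/2$ (outside) produces exactly the factor $|1-p/2|$, which is $<1$ iff $p<4$. This spectral estimate, together with a compactness argument that keeps the iterates bounded away from $0$ and $\infty$ so the relevant metric space is effectively complete, is the technical crux; the endpoint $p=1$ deserves separate care, since there $1-2/p=-1$ and $M=A^\top W^{-1}A$, so only the fixed-point route — not the variational one — is available.
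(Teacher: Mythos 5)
This lemma is imported by the paper as a black-box citation to Cohen and Peng; the paper contains no proof of it, so there is nothing internal to compare against. Your reconstruction is essentially the standard argument from the cited work (and from Lewis's original paper), and it is correct in outline. Two remarks. First, the sum bound admits an even shorter form than your trace computation: since $\tau_i(B)=B_i^\top(B^\top B)^{\dagger}B_i$ always satisfies $\sum_i \tau_i(B)=\rank(B)$, the defining identity gives directly $\sum_i w_i=\sum_i\tau_i(W^{1/2-1/p}A)=\rank(W^{1/2-1/p}A)\le d$; your rewriting $w_i^{2/p}=a_i^\top M^{\dagger}a_i$ with $M=A^\top W^{1-2/p}A$ is the same identity unwound, and is correct. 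Second, the contraction estimate you identify as the crux is exactly the Loewner-monotonicity computation: if $u\le\alpha v$ and $v\le\alpha u$ entrywise then $A^\top U^{1-2/p}A$ and $A^\top V^{1-2/p}A$ are within a factor $\alpha^{|1-2/p|}$ of each other in the Loewner order, the quadratic forms in the inverses are within the same factor, and the outer power $p/2$ yields the ratio $|1-p/2|$, which is less than $1$ precisely for $p<4$; moreover $(\RR_{>0}^n,d_{\log})$ is already complete (it is isometric to $(\RR^n,\|\cdot\|_\infty)$ via the logarithm), so the extra compactness device you mention is not needed. The only genuinely under-justified step is uniqueness for $p\ge 4$ "from the classical analysis of the variational problem": concavity of $\mathcal{F}$ alone does not give uniqueness of the maximizer (the map $w\mapsto A^\top W^{1-2/p}A$ need not be injective), and one must argue strict concavity along the relevant directions or follow Lewis's duality argument. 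Since the lemma is cited rather than proved in the paper, this does not affect the paper, but it is the one place where your sketch would need real work to be complete.
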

The following theorem enables efficient (approximate) computation of Lewis weights, which we use as a subroutine in our sampling results.
\begin{theorem}[{\cite{cohen2015p}}]\label{thm:calc}
There is an algorithm that receives a matrix $A \in \mathbb{R}^{n \times d}$ and $p \ge 1$, and outputs $\{\overline{w}_i\}_{i=1}^n$, such that with probability at least $1 - 1 / \poly(d)$, for all $i \in [n]$, $w_i \le \overline{w}_i  \le 2 w_i$,
where $\{w_i\}_{i=1}^n$ are the $\ell_p$ Lewis weights of $A$.
The algorithm runs in $\widetilde{O}(\nnz(A) + d^{p / 2 + O(1)})$ time.
When $p < 4$, the algorithm runs in $\widetilde{O}(\nnz(A) + d^{\omega})$ time.
\end{theorem}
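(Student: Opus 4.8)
The plan is to exhibit the $\ell_p$ Lewis weights of $A$ as the unique fixed point of an explicit map, to show that this map is a contraction when $p<4$, and then to run a few \emph{approximate} iterations, each of which reduces to one approximate leverage-score computation. Assume without loss of generality that $A$ has full column rank $d$ (otherwise restrict to the column space of $A$; this affects neither the leverage scores nor the Lewis weights, since every $a_i$ lies in that space). Unfolding Definition~\ref{def:lewis_weights}, the Lewis weights obey
\[
  w_i \;=\; \tau_i\bigl(W^{1/2-1/p}A\bigr) \;=\; w_i^{\,1-2/p}\, a_i^{\top}\bigl(A^{\top}W^{1-2/p}A\bigr)^{-1}a_i ,
\]
equivalently $w_i = g(w)_i := \bigl(a_i^{\top}(A^{\top}W^{1-2/p}A)^{-1}a_i\bigr)^{p/2}$, where $W=\diag(w)$. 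Computing Lewis weights is thus exactly the problem of finding a fixed point of $g$, which exists and is unique by Lemma~\ref{lem:sum_lewis}.

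Next I would establish that $g$ is a contraction in the scale-free metric $\mathrm{dist}(w,w') := \max_i\lvert\log w_i-\log w_i'\rvert$. If $\mathrm{dist}(w,w')\le\delta$, then each diagonal entry of $W^{1-2/p}$ and $W'^{1-2/p}$ agrees up to a factor $e^{\lvert1-2/p\rvert\delta}$, hence $A^{\top}W^{1-2/p}A \preceq e^{\lvert1-2/p\rvert\delta}\,A^{\top}W'^{1-2/p}A$ and symmetrically; by operator-antitonicity of matrix inversion on positive definite matrices, $a_i^{\top}(A^{\top}W^{1-2/p}A)^{-1}a_i$ and the corresponding quantity for $w'$ agree up to the same factor, and raising to the power $p/2$ gives $\mathrm{dist}(g(w),g(w'))\le \tfrac p2\lvert1-\tfrac2p\rvert\,\delta = \lvert\tfrac p2-1\rvert\,\delta$. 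So $g$ is a contraction with constant $c:=\lvert p/2-1\rvert$, and $c<1$ precisely when $0<p<4$ (for $p=1$, the case used throughout this paper, $c=\tfrac12$).

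The algorithm then initializes $w^{(0)}=\mathbf{1}$ and iterates an approximate map $\widehat g$: at step $t$ it computes the leverage scores $\tau_i\bigl((W^{(t)})^{1/2-1/p}A\bigr)$ up to a multiplicative factor $e^{\eta}$ for a small enough constant $\eta=\eta(c)$, using the standard Johnson--Lindenstrauss-plus-fast-solver routine for approximate leverage scores, which costs $\widetilde O(\nnz(A)+d^{\omega})$ per iteration and succeeds with probability $1-1/\poly(d)$ (this is the primitive behind the $\widetilde O(\nnz(A)+\poly(d))$-time $\ell_2$ row-sampling algorithms cited earlier). Since $\widehat g$ differs from the $c$-contraction $g$ by at most $\eta$ in the metric $\mathrm{dist}$, unrolling gives $\mathrm{dist}(w^{(T)},w)\le c^{T}\mathrm{dist}(w^{(0)},w)+\eta/(1-c)$. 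In the standard bit model the Lewis weights satisfy $w_i\ge 2^{-\poly(n)}$, so $\mathrm{dist}(w^{(0)},w)=\poly(n)$ and $T=O(\log n)$ iterations (indeed $O(\log\log n)$ with a more careful, scale-invariant potential) bring the first term below $\tfrac12\log 2$; choosing $\eta$ small enough that $\eta/(1-c)<\tfrac12\log2$ brings the second term below $\tfrac12\log2$ as well — since $p$ is constant, $1/(1-c)$ is constant, so $\eta$ may be taken to be a fixed constant and the leverage-score accuracy remains $O(1)$. Finally output $\overline w_i:=\sqrt2\cdot w_i^{(T)}$; then $w_i\le\overline w_i\le 2w_i$. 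A union bound over the $O(\log n)$ iterations keeps the failure probability at $1/\poly(d)$ and the running time at $\widetilde O(\nnz(A)+d^{\omega})$.

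For $p\ge4$ the simple iteration is no longer contracting ($c=p/2-1\ge1$), and one must instead invoke the recursive construction of Cohen and Peng, which reduces the task to leverage-score computations on auxiliary matrices encoding the $\ell_p$ geometry whose sizes are polynomial in $d^{p/2}$, producing the $d^{p/2+O(1)}$ term. I expect the main obstacle to lie exactly in this $p\ge4$ regime — proving correctness and the stated running time of the recursion. For $p<4$ (the only regime this paper uses) the delicate point is instead the error-accumulation bookkeeping: verifying that the contraction is genuinely in the log-$\ell_\infty$ metric with constant $\lvert p/2-1\rvert$ (which rests on the inequality $X\preceq Y\Rightarrow a^{\top}X^{-1}a\ge a^{\top}Y^{-1}a$) and that a constant-accuracy leverage-score oracle, iterated a logarithmic number of times, still lands within a factor $2$ of the true Lewis weights.
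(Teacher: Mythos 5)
The paper does not prove this theorem at all---it is imported verbatim from Cohen and Peng \cite{cohen2015p}---so the only meaningful comparison is with their argument, and for $p<4$ your reconstruction is essentially theirs and is correct: the fixed-point characterization $w_i=\bigl(a_i^{\top}(A^{\top}W^{1-2/p}A)^{-1}a_i\bigr)^{p/2}$, the contraction with constant $\lvert p/2-1\rvert$ in the $\max_i\lvert\log w_i-\log w_i'\rvert$ metric via Loewner monotonicity and antitonicity of inversion, and the inexact-iteration bookkeeping $c^{T}\,\mathrm{dist}(w^{(0)},w)+\eta/(1-c)$ with a constant-accuracy leverage-score oracle all check out, and each iteration does cost $\widetilde O(\nnz(A)+d^{\omega})$. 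The one genuine gap is the $p\ge 4$ clause of the statement, where you only gesture at the recursive construction; since the paper invokes the theorem solely with $p=1$ (Theorems~\ref{thm:main_quantile} and~\ref{thm:calc_graph}, Algorithms~\ref{fig:alg} and~\ref{fig:spar}), this omission does not affect anything downstream, but as a proof of the full statement it is incomplete.
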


For graph sparsification specifically, we are interested in approximating Lewis weights for edge-vertex incidence matrices.
The following theorem states that in such cases, the running time stated in Theorem~\ref{thm:calc} can be further improved.
\begin{theorem}\label{thm:calc_graph}
When the matrix $A$ is the edge-vertex incidence matrix of a graph $G$ and $p < 4$, the algorithm in Theorem~\ref{thm:calc} runs in $\widetilde{O}(m)$ time, where $m$ is the number of edges in $G$.
\end{theorem}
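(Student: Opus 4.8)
The plan is to revisit the algorithm of Cohen and Peng underlying Theorem~\ref{thm:calc} and observe that, when specialized to edge-vertex incidence matrices, each of its iterations can be implemented using fast Laplacian solvers in place of generic dense linear algebra. Recall that their algorithm computes approximate $\ell_p$ Lewis weights via a fixed-point iteration of the form $w_i \leftarrow \tau_i(W^{1/2 - 1/p} A)$, which for $p < 4$ is a contraction, so that $O(\polylog(n))$ iterations suffice to produce a constant-factor approximation to the true Lewis weights. In each iteration one must approximately compute all $n$ leverage scores of the reweighted matrix $M := W^{1/2 - 1/p} A$; for a general matrix this costs $\widetilde{O}(\nnz(A) + d^{\omega})$, which is the source of the $d^{\omega}$ term in Theorem~\ref{thm:calc}.

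The key observation is that if $A = B$ is the edge-vertex incidence matrix of $G$, then $M = W^{1/2 - 1/p} B$ is again the edge-vertex incidence matrix of $G$, with each edge weight $w(e)$ replaced by $w_e^{1/2 - 1/p} w(e)$: a diagonal row rescaling of $B$ does not change its sparsity pattern, only its edge weights. Hence $M^{\top} M$ is a weighted graph Laplacian on $n$ vertices with only $O(m)$ nonzero entries, and $\tau_i(M) = M_i^{\top}(M^{\top}M)^{\dagger}M_i$ is, up to the known rescaling factor, the effective resistance of edge $i$ in the associated resistor network.

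First I would invoke the Spielman--Srivastava technique: apply a Johnson--Lindenstrauss projection to reduce the computation of all $m$ leverage scores of $M$ to $O(\log n)$ Laplacian linear systems, each solved to relative accuracy $1 \pm \eps$ in $\widetilde{O}(m)$ time using a nearly-linear-time Laplacian solver. This yields $(1 \pm \eps)$-approximations to all $m$ leverage scores simultaneously in $\widetilde{O}(m)$ time, which we plug into the Cohen--Peng iteration in place of the generic leverage-score routine. The cost per iteration becomes $\widetilde{O}(m)$, and over the $O(\polylog(n))$ iterations the total running time is $\widetilde{O}(m)$. One also notes that for a (strongly) connected graph $B$ has rank $n - 1$, so all computations take place in the image of $B^{\top}$ and use the pseudoinverse throughout, exactly as in the standard effective-resistance setting; the Cohen--Peng analysis is insensitive to this degeneracy.

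The step I expect to require the most care is verifying that the approximate leverage scores returned by the Laplacian-solver-based subroutine meet the per-iteration accuracy requirement of the Cohen--Peng convergence analysis, and that the accumulation of sketching and solver errors across the $O(\polylog(n))$ iterations still leaves a $2$-factor approximation as stated in Theorem~\ref{thm:calc}. This amounts to tracking constants through their contraction argument, and is not conceptually difficult: both the Johnson--Lindenstrauss error and the solver error can be driven down to an arbitrarily small inverse polynomial in $n$ at only polylogarithmic overhead, so the contraction absorbs them.
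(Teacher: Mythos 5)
Your proposal is correct and follows essentially the same route as the paper: reduce the Cohen--Peng fixed-point iteration (for $p<4$, a contraction requiring $O(\log n)$ rounds) to leverage-score computations on diagonally reweighted copies of $A$, observe that a row-rescaled incidence matrix is again an incidence matrix of a reweighted graph, and compute those leverage scores as effective resistances via a nearly-linear-time Laplacian solver plus Johnson--Lindenstrauss. The paper invokes Lemma~2.4 and Theorem~4.1 of Cohen--Peng and the effective-resistance algorithm of Spielman--Srivastava directly rather than re-deriving the per-iteration cost, but the substance is the same.
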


\section{Row Sampling Based on Lewis Weights}\label{sec:quantile}
In this section, we present our row sampling results.
Our results are built upon the Lewis weights sampling framework introduced in~\cite{cohen2015p}.
In particular, given a matrix $A \in \mathbb{R}^{n \times d}$, they show that if one samples $O(d \log (d /\eps)/\eps^{2})$ rows of $A$ based on the $\ell_1$ Lewis weights (see Definition~\ref{def:lewis_weights}) and scales the rows properly, then one obtains a matrix $\widetilde{A}$ such that with high probability, for all $x \in \mathbb{R}^d$,
$(1 - \varepsilon) \|Ax\|_1 \le \|\widetilde{A}x\|_1 \le (1 + \varepsilon) \|Ax\|_1$.

In the following theorem, we show that for any function $\phi(\cdot)$ of the form $\phi(x) = a|x| + bx$ for some $0 \le b \le a$, sampling according to $\ell_1$ Lewis weights still approximately preserves the value of $\phi(Ax)$ for all $x \in \mathbb{R}^d$.
Our result generalizes that of~\cite{cohen2015p}, in the sense that if we set $a = 1$ and $b = 0$, we recover exactly their bound for the $\ell_1$ norm.
Later, based on Theorem~\ref{thm:quantile-sample}, we shall give a row sampling algorithm for the quantile loss function, and also provide an algorithm for producing sparsifiers for balanced directed graphs.

\begin{theorem}\label{thm:quantile-sample}
For a given matrix $A \in \mathbb{R}^{n \times d}$, let $\{w_i\}_{i=1}^n$ be its $\ell_1$ Lewis weights, and 
suppose the function $\phi(x) = a|x| + bx$ $(0 \le b \le a)$ satisfies that $\|Ax\|_1 \leq B\cdot \phi(Ax)$ for all $x \in \mathbb{R}^d$ for some $B>0$. 
There exists an absolute constant $C$ such that the following holds.

For any values $p_1,\ldots,p_n\geq 0$ such that $\sum_{i\in [n]} p_i = N$ and $p_i\ge C (aB/\eps)^{2} w_i\log N$,
if we generate a matrix $\widetilde{A}$ with $N$ rows, where each row is chosen independently as $A_i / p_i$ with probability $p_i / N$, then with probability  at least $1-1/\poly(d)$, it holds for all $x \in \mathbb{R}^d$ simultaneously that
\[(1-\eps) \phi(Ax)\leq \phi(\widetilde{A}x) \leq (1+\eps) \phi(Ax).\]
\end{theorem}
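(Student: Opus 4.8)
The plan is to follow the $\ell_1$ Lewis‑weights sampling argument of Cohen and Peng~\cite{cohen2015p}, isolating the three properties of $\phi$ that make it go through: $\phi\ge 0$; $\phi$ is positively homogeneous, i.e.\ $\phi(ct)=c\,\phi(t)$ for $c\ge 0$; and $(a-b)|t|\le \phi(t)\le (a+b)|t|\le 2a|t|$, so $\phi$ is $2a$‑Lipschitz. Two standard consequences of the Lewis‑weight equation $\tau_i(W^{-1/2}A)=w_i$ will be used repeatedly. First, $w_i\le 1$, since leverage scores lie in $[0,1]$. Second, applying the elementary inequality $z_i^2\le \tau_i(M)\,\|z\|_2^2$ (valid for $z$ in the column space of $M$) with $M=W^{-1/2}A$ and $z=W^{-1/2}Ax$ gives $(Ax)_i^2/w_i^2\le \sum_k (Ax)_k^2/w_k$ for every $i$; feeding this back into $\sum_k (Ax)_k^2/w_k\le \big(\max_k |(Ax)_k|/w_k\big)\,\|Ax\|_1$ yields the two bounds I actually need:
\[
|(Ax)_i|\le w_i\,\|Ax\|_1 \ \text{ for all } i, \qquad\text{and}\qquad \sum_i \frac{(Ax)_i^2}{w_i}\le \|Ax\|_1^2 .
\]
Since each row of $\widetilde A$ is a positive rescaling of a row of $A$ (hence $\widetilde A x=0$ whenever $Ax=0$), positive homogeneity of $\phi$ reduces the claim to all $x$ with $\|Ax\|_1=1$; for such $x$, $\|Ax\|_1\le B\phi(Ax)$ gives $\phi(Ax)\ge 1/B$, so it suffices to prove $\sup_{\|Ax\|_1=1}|\phi(\widetilde A x)-\phi(Ax)|\le \eps/B$.

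Next, the pointwise estimate. Fix $x$ with $\|Ax\|_1=1$. By positive homogeneity, $\phi(\widetilde A x)=\sum_{j=1}^N \phi\big((Ax)_{i_j}\big)/p_{i_j}$ is a sum of $N$ independent nonnegative random variables with mean exactly $\phi(Ax)$. Each summand is at most $2a\,|(Ax)_{i_j}|/p_{i_j}\le 2a\,w_{i_j}/p_{i_j}\le 2\eps^2/(CaB^2\log N)$ by the first Lewis bound and $p_i\ge C(aB/\eps)^2 w_i\log N$, and the variance of the sum is at most $\sum_j \E[(\phi((Ax)_{i_j})/p_{i_j})^2]\le \sum_i 4a^2(Ax)_i^2/p_i\le \frac{4\eps^2}{CB^2\log N}\sum_i (Ax)_i^2/w_i\le \frac{4\eps^2}{CB^2\log N}$ by the second Lewis bound. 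Bernstein's inequality then gives $\Prob[\,|\phi(\widetilde A x)-\phi(Ax)|>\eps/B\,]\le N^{-\Omega(C)}$, which is at most $N^{-10}$ once $C$ is a large enough absolute constant.

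The main obstacle is the union bound over $\{x:\|Ax\|_1=1\}$, the unit sphere of a $d$‑dimensional norm. A direct $\gamma$‑net of this set has size $(O(1)/\gamma)^d$, and transferring a bound on the net to a bound everywhere requires the error $x\mapsto |\phi(\widetilde A x)-\phi(Ax)|$ to be Lipschitz with a known constant $L$; the best crude bound here is $L=\poly(d)$, forcing $\gamma\approx \eps/L$ and a net of size $e^{\Theta(d\log(d/\eps))}$, which is not overwhelmed by the pointwise failure probability $N^{-\Omega(C)}$ for any \emph{constant} $C$ (as $\log N=O(\log(d/\eps))$). To reach a nearly linear sample count, I would instead run the multi‑scale‑net / generic‑chaining argument of~\cite{cohen2015p}, which ultimately rests on Talagrand's work on $\ell_1$‑subspace embeddings~\cite{talagrand1990embedding,talagrand1995embedding}: symmetrize the empirical process $x\mapsto \phi(\widetilde A x)-\phi(Ax)$, bound its expected supremum over $\{\|Ax\|_1\le 1\}$ by an entropy‑integral estimate in the intrinsic metric $d(x,x')=\|A(x-x')\|_1$ — this is where $\sum_i w_i\le d$ and the $2a$‑Lipschitz bound on $\phi$ enter — and upgrade the expectation to a high‑probability statement. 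The only features of $\phi$ that this argument uses are nonnegativity, positive homogeneity, and the $O(a)$‑Lipschitz bound, all of which $\phi(t)=a|t|+bt$ with $0\le b\le a$ possesses; the factor $aB$ in the required lower bound on $p_i$, together with the hypothesis $\|Ax\|_1\le B\phi(Ax)$, is precisely what converts the resulting relative error for $\|\cdot\|_1$ into a relative error for $\phi$.

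For exposition one can alternatively split $\phi(Ax)=a\|Ax\|_1+b\langle \mathbf{1}, Ax\rangle$ and $\phi(\widetilde A x)=a\|\widetilde A x\|_1+b\langle\mathbf{1},\widetilde A x\rangle$, invoke the Cohen--Peng $\ell_1$ sampling theorem black‑box for the first term at accuracy $\eps/(4aB)$ (its sampling requirement being implied by our hypothesis on the $p_i$), and handle only the linear term $\langle\mathbf{1},\widetilde A x\rangle$ directly; note, however, that this linear term still requires a genuinely $\ell_1$‑type union bound, since relaxing $\{\|Ax\|_1\le1\}$ to the enclosing ellipsoid $\{\|W^{-1/2}Ax\|_2\le1\}$ costs a factor $\sqrt d$. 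In summary, the pointwise estimate and the reductions are routine given the two Lewis‑weight inequalities; the chaining‑based union bound is where essentially all the work lies, and where the bulk of the technical content (imported from~\cite{cohen2015p}) resides.
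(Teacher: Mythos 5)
Your plan is in the same spirit as the paper's proof (Lewis weights, symmetrization, reduction to the Cohen--Peng $\ell_1$ chaining bound, and the $B$ hypothesis to convert an $\ell_1$-relative error into a $\phi$-relative error), your two Lewis-weight inequalities and the pointwise Bernstein estimate are correct, and you rightly identify that a crude $\eps$-net fails. But the one step you leave at a high level --- ``symmetrize and let the $2a$-Lipschitz bound on $\phi$ enter the entropy integral'' --- is exactly where the paper does its real work, and the generic phrasing hides a pitfall. The Ledoux--Talagrand contraction lemma in the form the paper records (Lemma~\ref{thm:rademacher_contraction}, used later for Theorem~\ref{thm:generalized-sample}) compares $\phi$ to a \emph{nonnegative} comparison function $g$ via $|\phi(x)-\phi(y)|\le L|g(x)-g(y)|$; for $\phi(t)=a|t|+bt$ with $b>0$ the only natural nonnegative choice $g(t)=|t|$ does not work, since $|\phi(t)-\phi(-t)|=2b|t|$ while $||t|-|{-}t||=0$. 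So ``$\phi$ is $O(a)$-Lipschitz'' is not by itself enough to invoke that contraction.

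What the paper does instead, inside the symmetrized $\ell$-th moment, is split the Rademacher sum explicitly according to $\phi(t)=a|t|+bt$: bound $|aX+bY|^\ell\le 2^{\ell-1}(a^\ell|X|^\ell+b^\ell|Y|^\ell)$, apply the comparison principle of \cite{LT89} (Lemma~\ref{thm:rademacher_comparison}, which compares $\sum\sigma_k|t_k|$ to $\sum\sigma_k t_k$ for convex increasing $F$) only to the $|X|$ term, and observe that the $|Y|$ term is already the linear process; since $b\le a$, both terms collapse to $3a^\ell 2^{\ell-1}\,\E_\sigma\sup_{\phi(Ax)=1}|\sum_k\sigma_k\langle A_{i_k},x\rangle/p_{i_k}|^\ell$. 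This is the same idea as your ``alternative'' decomposition $\phi(Ax)=a\|Ax\|_1+b\langle\mathbf 1,Ax\rangle$, but carried out at the level of the Rademacher chaos, which is cleaner because both pieces reduce to a single linear process and the cross-term issues you worry about for the linear part vanish. After that the paper conditions on the event $\|\widetilde A x\|_1\le C_1\|Ax\|_1$ (a byproduct of the Cohen--Peng guarantee), appends the auxiliary matrix $A'$ of $O(d^2)$ rows with $O(1/d)$ Lewis weights (Lemma~B.1 of \cite{cohen2015p}) so the contraction principle can absorb the rescalings $1/p_{i_k}$, and then quotes the implicit $\ell$-th moment bound from the proof of Theorem~2.3 of \cite{cohen2015p} for the linear $\ell_1$ process with $\ell=\Theta(\log N')$. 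You implicitly defer all of these to \cite{cohen2015p}; making the split-plus-comparison step explicit is the piece that is genuinely new relative to that reference, and is what your sketch is missing.
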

\begin{proof}[Proof (sketch)]
We follow the framework in~\cite{cohen2015p}. 
Suppose that $i_1,i_2,\dots,i_N$ are the indices chosen by the sampling process.
Our goal is to derive an upper bound on
\[
F =\!\!\!\! \sup_{x:  \phi(Ax) = 1} \left| \phi(\widetilde{A}x) - 1 \right| \!=\!\!\!\! \sup_{x: \phi(Ax) = 1} \left| \sum_{k=1}^N \frac{\phi(\langle A_{i_k}, x\rangle)}{p_{i_k}} \!-\! 1\right|.
\]
We shall show it holds for some $\ell$ that
\begin{equation}\label{eqn:target}
M = \E_{i_1,\dots,i_N} F^\ell  \leq \frac{\eps^\ell}{\poly(d)},
\end{equation}
and thus the result would follow from Markov's inequality. To prove the moment bound above, we first apply the symmetrization technique and obtain that
\[
M = \E_{i} \sup_{\phi(Ax) = 1} \left|\sum_{k=1}^N	\frac{\phi(\langle A_{i_k}, x\rangle)}{p_{i_k}} - 1\right|^\ell
\leq 2^\ell \E_{i,\sigma} \left[\left(\sup_{\phi(Ax)= 1}\left|\sum_{k=1}^N	\sigma_k\frac{\phi(\langle A_{i_k}, x\rangle)}{p_{i_k}}\right|\right)^\ell\right],
\]
where $i = (i_1, i_2, \ldots, i_N)$, and $\sigma=(\sigma_1,\sigma_2,\ldots, \sigma_N)$ is a Rademacher sequence.
Using the comparison theorem of Rademacher processes, we can show that
\[
\E_{\sigma} \left(\sup_{\phi(Ax) = 1}\left|\sum_{k=1}^N	\sigma_k\frac{\phi(\langle A_{i_k}, x\rangle)}{p_{i_k}}\right|\right)^\ell 
\leq 3a^\ell 2^{\ell-1} \E_{\sigma} \sup_{\phi(Ax) = 1} \left|\sum_{k=1}^N\sigma_k\frac{\langle A_{i_k}, x\rangle}{p_{i_k}}\right|^\ell.
\]
The supremum on the right-hand side is similar to the Rademacher process considered in~\cite{cohen2015p}. Analogously we can show that, conditioned on the event that for all $x \in \mathbb{R}^d$, $\|\widetilde{A}x\|_1\leq C_1\|Ax\|_1$, which holds with high probability for some absolute constant $C_1>0$, the expectation can be upper bounded as
\[
\E_\sigma \sup_{\phi(Ax) =  1} \left|\sum_{k=1}^N\sigma_k\frac{\langle A_{i_k}, x\rangle}{p_{i_k}}\right|^\ell 
\leq B^\ell \E_\sigma \sup_{\|Ax\|_1 =  1} \left|\sum_{k=1}^N\sigma_k\frac{\langle A_{i_k}, x\rangle}{p_{i_k}}\right|^\ell 
\leq B^\ell \frac{(\eps/(C'aB))^\ell}{\poly(d)}
\]
for some $\ell = \Theta(\log(N + d^2))$, where $C'$ is another absolute constant.
Thus \eqref{eqn:target} holds if the expectation is conditioned on the event defined above.
The failure probability of the event would be added to the overall failure probability, concluding the proof.
\end{proof}
Now Theorem~\ref{thm:main_quantile} immediately follows from Theorem~\ref{thm:quantile-sample}.
See Section~\ref{sec:proof} in the supplementary material for formal proofs.

We have successfully applied the Lewis weights sampling framework to provide row sampling algorithms for the quantile loss function.
In general, one may wonder if it is possible to apply Lewis weights sampling to more general loss functions. Cohen and Peng show that $\ell_p$ Lewis weights sampling works gracefully when the loss function is the $\ell_p$ norm~\cite{cohen2015p}.
In Section~\ref{sec:extra}, we show that when the loss function $\phi(x)$ can be approximated by $\|x\|_p^p$,  one can preserve $\phi(Ax)$ approximately for all $x \in \mathbb{R}^d$, by sampling proportional to $\ell_p$ Lewis weights and using a weighted version of the loss function $\phi(\cdot)$.


\newcommand{\wt}{\widetilde}
\newcommand{\Ab}{[A, b]}
\newcommand{\sAb}{[\wt{A}, \wt{b}]}

\section{Solving Quantile Regression}\label{sec:reg}
Based on the row sampling result in Theorem~\ref{thm:main_quantile} and recent advances for $\ell_1$ regression \cite{durfee2018ell}, we obtain a faster algorithm for solving the quantile regression problem.
The description of the algorithm is given in Algorithm~\ref{fig:alg}, and we prove its running time and correctness in Theorem~\ref{thm:reg}.
\begin{algorithm}
\begin{algorithmic}[1]
\STATE Obtain approximate $\ell_1$ Lewis weights for each row of $\Ab$ using Theorem~\ref{thm:calc}.
\STATE Sample $N = O\left( \frac{d}{\varepsilon^2 \tau^2}\log \frac{n}{\varepsilon \tau} \right)$ rows of $\Ab$ according to $\ell_1$ Lewis weights and obtain $\sAb \in \mathbb{R}^{N \times (d + 1)}$.
\STATE Let $\wt{A} = Q \cdot R$ be the QR-decomposition, where $Q\in\RR^{N\times d}$ and $R\in \RR^{d\times d}$. 
\STATE Run the Katyusha algorithm\footnotemark\ with $x_0 = (\wt{A}R^{-1})^T b$ to obtain $\overline{x}$ such that\label{alg:step:katyusha}
\begin{equation}\label{equ:approx_sketched}
\rho_\tau  \left(\wt{A}R^{-1}\overline{x}  -  \wt{b} \right) \le \left(1 + \frac{\varepsilon}{3}\right) \min_{x \in \mathbb{R}^d} \rho_\tau   \left(\wt{A}R^{-1}x - \wt{b} \right) .
\end{equation}
\STATE Return $x^* = R^{-1}\overline{x}$.
\end{algorithmic}
\caption{Algorithm for solving quantile regression}
\label{fig:alg}
\end{algorithm}
\footnotetext{Katyusha is an accelerated stochastic gradient descent algorithm by Allen-Zhu~\cite{allen2017katyusha} for minimization of a convex function. Here the convex function is $x\mapsto \rho_\tau(\widetilde{A}R^{-1}x-\widetilde{b})$ and we use the variant in Corollary 3.7 of \cite{allen2017katyusha}. The justification of applying this variant of Katyusha is given by Lemma~\ref{lem:conditions}(i).}

The following lemma, which is an easy corollary of Theorem~\ref{thm:main_quantile}, shows that if $\overline{x}$ satisfies the condition in \eqref{equ:approx_sketched}, then it is indeed an approximate solution to $\min_{x \in \mathbb{R}^d}\rho_\tau(Ax-b)$.
\begin{lemma}\label{lem:sketch}
For sufficiently small $\varepsilon > 0$, if $\overline{x}$ satisfies the condition in \eqref{equ:approx_sketched}, with probability at least $0.9$, the solution $x^* \in \mathbb{R}^d$ returned by Algorithm~\ref{fig:alg} satisfies \[\rho_\tau(Ax^*-b) \le (1 + \varepsilon) \min_{x \in \mathbb{R}^d}\rho_\tau(Ax-b).\]
\end{lemma}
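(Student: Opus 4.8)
The plan is to prove this by a straightforward three-step argument that translates guarantees about the sketched-and-rotated instance back to the original problem. First I would observe that the change of variables $y = R^{-1}x$ is a bijection on $\mathbb{R}^d$ (since $\wt{A}$ has full column rank with probability $1$, and hence $R$ is invertible), so that $\min_{x}\rho_\tau(\wt{A}R^{-1}x - \wt{b}) = \min_{y}\rho_\tau(\wt{A}y - \wt{b})$; in particular the minimizer $\bar{y}$ of the rotated problem corresponds to $x^\star = R^{-1}\bar{y}$. This reduces the claim to comparing $\rho_\tau(\wt{A}y - \wt{b})$ with $\rho_\tau(Ay - b)$ uniformly over $y$, i.e., comparing the sketched quantile objective with the original one.

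Second, I would invoke Theorem~\ref{thm:main_quantile} applied to the matrix $\Ab \in \mathbb{R}^{n\times(d+1)}$, with accuracy parameter $\eps/C$ for a suitable constant $C$ (say $C = 9$), so that with probability at least $1 - 1/\poly(d) \geq 0.95$ the sampled matrix $\sAb$ satisfies, for every $z \in \mathbb{R}^{d+1}$,
\[
\left(1 - \tfrac{\eps}{9}\right)\rho_\tau(\Ab\, z) \leq \rho_\tau(\sAb\, z) \leq \left(1 + \tfrac{\eps}{9}\right)\rho_\tau(\Ab\, z).
\]
Specializing to $z = \bigl(\begin{smallmatrix} y \\ -1\end{smallmatrix}\bigr)$ gives the uniform comparison $(1-\eps/9)\rho_\tau(Ay-b) \leq \rho_\tau(\wt{A}y - \wt{b}) \leq (1+\eps/9)\rho_\tau(Ay-b)$ for all $y \in \mathbb{R}^d$. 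Note the number of sampled rows matches the $N$ prescribed in Algorithm~\ref{fig:alg}.

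Third, I would chain the inequalities. Let $y^\star$ be a minimizer of $\rho_\tau(Ay-b)$ over $\mathbb{R}^d$ and recall $x^\star = R^{-1}\bar{x}$. Using the hypothesis \eqref{equ:approx_sketched} with the substitution $y = R^{-1}x$, we have $\rho_\tau(\wt{A}R^{-1}\bar{x} - \wt{b}) \le (1+\eps/3)\min_y \rho_\tau(\wt{A}y-\wt{b}) \le (1+\eps/3)\rho_\tau(\wt{A}y^\star - \wt{b}) \le (1+\eps/3)(1+\eps/9)\rho_\tau(Ay^\star - b)$. On the other hand, applying the embedding lower bound at $y = x^\star$ yields $\rho_\tau(Ax^\star - b) \le (1-\eps/9)^{-1}\rho_\tau(\wt{A}R^{-1}\bar{x} - \wt{b})$. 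Combining, $\rho_\tau(Ax^\star - b) \le (1-\eps/9)^{-1}(1+\eps/3)(1+\eps/9)\,\rho_\tau(Ay^\star - b)$, and for sufficiently small $\eps$ the product of the three factors is at most $1+\eps$, which is exactly the claimed bound. The success probability is at least that of the embedding event, which is $1 - 1/\poly(d) \ge 0.9$ as required.

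Honestly there is no serious obstacle here: the lemma is a routine "sketch-and-solve" composition once Theorem~\ref{thm:main_quantile} is in hand. The only points requiring minor care are (i) checking that $\eps/C$ with the right constant makes the telescoped distortion fit under $1+\eps$ (hence the phrase "for sufficiently small $\eps$"); (ii) recording that $R$ is invertible so the change of variables is legitimate; and (iii) being careful that Theorem~\ref{thm:main_quantile} is applied to $\Ab$ with dimension $d+1$ rather than $d$, so that the row count $N$ and failure probability are stated in terms of the padded dimension — but since $\poly(d+1) = \poly(d)$ and $d+1 = \Theta(d)$ this changes nothing substantive.
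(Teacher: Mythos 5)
Your argument is correct and takes essentially the same route as the paper's proof: invoke Theorem~\ref{thm:main_quantile} on $\Ab$ with accuracy $\eps/9$ to get a uniform $(1\pm\eps/9)$ approximation of $\rho_\tau(Ax-b)$ by $\rho_\tau(\wt{A}x-\wt{b})$, then chain this with the hypothesis \eqref{equ:approx_sketched} (plugging $Rx^{\mathrm{opt}}$ into the sketched objective to bound its minimum) and absorb the resulting $\frac{(1+\eps/3)(1+\eps/9)}{1-\eps/9}$ factor into $1+\eps$ for small $\eps$. The only cosmetic differences are that you explicitly note $R$ is invertible (implicit in the paper) and phrase the intermediate step via $\min_y$ rather than directly substituting the candidate $Rx^{\mathrm{opt}}$, but these are the same argument.
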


Thus, we focus on obtaining an approximate solution to $\min_{x \in \mathbb{R}^d} \rho_\tau(\wt{A}R^{-1}x - \wt{b})$ hereinafter.
Our general strategy is to apply the accelerated SGD to $\min_{x \in \mathbb{R}^d} \rho_\tau(\wt{A}R^{-1}x - \wt{b})$. 
To this end, we formulate $\min_{x \in \mathbb{R}^d} \rho_\tau(\wt{A}R^{-1}x - \wt{b})$ as a finite-sum problem.
The following lemma shows that after applying Lewis weights sampling, with constant probability, each summand in the finite-sum problem has a bounded Lipschitz constant (see Definition~\ref{def:lip} in the supplementary material for the definition), and $x_0 = (\wt{A}R^{-1})^T b$ is close to the optimal solution.
\begin{lemma}\label{lem:conditions}
Let $x_0 = (\wt{A}R^{-1})^T b$ and $\wt{x^{\mathrm{opt}}} = \argmin_{x \in \mathbb{R}^d} \rho_\tau(\wt{A}R^{-1}x - \wt{b})$.
With probability at least $0.9$, 
\begin{enumerate}[label=(\roman*)]
\item the function $f(x) = \rho_\tau(\wt{A}R^{-1}x - \wt{b})$ can be written as $f(x) = \frac{1}{N} \sum_{i = 1}^N f_i(x)$, where each $f_i(\cdot)$ is $O(\sqrt{Nd})$-Lipschitz; and
\item it holds that
 \[ \|x_0 - \wt{x^{\mathrm{opt}}} \|_2 \le \sqrt{d/(N\tau^2)} \cdot \rho_\tau(\wt{A}R^{-1}\wt{x^{\mathrm{opt}}} - \wt{b}) \]
 and 
 \[ \rho_\tau(\wt{A}R^{-1}x_0 - \wt{b}) \le \sqrt{N/\tau^2}\cdot \rho_\tau(\wt{A}R^{-1}\wt{x^{\mathrm{opt}}} - \wt{b}). \]
 \end{enumerate}
\end{lemma}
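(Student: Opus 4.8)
The plan is to use Theorem~\ref{thm:main_quantile} together with basic properties of the QR-decomposition and Lewis weights. First I would set up the finite-sum decomposition for part (i). Since $\widetilde{A}$ has $N$ rows, each row $(\widetilde{A}R^{-1})_i$ gives a summand $f_i(x) = N \cdot \rho_\tau(\langle (\widetilde{A}R^{-1})_i, x\rangle - \widetilde{b}_i)$, so that $f(x) = \frac{1}{N}\sum_i f_i(x)$. Because $\rho_\tau$ is $1$-Lipschitz (its slopes are $1$ and $-\tau$, both bounded by $1$ in absolute value since $\tau \le 1$), the Lipschitz constant of $f_i$ is $N \cdot \|(\widetilde{A}R^{-1})_i\|_2$. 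The key observation is that $\widetilde{A}R^{-1} = Q$ is the matrix from the QR-decomposition $\widetilde{A} = QR$, so $Q^\top Q = I_d$, which means $\|Q_i\|_2^2$ is exactly the leverage score of the $i$-th row of $\widetilde{A}$, and $\sum_i \|Q_i\|_2^2 = d$. To bound each individual $\|Q_i\|_2$, I would argue that the rescaling by $1/p_i$ in the sampling process, combined with $p_i \gtrsim w_i \log N$ and the fact that sampled rows have Lewis-weight-proportional mass, forces the leverage scores of $\widetilde{A}$ to be $O(1)$ with constant probability (each sampled-and-rescaled row $A_{i}/p_i$ contributes a leverage score in $\widetilde{A}$ that is comparable to $w_i/p_i \lesssim 1/\log N \le 1$). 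Hence $\|Q_i\|_2 = O(1)$, giving $f_i$ that is $O(N) = O(\sqrt{Nd})$-Lipschitz — wait, this needs $N = O(d)$, which is false; more carefully, $\|Q_i\|_2 = O(1)$ gives Lipschitz constant $O(N)$, but we want $O(\sqrt{Nd})$, so I would instead bound $\|Q_i\|_2 = O(\sqrt{d/N})$ using that the $N$ leverage scores are roughly equal (each $\approx d/N$) because the sampling makes the rescaled rows nearly ``isotropic''; this is the standard consequence of Lewis-weight sampling producing a well-conditioned sample.

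For part (ii), I would work entirely in the sketched space, writing $g(x) := \rho_\tau(\widetilde{A}R^{-1}x - \widetilde{b}) = \rho_\tau(Qx - \widetilde{b})$, with minimizer $\widetilde{x^{\mathrm{opt}}}$ and the specific point $x_0 = Q^\top b$. The first inequality, $\|x_0 - \widetilde{x^{\mathrm{opt}}}\|_2 \le \sqrt{d/(N\tau^2)} \cdot g(\widetilde{x^{\mathrm{opt}}})$, I would prove by relating the Euclidean distance to the loss value: since $\rho_\tau(t) \ge \tau |t|$ pointwise, we get $g(y) \ge \tau \|Qy - \widetilde{b}\|_1 \ge \tau \|Q(y - z)\|_1$ type bounds, and using $\|Q v\|_1 \ge \|Qv\|_2 = \|v\|_2$ (as $Q$ has orthonormal columns, actually $\|Qv\|_2 = \|v\|_2$, and $\|Qv\|_1 \ge \|Qv\|_2$ only when... no, $\|\cdot\|_1 \ge \|\cdot\|_2$ always), I can control $\|x_0 - \widetilde{x^{\mathrm{opt}}}\|_2$. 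The cleanest route: $x_0 = Q^\top b$ is the least-squares-type projection, and $\|x_0 - \widetilde{x^{\mathrm{opt}}}\|_2 = \|Q(x_0 - \widetilde{x^{\mathrm{opt}}})\|_2 \le \|Q\widetilde{x^{\mathrm{opt}}} - \widetilde{b}\|_2 + \|Qx_0 - \widetilde{b}\|_2$, and then bound each $\ell_2$ term by the corresponding $\rho_\tau$ value divided by $\tau$, picking up a dimension factor $\sqrt{d}$ or $\sqrt{N}$ when passing between $\ell_1/\ell_2$ norms on an $N$-dimensional residual supported effectively on a $d$-dimensional subspace. The second inequality, $g(x_0) \le \sqrt{N/\tau^2}\cdot g(\widetilde{x^{\mathrm{opt}}})$, follows similarly: $g(x_0) \le \|Qx_0 - \widetilde{b}\|_1 \le \sqrt{N}\|Qx_0 - \widetilde{b}\|_2 \le \sqrt{N}\|Q\widetilde{x^{\mathrm{opt}}} - \widetilde{b}\|_2 \le \sqrt{N}/\tau \cdot g(\widetilde{x^{\mathrm{opt}}})$, where the middle step uses that $x_0 = Q^\top b$ minimizes $\|Qx - \widetilde{b}\|_2$ over all $x$.

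The main obstacle I anticipate is part (i): establishing that each $f_i$ has Lipschitz constant $O(\sqrt{Nd})$ rather than the trivial $O(N \cdot \max_i \|Q_i\|_2)$. This requires showing that Lewis-weight sampling not only approximately preserves $\rho_\tau(Ax)$ (which is Theorem~\ref{thm:main_quantile}) but also produces a sample $\widetilde{A}$ whose rows, after the $1/p_i$ rescaling, have uniformly small $\ell_2$ leverage scores — of order $d/N$ rather than $O(1)$. One clean way is to note that the sampling probabilities satisfy $p_i/N \gtrsim (aB/\varepsilon)^2 w_i (\log N)/N$ with $a = 1, B = 1/\tau$ (so $p_i/N \gtrsim w_i \log N/(\tau^2 N)$), so a row selected at index $i$ is rescaled to $A_i/p_i$ with $\|A_i/p_i\|$ controlled; then the leverage score of this rescaled row inside $\widetilde{A}$ is at most its ``self-contribution,'' which is $\lesssim w_i \cdot N/(p_i N) \cdot (1/N) \lesssim 1/(N\log N) \cdot$ (something). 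Getting the constants and the $\sqrt{d}$ versus no-$\sqrt{d}$ exactly right — i.e., that $\max_i \|Q_i\|_2^2 = O(d/N)$ holds with constant probability — is the delicate quantitative heart of the argument, and I would handle it by invoking the spectral approximation guarantee that comes bundled with Lewis-weight (equivalently $\ell_1$-leverage-score) sampling: with high probability $\widetilde{A}^\top \widetilde{A} \preceq O(1) \cdot A^\top A$ in the relevant sense, which after the QR step translates to a uniform bound on the sampled leverage scores. The remaining pieces (the $\ell_1$/$\ell_2$ norm conversions and the optimality of $Q^\top b$ in $\ell_2$) are routine and I would dispatch them quickly.
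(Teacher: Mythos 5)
Your high-level plan matches the paper for much of the argument — using the $1$-Lipschitzness of $\rho_\tau$, the identity $\widetilde{A}R^{-1}=Q$, and the $\ell_2$-optimality of $x_0 = Q^\top \widetilde{b}$ are all correct starting points, and your proof of the second inequality in (ii) (via $\rho_\tau(y)\le\|y\|_1\le\sqrt{N}\|y\|_2$ and the optimality of $x_0$) is exactly the paper's. But two substantive steps in your proposal are either wrong or missing.

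\textbf{Part (i): the uniform leverage-score bound.} You correctly identify that the crux is showing $\max_i\|Q_i\|_2^2 = \tau_i(\widetilde{A}) = O(d/N)$, and you candidly flag this as ``the delicate quantitative heart of the argument.'' However, your proposed justification — that a spectral approximation $\widetilde{A}^\top\widetilde{A}\preceq O(1)\cdot A^\top A$ ``translates to a uniform bound on the sampled leverage scores'' — does not work. A spectral inequality between $\widetilde{A}^\top\widetilde{A}$ and $A^\top A$ bounds $\tau_i(\widetilde{A})$ in terms of $\tau_j(A)/p_j^2$, which involves the $\ell_2$ leverage scores of $A$, not the $\ell_1$ Lewis weights you sampled by; these two quantities are not comparable in general. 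Moreover, $\ell_1$ Lewis-weight sampling yields an $\ell_1$-subspace guarantee, not an $\ell_2$ spectral approximation, so the premise $\widetilde{A}^\top\widetilde{A}\approx A^\top A$ is itself not ``bundled'' with the sampling. The paper instead cites a dedicated result (Lemma~30 of Durfee et al.~\cite{durfee2018ell}) that directly shows $\ell_1$ Lewis-weight sampling produces a sample whose rows all have $\ell_2$ leverage score $O(d/N)$, then passes from $[\widetilde{A},\widetilde{b}]$ to $\widetilde{A}$ via the monotonicity of leverage scores (Lemma~2 of~\cite{cohen2015uniform}). Without that external lemma, part (i) does not go through.

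\textbf{Part (ii): the first inequality.} Your triangle-inequality route
\(
\|x_0-\widetilde{x^{\mathrm{opt}}}\|_2 = \|Q(x_0-\widetilde{x^{\mathrm{opt}}})\|_2 \le \|Qx_0-\widetilde{b}\|_2 + \|Q\widetilde{x^{\mathrm{opt}}}-\widetilde{b}\|_2
\)
followed by $\|\cdot\|_2\le\|\cdot\|_1\le(1/\tau)\rho_\tau(\cdot)$ only yields $\|x_0-\widetilde{x^{\mathrm{opt}}}\|_2\le (2/\tau)\,\rho_\tau(Q\widetilde{x^{\mathrm{opt}}}-\widetilde{b})$, which is weaker than the needed $\sqrt{d/(N\tau^2)}$ by a factor of $\sqrt{N/d}$. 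Your suggestion of ``picking up a dimension factor $\sqrt{d}$ or $\sqrt{N}$'' via the residual being ``supported effectively on a $d$-dimensional subspace'' is not right: the residual $Qx-\widetilde{b}$ lives in $\RR^N$ and is not confined to the column span of $Q$. The paper's argument instead uses $x_0-\widetilde{x^{\mathrm{opt}}}=Q^\top(\widetilde{b}-Q\widetilde{x^{\mathrm{opt}}})$ together with the H\"older-type bound $\|Q^\top y\|_2\le (1/\tau)\rho_\tau(y)\max_i\|Q_i\|_2$ (Claim~\ref{claim:matrix_quantile}), and it is precisely here that the $\max_i\|Q_i\|_2=O(\sqrt{d/N})$ estimate from part (i) is reused to pick up the extra $\sqrt{d/N}$. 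So the two parts of the lemma are linked in a way your proposal does not exploit, and your version of (ii) loses the dependence on $d/N$ that is needed to get the claimed $\widetilde{O}(d^{2.5}/(\eps^2\tau^2))$ running time in Lemma~\ref{lem:katyusha}.
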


Now we apply Katyusha, an accelerated SGD algorithm, from~\cite{allen2017katyusha}. The correctness and running time is summarized in the following lemma.

\begin{lemma}\label{lem:katyusha}
In Algorithm~\ref{fig:alg}, Step~4 
obtains $\overline{x}$ which satisfies that
\[
\rho_\tau(\wt{A}R^{-1}\overline{x} - \wt{b}) \le (1 + \varepsilon / 3)\min_{x \in \mathbb{R}^d} \rho_\tau(\wt{A}R^{-1}x - \wt{b})
\]
with probability at least $0.8$, and Step~4 
runs in $\wt{O}(d^{2.5}/(\tau^2 \varepsilon^2))$ time.
\end{lemma}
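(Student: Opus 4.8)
The plan is to condition on the probability-$0.9$ event of Lemma~\ref{lem:conditions} and then run the Katyusha variant of Corollary~3.7 of~\cite{allen2017katyusha} on the finite-sum representation it supplies, choosing the iteration budget so that the output is a $(1+\eps/3)$-approximate minimizer of the sketched objective. Abbreviate $Q = \wt{A}R^{-1}$, $f(x) = \rho_\tau(Qx-\wt{b}) = \frac1N\sum_{i=1}^N f_i(x)$, and $\mathrm{OPT} = \min_{x\in\RR^d} f(x) = f(\wt{x^{\mathrm{opt}}})$. On the event of Lemma~\ref{lem:conditions} we may use three facts: every $f_i$ is convex and $G$-Lipschitz with $G = O(\sqrt{Nd})$; the warm start $x_0$ satisfies $D := \|x_0-\wt{x^{\mathrm{opt}}}\|_2 \le \sqrt{d/(N\tau^2)}\cdot\mathrm{OPT}$; and $f(x_0) \le \sqrt{N/\tau^2}\cdot\mathrm{OPT}$.

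First I would remove the non-smoothness of $\rho_\tau$ by replacing each scalar term with its Moreau envelope of width $\mu$, obtaining a surrogate $f^\mu = \frac1N\sum_i f_i^\mu$ with $0 \le f - f^\mu \le N\mu/2$ and with each $f_i^\mu$ being $O(d/\mu)$-smooth --- here the Lipschitz bound from Lemma~\ref{lem:conditions}(i) is exactly what controls this smoothness. Then I would apply Katyusha's accelerated rate for smooth convex finite sums (Corollary~3.7 of~\cite{allen2017katyusha}) to $f^\mu$ started at $x_0$: it returns $\overline x$ with $\E[f^\mu(\overline x)] - \min f^\mu \le \delta$ after $T = \wt{O}(N + \sqrt{N\,L\,D^2/\delta})$ individual-gradient evaluations, where $L = O(d/\mu)$ and the $\wt{O}$ absorbs a logarithm of $(f(x_0)-\mathrm{OPT})/\delta$, which is polylogarithmic precisely by the warm-start bound $f(x_0)\le\sqrt{N/\tau^2}\cdot\mathrm{OPT}$.

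Now I would balance the parameters. Choosing $\mu = \Theta(\eps\cdot\mathrm{OPT}/N)$ (so $N\mu/2 = \Theta(\eps\cdot\mathrm{OPT})$) and $\delta = \Theta(\eps\cdot\mathrm{OPT})$ makes $f(\overline x) - \mathrm{OPT} \le f^\mu(\overline x) - \min f^\mu + N\mu/2 = O(\eps\cdot\mathrm{OPT})$ in expectation; moreover $L = O(Nd/(\eps\cdot\mathrm{OPT}))$, so substituting $D \le \sqrt{d/(N\tau^2)}\cdot\mathrm{OPT}$ gives $\sqrt{N\,L\,D^2/\delta} = O(\sqrt N\,d/(\tau\eps))$. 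Crucially $\mathrm{OPT}$ cancels, so the budget $T = \wt{O}(N + \sqrt N\,d/(\tau\eps))$ is fixed without knowing $\mathrm{OPT}$; with $N = \wt{\Theta}(d/(\eps^2\tau^2))$ this is $T = \wt{O}(d^{1.5}/(\eps^2\tau^2))$, i.e.\ $\wt{O}(\sqrt d)$ epochs of length $N$. For the running time, each stochastic gradient $\nabla f_i^\mu$ costs $O(d)$ (an inner product $\langle Q_i, x\rangle$ plus scaling $Q_i$), every length-$N$ epoch additionally computes one exact gradient in $O(Nd)$ time, and the remaining per-step Katyusha bookkeeping is $O(d)$, so Step~4 runs in $O(Td + (T/N)\cdot Nd) = O(Td) = \wt{O}(d^{2.5}/(\eps^2\tau^2))$ time. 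Finally, fixing the constants so that $\E[f(\overline x)]-\mathrm{OPT} \le (\eps/30)\mathrm{OPT}$ and applying Markov's inequality gives $\rho_\tau(Q\overline x - \wt b) \le (1+\eps/3)\min_{x}\rho_\tau(Qx-\wt b)$ with probability at least $9/10$ over Katyusha's randomness; a union bound with the event of Lemma~\ref{lem:conditions} yields overall success probability at least $0.8$.

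I expect the main obstacle to be the non-smoothness of $\rho_\tau$, which prevents a black-box application of any accelerated finite-sum method and forces the smoothing detour above; the delicate point is that the smoothing width $\mu$, the accuracy $\delta$, and hence $T$ must be tuned together so that the additive error comes out as $O(\eps\cdot\mathrm{OPT})$ \emph{and} the running time stays $\wt{O}(d^{2.5}/(\eps^2\tau^2))$. What makes this possible is the pair of estimates in Lemma~\ref{lem:conditions}(ii): the distance bound forces the target accuracy $\delta$ to scale with $\mathrm{OPT}$, so $T$ can be chosen a priori, while the function-value bound keeps the warm-start logarithm polylogarithmic.
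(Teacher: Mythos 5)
Your proposal reaches the correct bound and is internally consistent, but it takes a detour the paper avoids. The paper's Theorem~\ref{thm:katyusha} (Corollary~3.7 of~\cite{allen2017katyusha}) is stated \emph{directly} for finite sums of convex $\sqrt{G}$-Lipschitz (possibly non-smooth) functions, so no manual smoothing is needed: the paper plugs in $\sqrt{G}=O(\sqrt{Nd})$ from Lemma~\ref{lem:conditions}(i), sets the additive target to $(\varepsilon/30)\cdot\mathrm{OPT}$, uses the distance bound for the $\sqrt{nG}\,\|x_0-x^*\|/\varepsilon$ term and the function-value bound for the logarithm, and gets $T=\wt{O}(d^{1.5}/(\tau^2\varepsilon^2))$ iterations at $O(d)$ each, then applies Markov. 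Your Moreau-envelope reduction (scalar smoothing of $\rho_\tau$, width $\mu=\Theta(\eps\cdot\mathrm{OPT}/N)$, smoothness $O(d/\mu)$, balancing so that $\mathrm{OPT}$ cancels) is essentially a re-derivation of what that corollary already packages: you are reproducing the internal smoothing mechanism rather than invoking the black box. Both routes use exactly the same three ingredients from Lemma~\ref{lem:conditions} --- the Lipschitz/row-norm bound, the distance bound, and the warm-start bound --- and both land on $T=\wt{O}(d^{1.5}/(\eps^2\tau^2))$ and total time $\wt{O}(d^{2.5}/(\eps^2\tau^2))$. The only thing to be careful about in your version is stating clearly that you smooth the scalar $\rho_\tau$ (not the vector function $f_i$ itself), since that is what makes the gap $N\mu/2$ and the smoothness $O(d/\mu)$ come out as you wrote; otherwise a reader might mis-parse the constants. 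Your route buys nothing here over the direct citation, but it is a correct and more self-contained exposition of why the iteration count is what it is.
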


Now we give the efficiency and correctness of Algorithm~\ref{fig:alg}.
\begin{theorem}\label{thm:reg}
Algorithm~\ref{fig:alg} runs in time $\wt{O}(\nnz(A) + d^{2.5}/(\varepsilon^2 \tau^2))$ and with probability at least $0.7$ returns a vector $x^* \in \mathbb{R}^d$ such that 
\[\rho_\tau(Ax^*-b) \le (1 + \varepsilon) \cdot \min_{x \in \mathbb{R}^d}\rho_\tau(Ax-b).\]
\end{theorem}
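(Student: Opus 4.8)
The plan is to assemble Theorem~\ref{thm:reg} from Lemmas~\ref{lem:sketch}--\ref{lem:katyusha}, which together give correctness, and from a step-by-step accounting of Algorithm~\ref{fig:alg}, which gives the running time. No new ideas are needed; the work is in composing the several probabilistic guarantees and in picking an efficient implementation of the orthogonalization step.

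\emph{Correctness.} I would argue along the chain: Step~4 succeeds $\Rightarrow$ \eqref{equ:approx_sketched} holds $\Rightarrow$ $x^\ast$ is near-optimal for the original problem. By Lemma~\ref{lem:katyusha}, with probability at least $0.8$ the vector $\overline{x}$ produced in Step~4 satisfies the near-optimality condition \eqref{equ:approx_sketched} for the sketched objective $x\mapsto\rho_\tau(\wt{A}R^{-1}x-\wt{b})$; this $0.8$ already absorbs the ``good conditions'' event of Lemma~\ref{lem:conditions} (bounded per-summand Lipschitz constants and a good starting point $x_0=(\wt{A}R^{-1})^\top b$) and Katyusha's own randomness. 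Conditioned on \eqref{equ:approx_sketched}, Lemma~\ref{lem:sketch} --- itself a consequence of Theorem~\ref{thm:main_quantile}, through which the sample size $N=\wt{O}(d/(\eps^2\tau^2))$ of Step~2 makes $\sAb$ a $(1\pm\eps)$-accurate $\rho_\tau$-embedding --- guarantees that with probability at least $0.9$ the output $x^\ast=R^{-1}\overline{x}$ satisfies $\rho_\tau(Ax^\ast-b)\le(1+\eps)\min_x\rho_\tau(Ax-b)$, provided $\eps$ is below a suitable absolute constant. A union bound over the failure events of these two lemmas (the $1/\poly(d)$ failure probability of the approximate Lewis-weight computation in Step~1 via Theorem~\ref{thm:calc} being negligible) leaves overall success probability at least $1-0.2-0.1=0.7$.

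\emph{Running time.} I would bound the five steps in turn. Step~1 computes $2$-approximate $\ell_1$ Lewis weights of $\Ab\in\RR^{n\times(d+1)}$, which takes $\wt{O}(\nnz(A)+d^\omega)$ time by Theorem~\ref{thm:calc} since $p=1<4$. Step~2 draws $N=\wt{O}(d/(\eps^2\tau^2))$ rows according to these weights (build an alias table from the weights, then sample and extract the rows), in $\wt{O}(\nnz(A)+Nd)$ time. For Step~3 I would avoid a textbook $O(Nd^2)$ Householder QR, which would already be $\wt{O}(d^3/(\eps^2\tau^2))$: instead, forming $\wt{A}^\top\wt{A}$ in $N/d$ blocks of $d$ rows costs $O((N/d)\cdot d^\omega)=O(Nd^{\omega-1})$, and a subsequent Cholesky factorization produces $R$ (with $\wt{A}R^{-1}$ of constant condition number) in $O(d^\omega)$ more time; an approximate $R$ of this kind is all that Lemmas~\ref{lem:conditions} and~\ref{lem:katyusha} require, and one could equally precondition with a sparse subspace embedding of $\wt{A}$. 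Step~4 runs in $\wt{O}(d^{2.5}/(\tau^2\eps^2))$ by Lemma~\ref{lem:katyusha}, and Step~5 applies $R^{-1}$ in $O(d^2)$. Summing and using $\omega\le 2.373<2.5$ together with $\eps,\tau\le 1$ to absorb the $d^\omega$, $Nd=\wt{O}(d^2/(\eps^2\tau^2))$, $Nd^{\omega-1}=\wt{O}(d^\omega/(\eps^2\tau^2))$ and $d^2$ terms into $d^{2.5}/(\eps^2\tau^2)$, the total is $\wt{O}(\nnz(A)+d^{2.5}/(\eps^2\tau^2))$.

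\emph{Main obstacle.} Conceptually this theorem is bookkeeping, so the only delicate point is Step~3: the target exponent $2.5$ rules out a naive dense QR, so one must use fast matrix multiplication (or a sketch-based preconditioner) and verify that the resulting approximate $R$ still yields the constant-condition-number factorization that Lemmas~\ref{lem:conditions}--\ref{lem:katyusha} implicitly assume. Composing the probabilities under a single union bound, while routine, also deserves a moment's care to confirm that Lemma~\ref{lem:katyusha}'s $0.8$ already internalizes Lemma~\ref{lem:conditions}, so that only two failure events need to be subtracted.
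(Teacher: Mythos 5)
Your proposal is correct and follows the paper's own argument almost verbatim: correctness is obtained by chaining Lemma~\ref{lem:katyusha} (whose $0.8$ already conditions on and absorbs Lemma~\ref{lem:conditions}) into Lemma~\ref{lem:sketch} and taking a union bound ($1-0.2-0.1=0.7$), while the running time is the sum of the Lewis-weight computation from Theorem~\ref{thm:calc}, the orthogonalization, and the Katyusha step from Lemma~\ref{lem:katyusha}. The only place you diverge from the paper is Step~3, where the paper simply cites Lemma~33 of~\cite{durfee2018ell} for a $\wt{O}(d^\omega/(\eps^2\tau^2))$-time QR, whereas you rederive the same bound from scratch via a blocked $\widetilde{A}^\top\widetilde{A}$ and Cholesky (which in fact yields an exactly orthonormal $\widetilde{A}R^{-1}$, so no ``approximate $R$'' caveat is even needed) --- an equivalent route to the same budget, not a different proof.
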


We note that the failure probability of Algorithm~\ref{fig:alg} can be reduced to an arbitrarily small constant by independent repetitions and taking the best solution found among all repetitions.


\section{Balanced Graph Sparsification}\label{sec:balanced_graph}

In this section, we demonstrate how our sampling results in Section~\ref{sec:quantile} can be applied to graph sparsification.
In particular, we prove that our sampling scheme can be used to find a sparsifier of {\em balanced} directed graphs, a notion defined in~\cite{EMPS16} as follows. 
\begin{definition}\label{def:balanced_graphs}
For a strongly connected directed graph $G = (V, E, w)$ and $\alpha \ge 1$, we say $G$ is \emph{$\alpha$-balanced} if for every subset $S\subset V$, we have
$
w(S, V \setminus S) \le \alpha \cdot w(V \setminus S, S)
$.
\end{definition}
Recall that $w(S, T)$ is the capacity of the directed cut between $S$ and $T$.
Therefore, the preceding definition essentially says that a graph is balanced if the outgoing capacity from any set of vertices $S$ is roughly the same (up to a factor of $\alpha$) as the ingoing capacity to $S$.
The parameter $\alpha$ measures how balanced the graph is.
In particular, any undirected graph is $1$-balanced.

Our plan is to apply our sampling results on the edge-vertex incidence matrix of the balanced graph.
In order to do that, we need first to relate balanced graphs to the quantile loss function, and show that the incidence matrix indeed satisfies the conditions required by our sampling results.
In particular, we shall need the following lemma regarding balanced directed graphs.
\begin{lemma}\label{lem:balanced_graph}
For a given strongly connected directed graph $G = (V, E, w)$ with $n$ vertices and $m$ edges, let $B \in \mathbb{R}^{m \times n}$ be the edge-vertex incidence matrix of $G$.
If $G$ is $\alpha$-balanced, then it holds for all $x \in \mathbb{R}^n$ that
$\|Bx\|_1 \leq (1+\alpha)\rho_0(Bx)$.
\end{lemma}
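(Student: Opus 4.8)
The plan is to reduce the claimed inequality to a pointwise comparison of directed cut capacities via a layer-cake (coarea) argument. First I would unpack the definitions: for an edge $e = (u,v)$ we have $(Bx)_e = w(e)(x_u - x_v)$, so that $\|Bx\|_1 = \sum_{(u,v) \in E} w(e)\,|x_u - x_v|$ and, since $\rho_0(t) = \max\{t,0\}$, $\rho_0(Bx) = \sum_{(u,v) \in E} w(e)\max\{x_u - x_v, 0\}$. Using the identity $|t| = \max\{t,0\} + \max\{-t,0\}$ on each coordinate, I would rewrite
\[
\|Bx\|_1 = \rho_0(Bx) + R(x), \qquad R(x) := \sum_{(u,v) \in E} w(e)\max\{x_v - x_u, 0\},
\]
so that it suffices to prove $R(x) \le \alpha\,\rho_0(Bx)$.

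Next I would introduce the threshold sets $S_t := \{v \in V : x_v > t\}$ for $t \in \RR$ and apply the layer-cake identity $\max\{a - c, 0\} = \int_{\RR} \mathbf{1}[c \le t < a]\,dt$ to each summand. In $\rho_0(Bx)$, the term of an edge $(u,v)$ contributes at level $t$ exactly when $u \in S_t$ and $v \notin S_t$, i.e.\ when the edge crosses the cut from $S_t$ to $V \setminus S_t$; summing over edges and using linearity of the integral gives
\[
\rho_0(Bx) = \int_{\RR} w(S_t, V \setminus S_t)\,dt .
\]
The same computation for $R(x)$ — now the edge $(u,v)$ contributes when $v \in S_t$ and $u \notin S_t$ — yields $R(x) = \int_{\RR} w(V \setminus S_t, S_t)\,dt$.

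Finally I would invoke the $\alpha$-balanced hypothesis: applying Definition~\ref{def:balanced_graphs} with the set $S = V \setminus S_t$ gives $w(V \setminus S_t, S_t) \le \alpha\, w(S_t, V \setminus S_t)$ for every $t$ (both sides being zero when $S_t \in \{\emptyset, V\}$, so those values of $t$ are harmless). Integrating this pointwise bound over $t$ gives $R(x) \le \alpha\,\rho_0(Bx)$, and combining with the decomposition of $\|Bx\|_1$ above yields $\|Bx\|_1 \le (1+\alpha)\rho_0(Bx)$. This argument is essentially routine; the only point requiring care is getting the orientation right when matching $S_t$-crossings to the capacities $w(S_t, V \setminus S_t)$ versus $w(V \setminus S_t, S_t)$, and correspondingly applying the balance inequality to $V \setminus S_t$ rather than to $S_t$. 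Alternatively, one can avoid integrals entirely by sorting the distinct coordinate values $\{x_v\}_{v \in V}$ and summing cut capacities over the finitely many resulting level sets, which produces the same two identities.
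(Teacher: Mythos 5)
Your proof is correct, and it takes a genuinely different route from the paper's. The paper fixes the sign pattern of $Bx$ (by sorting the coordinates of $x$), reduces the inequality to minimizing the fractional-linear objective $\langle u,x\rangle/\langle v,x\rangle$ over the polytope $P=\{x: a=x_1\le\cdots\le x_n=b\}$, argues via a Charnes--Cooper-style reduction that the minimum is attained at a vertex of $P$, identifies vertices of $P$ with indicator vectors of threshold sets $S$, and only then invokes the $\alpha$-balanced condition on the single extremal cut. Your argument instead splits $\|Bx\|_1=\rho_0(Bx)+R(x)$, applies the layer-cake (coarea) identity to write both $\rho_0(Bx)$ and $R(x)$ as integrals over threshold cuts $w(S_t,V\setminus S_t)$ and $w(V\setminus S_t,S_t)$, and applies the balance condition pointwise under the integral. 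Both proofs ultimately reduce to the same family of threshold cuts, but yours reaches them directly and avoids the polytope-optimization machinery along with its side conditions (strong connectivity is used by the paper to guarantee $\rho_1(Bx)\neq 0$ on $P$, and one must argue the minimizer of the linearized objective sits at a vertex); in your version strong connectivity plays no explicit role, which is slightly cleaner. The one point of care, which you flag, is that the balance inequality must be applied to $V\setminus S_t$ rather than $S_t$ to bound the reverse-cut term; since Definition~\ref{def:balanced_graphs} is quantified over all subsets, this is immediate.
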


Now we are ready to apply our sampling schemes to balanced graphs.
As another application of Theorem~\ref{thm:quantile-sample}, we give a randomized construction of a sparsifier of $O(n(\alpha/\eps)^2\log (\alpha n/\eps)))$ edges for $\alpha$-balanced graphs.
The algorithm is given in Algorithm~\ref{fig:spar}, and we prove its correctness and running time in Corollary~\ref{cor:spar}.

\begin{algorithm}
\begin{algorithmic}[1]
\STATE Obtain approximate $\ell_1$ Lewis weights for each row of the edge-vertex incidence matrix $B$ of the given $\alpha$-balanced graph $G$ using Theorem~\ref{thm:calc_graph}.
\STATE Sample $m' = O(n(\alpha/\eps)^2\log (\alpha n/\eps)))$ rows of $B$ according to approximate $\ell_1$ Lewis weights and obtain $B' \in \mathbb{R}^{m' \times n}$.
\STATE Return the directed graph $G'$ that corresponds to the edge-vertex incidence matrix $B'$. 
\end{algorithmic}
\caption{Algorithm for sparsifying balanced directed graphs.}
\label{fig:spar}
\end{algorithm}


\begin{corollary}\label{cor:spar}
For a given $\alpha$-balanced strongly connected directed graph $G = (V, E, w)$ with $n$ vertices and $m$ edges, let $B \in \mathbb{R}^{m \times n}$ be the edge-vertex incidence matrix of $G$.
Algorithm~\ref{fig:spar} outputs $G' = (V, E', w')$ with $E' \subseteq E$ and $|E'|  = m' = O(n(\alpha/\eps)^2\log (\alpha n/\eps)))$ in $\wt{O}(m)$ time.
Let $B' \in \mathbb{R}^{m' \times n}$ be the edge-vertex incidence matrix of $G'$.
With probability at least $1-1/\poly(n)$, for all $x \in \mathbb{R}^d$,
\[(1-\eps) \rho_0(Bx) \leq \rho_0(B'x)  \leq (1+\eps) \rho_0(Bx).\]
\end{corollary}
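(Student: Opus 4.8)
The plan is to apply Theorem~\ref{thm:quantile-sample} directly, taking the matrix there to be the edge-vertex incidence matrix $B$ and the function there to be $\phi = \rho_0$, and then to reinterpret the resulting sampled matrix as a reweighted subgraph. First I would check that the hypotheses of Theorem~\ref{thm:quantile-sample} are met. Since $\rho_0(t) = \tfrac12 |t| + \tfrac12 t$, the function $\rho_0$ has the required form $\phi(t) = a|t| + bt$ with $a = b = \tfrac12$ (so $0 \le b \le a$). By Lemma~\ref{lem:balanced_graph}, $\alpha$-balancedness of $G$ gives $\|Bx\|_1 \le (1+\alpha)\rho_0(Bx)$ for all $x$, so the norm-domination hypothesis holds with the parameter (which I will call $\beta$ to avoid clashing with the matrix name) equal to $\beta = 1+\alpha$; note also $a\beta = (1+\alpha)/2 = \Theta(\alpha)$. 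Finally, the incidence matrix is $m \times n$, so by Lemma~\ref{lem:sum_lewis} its $\ell_1$ Lewis weights $\{w_i\}$ satisfy $\sum_i w_i \le n$.

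Next I would fix the sampling probabilities and determine $N$. Using Theorem~\ref{thm:calc_graph} with $p = 1 < 4$, compute approximate Lewis weights $\{\overline w_i\}$ with $w_i \le \overline w_i \le 2 w_i$ (hence $\sum_i \overline w_i \le 2n$) in $\widetilde O(m)$ time, and set $p_i = C(a\beta/\eps)^2 \overline w_i \log N$ and $N = \sum_i p_i$. Then $p_i \ge C(a\beta/\eps)^2 w_i \log N$ as Theorem~\ref{thm:quantile-sample} requires, and $N = C(a\beta/\eps)^2 \log N \cdot \sum_i \overline w_i = O(n(\alpha/\eps)^2 \log N)$; solving this fixed-point relation for $N$ gives $N = O(n(\alpha/\eps)^2 \log(n\alpha/\eps)) = m'$. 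Generating $B'$ as in Theorem~\ref{thm:quantile-sample}, namely choosing each of its $N$ rows independently to be $B_e/p_e$ with probability $p_e/N$, then guarantees, with probability $1 - 1/\poly(n)$, that $(1-\eps)\rho_0(Bx) \le \rho_0(B'x) \le (1+\eps)\rho_0(Bx)$ for all $x \in \RR^n$ simultaneously.

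Finally I would translate back to graph language. The row $B_e/p_e$ for $e = (u,v)$ is precisely the incidence-matrix row of the edge $(u,v)$ carrying weight $w(e)/p_e$, so $B'$ is the incidence matrix of a subgraph of $G$ with multiplicities allowed; merging parallel copies of a sampled edge and adding their rescaled weights does not change $\rho_0(B'x)$ (by positive homogeneity of $\rho_0$) and yields $G' = (V, E', w')$ with $E' \subseteq E$ and $|E'| \le N = m'$. The whole procedure costs $\widetilde O(m)$: $\widetilde O(m)$ for the Lewis weights via Theorem~\ref{thm:calc_graph}, $\widetilde O(m)$ to compute the $p_e$ and draw the $N$ samples, and $O(m)$ for the merge. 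A union bound over the failure of the Lewis-weights routine and of Theorem~\ref{thm:quantile-sample} keeps the overall failure probability at $1/\poly(n)$. I expect the only genuinely delicate point---beyond routine bookkeeping---to be arranging $\sum_i p_i = N$ together with the per-row lower bound while keeping $N$ as small as claimed, which the fixed-point argument above handles, together with relying on Theorem~\ref{thm:calc_graph} (not Theorem~\ref{thm:calc}) so that the running time is $\widetilde O(m)$ rather than $\widetilde O(m + n^\omega)$.
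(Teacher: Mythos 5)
Your proposal is correct and follows essentially the same route as the paper: write $\rho_0(t) = \tfrac12|t| + \tfrac12 t$ so that Theorem~\ref{thm:quantile-sample} applies with $a=b=\tfrac12$, use Lemma~\ref{lem:balanced_graph} to supply the domination constant $1+\alpha$, bound the sample size via $\sum_i w_i \le n$ (Lemma~\ref{lem:sum_lewis}), and get the $\widetilde O(m)$ running time from Theorem~\ref{thm:calc_graph}. The only difference is that you make explicit a few steps the paper leaves implicit (the fixed-point solve for $N$, the use of $2$-approximate Lewis weights, and the merge of duplicated sampled rows into a single reweighted edge), which is fine and does not change the argument.
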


\begin{remark}\label{remark:spar}
The sparsifier $G'$ obtained in Corollary~\ref{cor:spar} is a {\em cut sparsifier}, meaning that it approximately preserves the capacities of all cuts of $G$.
To see this, for any given $S \subseteq V$, let $x_S \in \{0,1\}^n$ be the indicator vector of $S$, i.e.,
$
(x_S)_i = \left\{\begin{smallmatrix}
1 & i \in S\\
0 & i \notin S
\end{smallmatrix}\right.
$,
then $\rho_0(Bx) = w_G(S, V \setminus S)$.
Similarly, $\rho_0(B'x) = w_{G'}(S, V \setminus S)$.
Since for all $x \in \mathbb{R}^d$, $(1-\eps) \rho_0(Bx) \leq \rho_0(B'x)  \leq (1+\eps) \rho_0(Bx)$, it follows that for all $S \subseteq V$, 
$(1-\eps) w_G(S, V \setminus S) \leq w_{G'}(S, V \setminus S) \leq (1+\eps) w_G(S, V \setminus S).$
\end{remark}
%

\section{Row Sampling for Loss Functions Approximated by $\ell_p$ Norms}\label{sec:extra}
In Section~\ref{sec:quantile}, we have shown that $\ell_1$ Lewis weights sampling approximately preserves the quantile loss function $\rho_\tau(Ax)$ for all $x \in \mathbb{R}^d$. A natural question is whether it is possible to preserve other loss functions using $\ell_p$ Lewis weights sampling.
The following theorem shows that it is indeed possible.
However, owing to the lack of homogeneity in the loss function, we have to use a weighted version of the loss function as the estimator. 
The proof is similar to that of Theorem~\ref{thm:quantile-sample} and is provided in Section~\ref{sec:extra_proof} in the supplementary material.

\begin{theorem}\label{thm:generalized-sample}
Let $p\geq 1$. Suppose $\phi : \mathbb{R} \to \mathbb{R}$ satisfies $|\phi(x)-\phi(y)|\leq L\left| |x|^p - |y|^p\right|$ for all $x, y\in \RR$ for some $L > 0$ and $\phi(x)\geq \gamma|x|^p$ for all $x\in\RR$ for some $\gamma > 0$. 

Let matrix $A\in\RR^{n\times d}$ and $\{w_i\}_{i=1}^n$ be its $\ell_p$ Lewis weights. 
For any values $p_1,\ldots,p_n\geq 0$ such that $\sum_{i\in [n]} p_i = N$ and $
p_i\ge f\left(p,d,\frac{\gamma}{L}\eps, \delta\right) w_i$, the following holds.

Let $i_1,\dots,i_N$ be i.i.d.\@ random variables such that $\Pr[i_k = j] = p_j/N$ for all $k \in [N]$. Define $w_i = 1/p_{i_k}$. Construct $\widetilde{A} \in \mathbb{R}^{N \times d}$ so that $\widetilde{A}_k = A_{i_k}$ for each $k \in [N]$.
With probability at least $1-\delta$, it holds for all $x \in \mathbb{R}^d$ simultaneously that
\[
(1-\eps)\phi(Ax) \leq \phi^{w}(\widetilde{A}x) \leq (1+\eps)\phi(Ax),
\]
where $\phi^w(y) = \sum_{i = 1}^N w_i \phi(y_i)$ for $y \in \mathbb{R}^N$.

Bounds on $f(p, d,\eps, \delta)$ are given in the table below, where $C_p$ is a constant depending only on $p$ and $C$ is an absolute constant.

\begin{center}
\begin{tabular}{ccc}
\hline
$p$ & $\delta$ & $f(p,d,\eps, \delta)$ \\
\hline
$p = 1$ 		& $1/\poly(d)$ & $C(1/\eps^2)\log (d/\eps)$ \\
$1 < p < 2$ 	& constant 				& $C(1/\eps^2)\log(d/\eps)\log^2 \log(d/\eps)$ \\
$p > 2$ 		& $1/\poly(d)$ & $ C_p(1/\eps^2)d^{p/2}\log^2 d\log(d/\eps)$  \\
\hline
\end{tabular}
\end{center}
\end{theorem}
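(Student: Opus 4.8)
The plan is to follow the same three-stage argument as in the proof of Theorem~\ref{thm:quantile-sample}: (i) reduce the claim to a high-moment bound on the supremum deviation $F = \sup_{x:\phi(Ax)=1}|\phi^w(\widetilde{A}x)-1|$ and then invoke Markov's inequality; (ii) symmetrize, so that $\E F^\ell$ is controlled by a Rademacher process; and (iii) compare that Rademacher process with the one analyzed by Cohen and Peng for $\ell_p$ Lewis weights sampling, so that their moment and tail bounds---which are exactly what produce the three rows of the table---can be invoked essentially verbatim. The only genuinely new ingredient is the passage in step (iii) from the nonlinear loss $\phi$ to the power $|\cdot|^p$.

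First I would record the structural consequences of the two hypotheses. Taking $y=-x$ in $|\phi(x)-\phi(y)|\le L\big||x|^p-|y|^p\big|$ forces $\phi(-x)=\phi(x)$, so $\phi$ is even and we may write $\phi(t)=g(|t|^p)$ for a function $g:[0,\infty)\to\mathbb{R}$ that is $L$-Lipschitz and satisfies $g(u)\ge\gamma u$; moreover $g(0)=\phi(0)\ge 0$, and we may assume $\phi(0)=0$ (otherwise the additive term $\phi(0)\sum_k\sigma_k/p_{i_k}$ is a function-free Rademacher sum handled separately and contributing only to lower-order terms). Next, $\phi^w$ is an unbiased estimator, $\E\,\phi^w(\widetilde{A}x)=\phi(Ax)$, which is why $F$ is normalized against $\phi(Ax)=1$; the loss of homogeneity of $\phi$ is precisely the reason the \emph{weighted} loss $\phi^w$ (rather than a rescaling of the sampled rows, as in Theorem~\ref{thm:quantile-sample}) must be used.

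For the core comparison, after symmetrization we must bound
\[
\E_{\sigma}\sup_{\phi(Ax)=1}\left|\sum_{k=1}^N\sigma_k\frac{\phi(\langle A_{i_k},x\rangle)}{p_{i_k}}\right|^\ell .
\]
Writing $\frac{\phi(\langle A_{i_k},x\rangle)}{p_{i_k}}=h_k\!\big(|\langle A_{i_k},x\rangle|^p/p_{i_k}\big)$ with $h_k(u)=\frac1{p_{i_k}}g(p_{i_k}u)$, each $h_k$ is $L$-Lipschitz with $h_k(0)=0$, so the Ledoux--Talagrand contraction principle for Rademacher processes strips off $h_k$ at the cost of a factor $O(L)$ and replaces $\phi(\langle A_{i_k},x\rangle)$ by $|\langle A_{i_k},x\rangle|^p$. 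Using $\phi(Ax)\ge\gamma\|Ax\|_p^p$ to enlarge $\{x:\phi(Ax)=1\}\subseteq\{x:\|Ax\|_p^p\le 1/\gamma\}$ and then rescaling $x$ by $\gamma^{1/p}$, we arrive at $O(L/\gamma)^\ell$ times the Rademacher process $\E_\sigma\sup_{\|Ax\|_p^p\le 1}\bigl|\sum_k\sigma_k|\langle A_{i_k},x\rangle|^p/p_{i_k}\bigr|^\ell$ studied in~\cite{cohen2015p}. Conditioning on the high-probability event $\|\widetilde{A}x\|_p^p\le C_1\|Ax\|_p^p$ for all $x$ (whose failure probability is folded into $\delta$), their analysis bounds this moment; choosing $\eps'=\Theta(\gamma\eps/L)$ as their accuracy parameter and $\ell=\Theta(\log(N+d))$ makes it at most $\eps^\ell/\poly(d)$ (resp.\ $\eps^\ell\cdot\mathrm{const}$ for $1<p<2$), and the three sample-size regimes in the table are inherited directly from the corresponding regimes of $\ell_p$ Lewis weights sampling in~\cite{cohen2015p} (nearly linear in $d$ for $p\le 2$, with the extra $\log^2\log$ factor and merely constant success probability for $1<p<2$, and $\sim d^{p/2}$ for $p>2$).

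The step I expect to require the most care is this contraction-and-rescaling passage from $\phi$ to $|\cdot|^p$: one must check that the Lipschitz-in-$|\cdot|^p$ hypothesis survives the $1/p_{i_k}$ reweighting (it does, since $h_k$ stays $L$-Lipschitz), that the lower bound $\phi\ge\gamma|\cdot|^p$ is used with the correct exponent when rescaling the feasible set, and that the ``boundedness'' event must now be phrased in the $\ell_p$ norm rather than $\ell_1$. Everything else---the symmetrization, the moment-to-tail conversion, and the per-$p$ estimates---is a black-box application of the machinery already used for Theorem~\ref{thm:quantile-sample} and in~\cite{cohen2015p}, so no new concentration inequalities are needed beyond those.
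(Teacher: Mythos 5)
Your plan tracks the paper's proof almost line for line: symmetrize, use the Rademacher contraction lemma (Lemma~\ref{thm:rademacher_contraction}) to pass from $\phi$ to $|\cdot|^p$ at a cost of $O(L)^\ell$, use $\phi\ge\gamma|\cdot|^p$ to enlarge to the $\ell_p$ ball at a cost of $O(1/\gamma)^\ell$, invoke the implicit moment bound from Cohen--Peng, and finish with Markov after rescaling $\eps$ by $\Theta(\gamma/L)$. Your observation that the Lipschitz-in-$|\cdot|^p$ hypothesis forces $\phi$ to be even, so that $\phi=g\circ|\cdot|^p$ with $g$ an $L$-Lipschitz map, is correct and a useful clarification that the paper leaves tacit.

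However, there is a genuine gap at step~(i). You reduce the claim to bounding
\[
F=\sup_{x:\,\phi(Ax)=1}\bigl|\phi^w(\widetilde{A}x)-1\bigr|,
\]
but this reduction is valid only when the ratio $\phi^w(\widetilde{A}x)/\phi(Ax)$ is invariant under scaling $x\mapsto\lambda x$, i.e., when $\phi$ is positively homogeneous. You yourself emphasize that $\phi$ is \emph{not} homogeneous here (that is exactly why $\phi^w$ is needed instead of rescaled rows), so the set $\{x:\phi(Ax)=1\}$ does not represent all directions: bounding $F$ says nothing about $\bigl|\phi^w(\widetilde{A}y)/\phi(Ay)-1\bigr|$ for $y$ with $\phi(Ay)\ne 1$, since one cannot rescale $y$ onto the unit level set without changing the ratio. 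The paper instead bounds the correct quantity
\[
M=\E\Bigl[\sup_{x\ne 0}\bigl|\phi^w(\widetilde{A}x)/\phi(Ax)-1\bigr|^\ell\Bigr],
\]
carrying the normalizing factor $1/\phi(Ax)$ through the symmetrization and the contraction step; only \emph{after} $\phi$ has been replaced by $|\cdot|^p$ (using $\phi(Ax)\ge\gamma\|Ax\|_p^p$, and noting that $1/\phi(Ax)$ is a common positive scalar that factors out of the Rademacher sum) does it pass to the homogeneous constraint $\|Ax\|_p=1$. A side effect of restoring the $1/\phi(Ax)$ factor is that your tidy contraction device $h_k(u)=\tfrac{1}{p_{i_k}}g(p_{i_k}u)$ is no longer a per-coordinate fixed function, because the effective scale $p_{i_k}\phi(Ax)$ now depends on $x$; the contraction step must be phrased (as the paper does) with the same $x$-dependent divisors on both sides before factoring them out. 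So the architecture of your argument is right, but the normalization at the outset must be the ratio $\phi^w/\phi$, not the level-set restriction.
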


Therefore, to obtain a row sampling algorithm for loss functions that satisfy the conditions in Theorem~\ref{thm:generalized-sample}, we first invoke the algorithm in Theorem~\ref{thm:calc} to obtain approximate $\ell_p$ Lewis weights of the input matrix $A$ and then sample rows of $A$ according to Theorem~\ref{thm:generalized-sample}. The total number of sampled rows $N = \sum_{i = 1}^n p_i = O(d \cdot f(p,d,\eps, \delta))$ by Lemma~\ref{lem:sum_lewis}.

For solving linear regression, $\phi(\cdot)$ is usually convex, in which case $\phi^{w}(\cdot)$ is also convex. 
Therefore, to obtain an approximate solution to the linear regression problem $\min_{x \in \mathbb{R}^d} \phi(Ax - b)$, one just needs to solve a much smaller instance $\min_{x \in \mathbb{R}^d} \phi^{w}(\widetilde{A} x - \widetilde{b})$, which can still be formulated as a convex program and thus can be efficiently solved.

We remark that by using the notion of $\ell_p$ leverage score and techniques in \cite{dasgupta2009sampling}, we may obtain a results similar to Theorem~\ref{thm:generalized-sample}. However, the number of samples required by such an approach will be much larger than that of Theorem~\ref{thm:generalized-sample}.

\begin{figure*}[!htb]
\captionsetup[subfigure]{width=0.35\textwidth,farskip=0pt,captionskip=-5pt}

\subfloat[$\tau=0.5$, $\|x\!-\!x^\ast\|_2/\|x^\ast\|_2$]{%
\includegraphics[clip, trim = 3.4cm 10.8cm 4cm 11.05cm, width=0.32\textwidth]{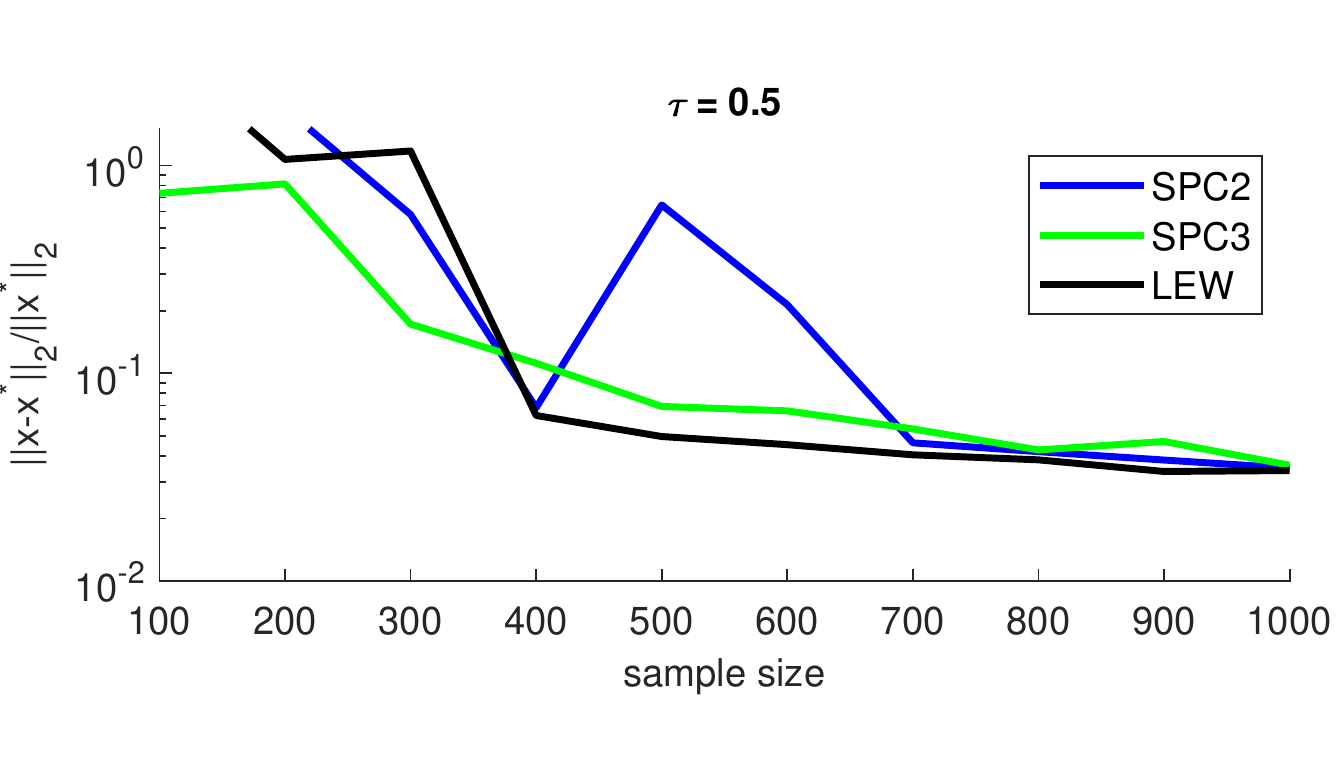}
}
\hfill
\subfloat[$\tau=0.75$, $\|x\!-\!x^\ast\|_2/\|x^\ast\|_2$]{%
\includegraphics[clip, trim = 3.4cm 10.8cm 4cm 11.05cm, width=0.32\textwidth]{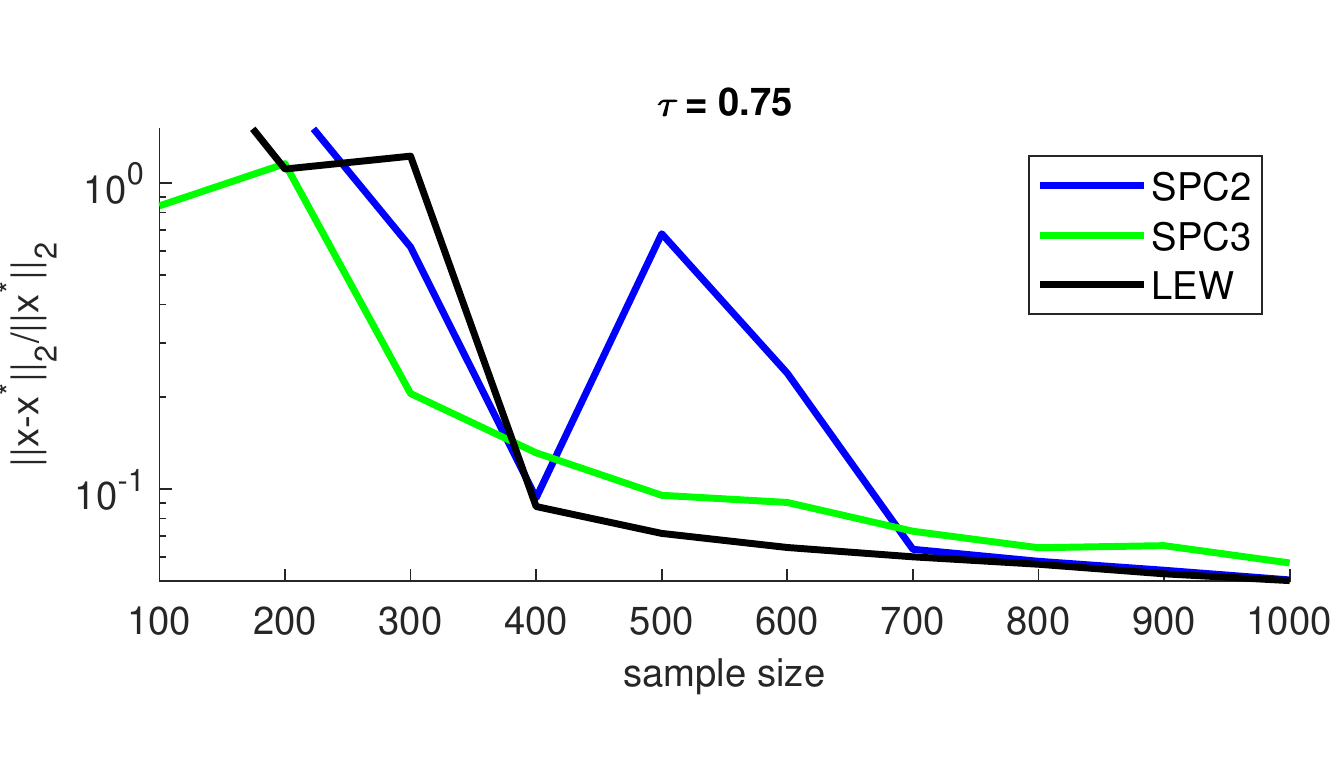}
}
\hfill
\subfloat[$\tau=0.95$, $\|x\!-\!x^\ast\|_2/\|x^\ast\|_2$]{%
\includegraphics[clip, trim = 3.4cm 10.8cm 4cm 11.05cm, width=0.32\textwidth]{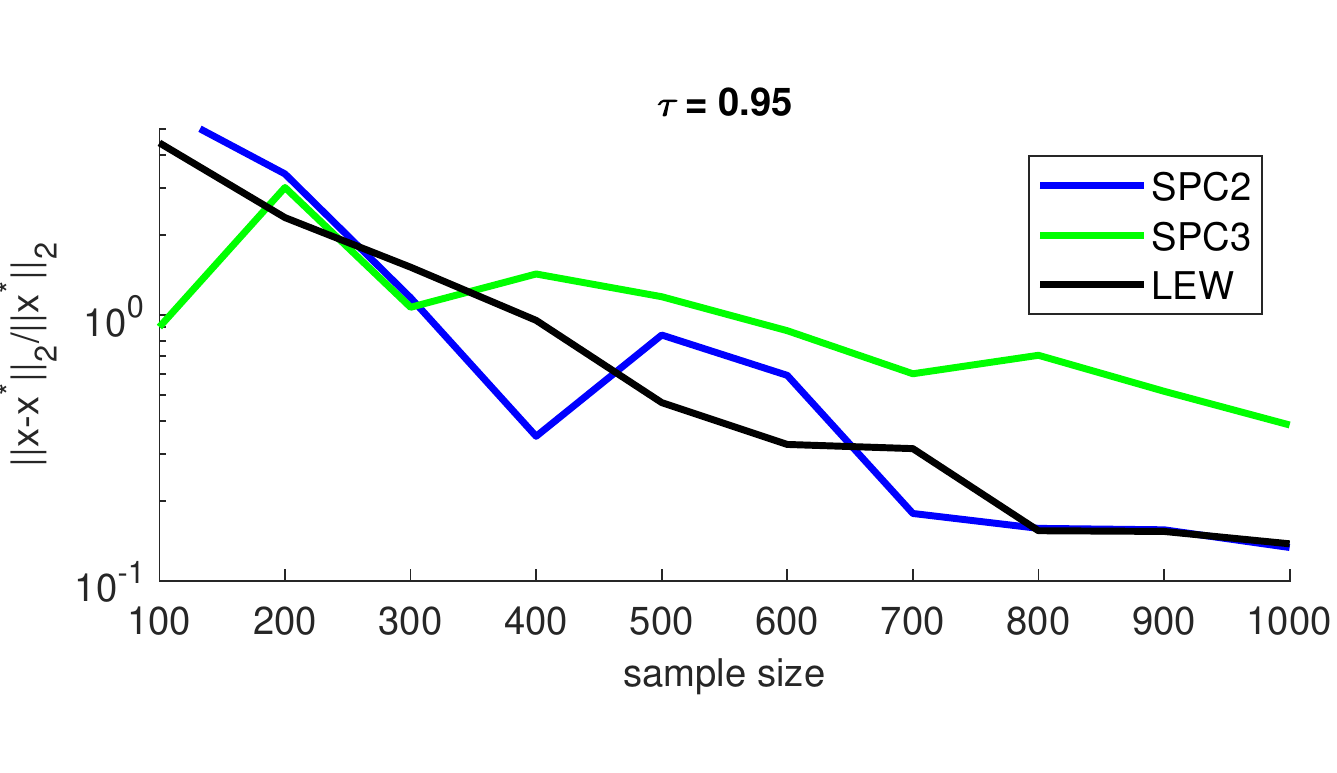}
}

\subfloat[$\tau=0.5$, $\|x\!-\!x^\ast\|_1/\|x^\ast\|_1$]{%
\includegraphics[clip, trim = 3.4cm 10.8cm 4cm 11.05cm, width=0.32\textwidth]{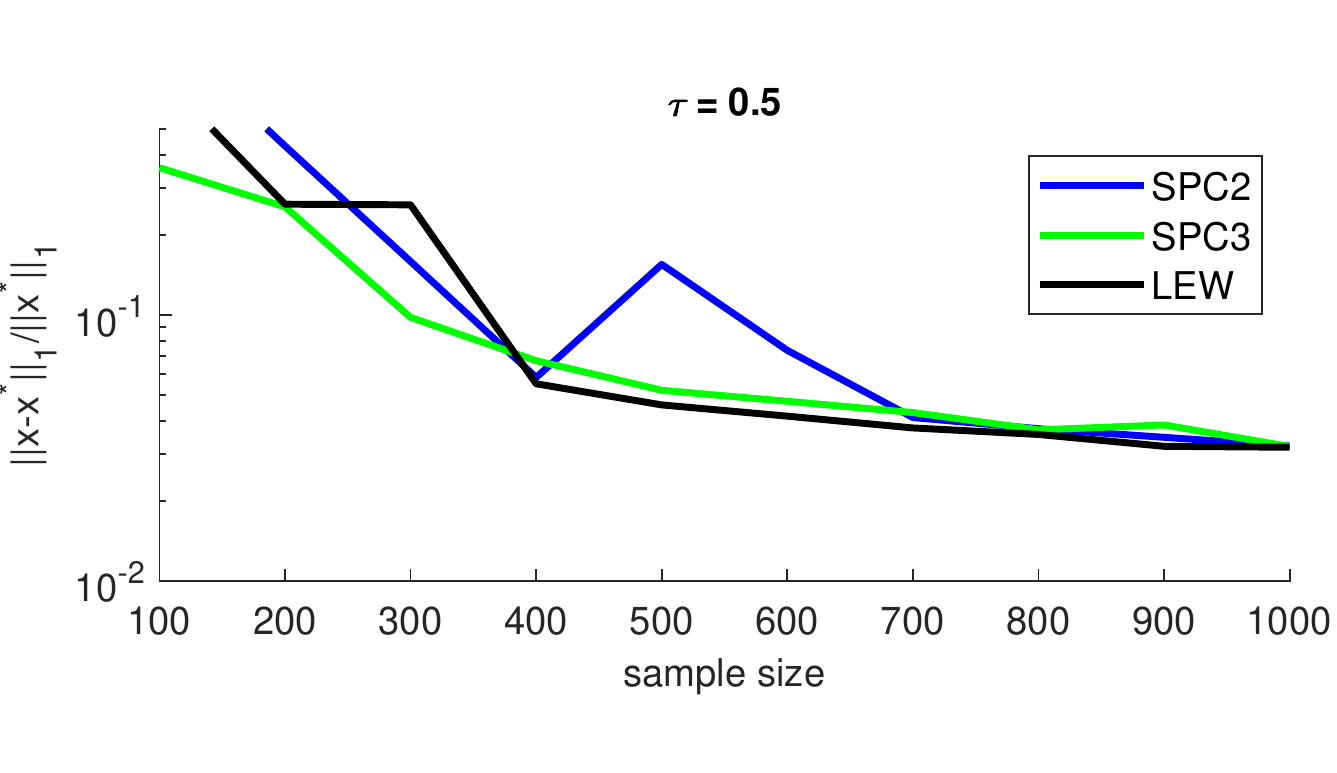}
}
\hfill
\subfloat[$\tau=0.75$, $\|x\!-\!x^\ast\|_1/\|x^\ast\|_1$]{%
\includegraphics[clip, trim = 3.4cm 10.8cm 4cm 11.05cm, width=0.32\textwidth]{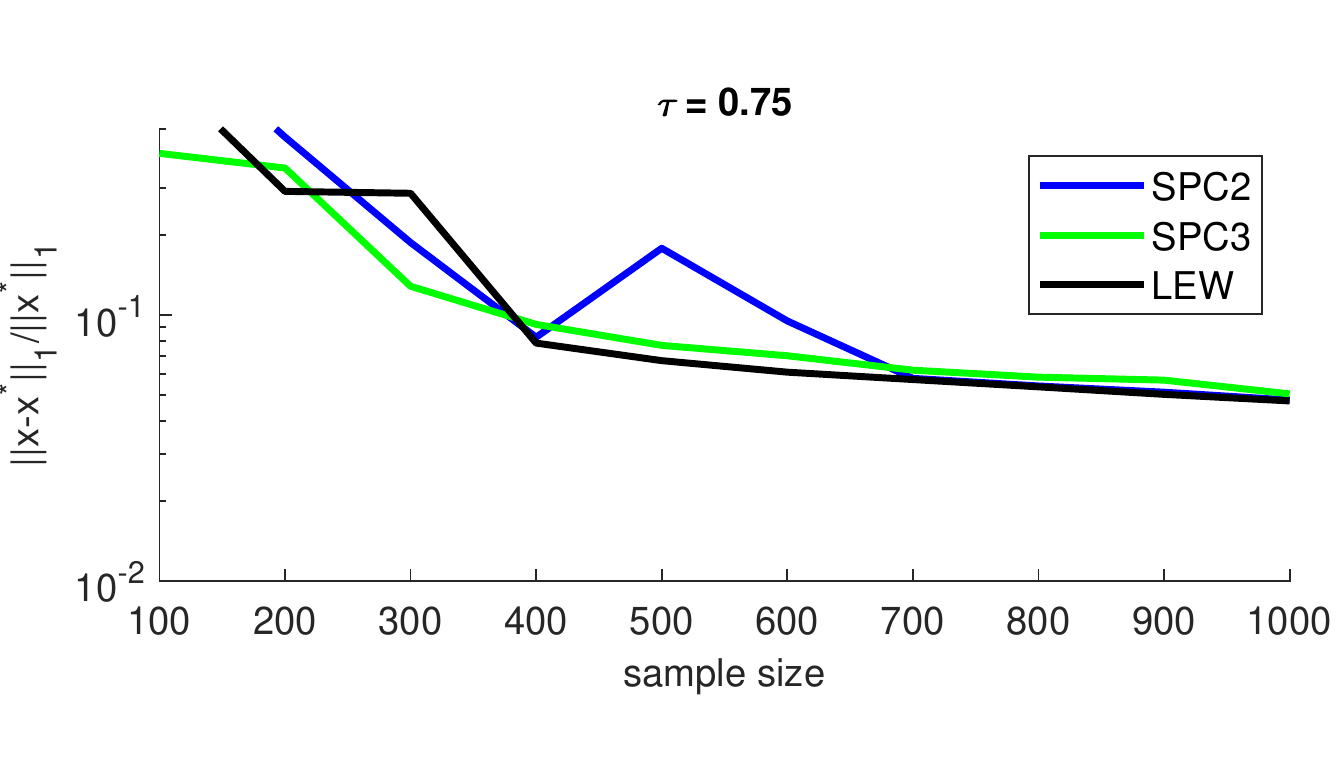}
}
\hfill
\subfloat[$\tau=0.95$, $\|x\!-\!x^\ast\|_1/\|x^\ast\|_1$]{%
\includegraphics[clip, trim = 3.4cm 10.8cm 4cm 11.05cm, width=0.32\textwidth]{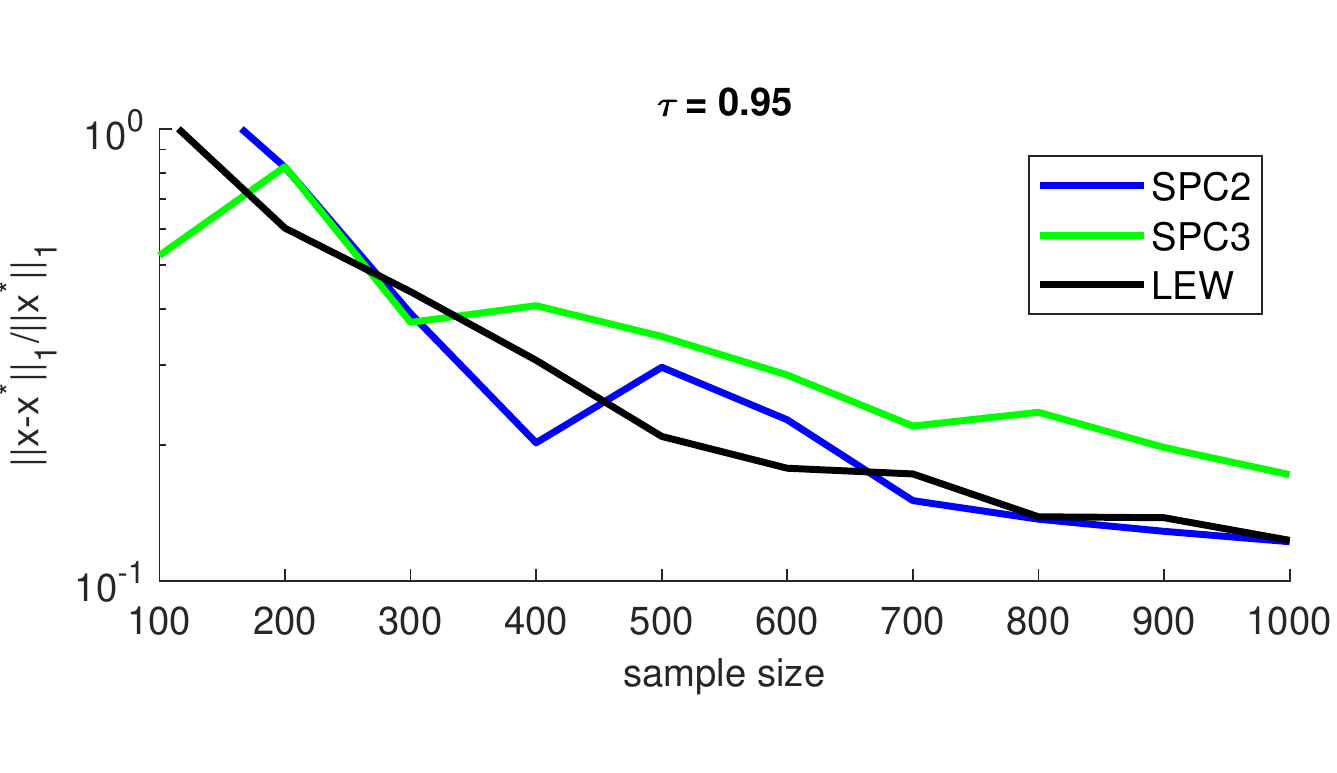}
}

\subfloat[$\tau=0.5$, $\|x\!-\!x^\ast\|_\infty/\|x^\ast\|_\infty$]{%
\includegraphics[clip, trim = 3.4cm 10.8cm 4cm 11.05cm, width=0.32\textwidth]{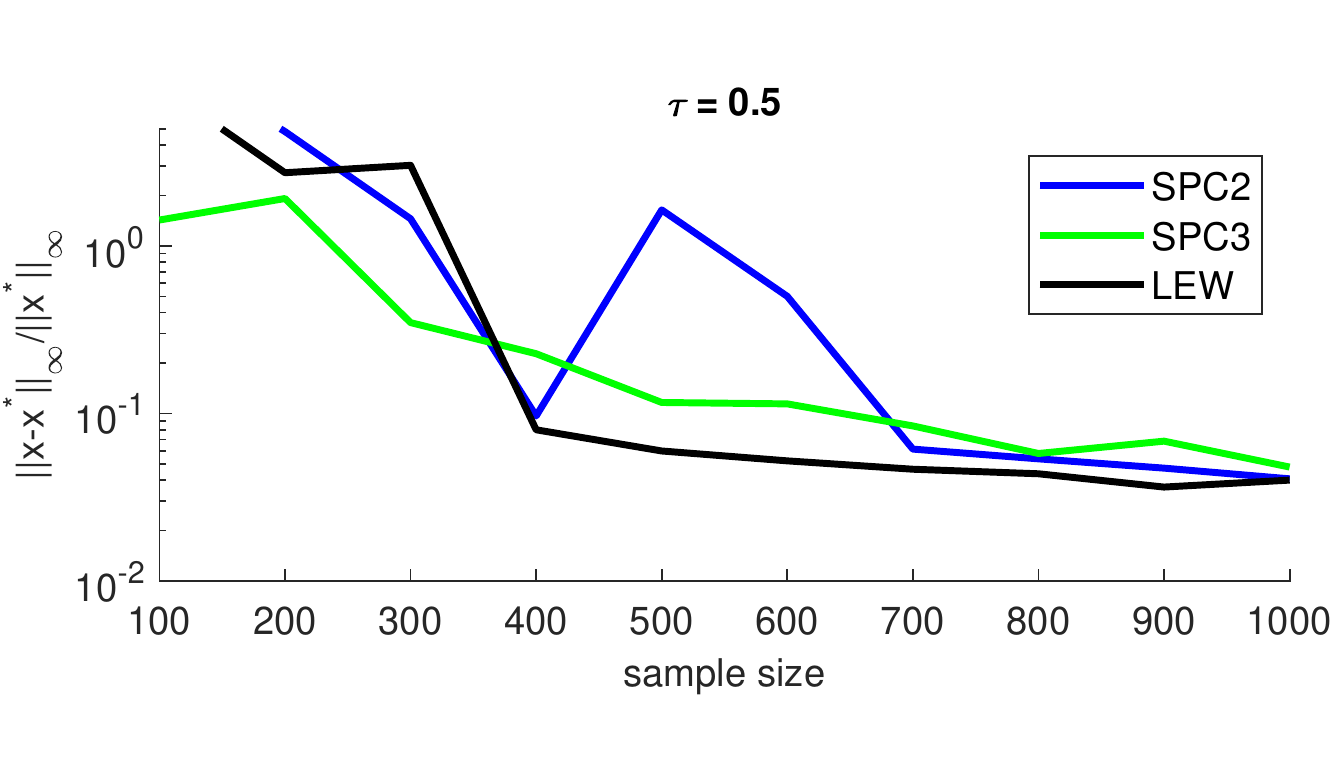}
}
\hfill
\subfloat[$\tau=0.75$, $\|x\!-\!x^\ast\|_\infty/\|x^\ast\|_\infty$]{%
\includegraphics[clip, trim = 3.4cm 10.8cm 4cm 11.05cm, width=0.32\textwidth]{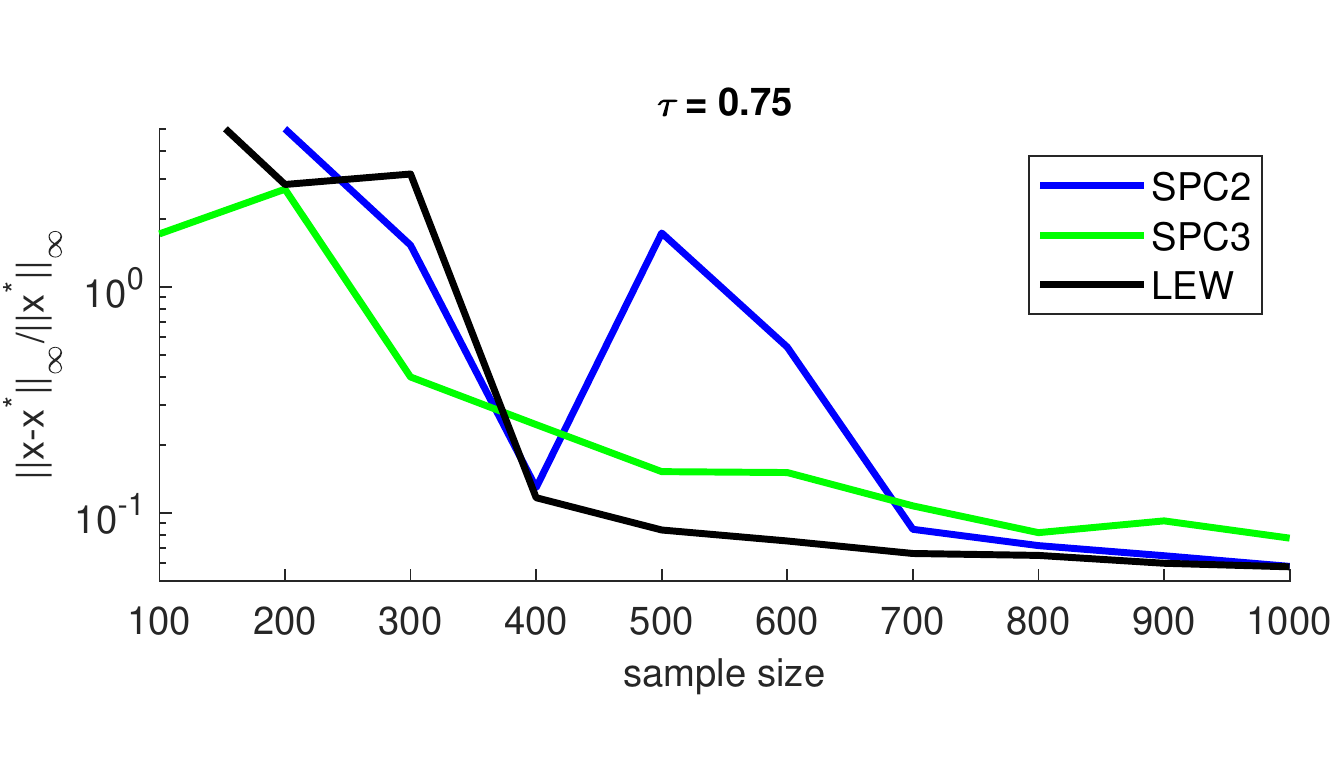}
}
\hfill
\subfloat[$\tau=0.95$, $\|x\!-\!x^\ast\|_\infty/\|x^\ast\|_\infty$]{%
\includegraphics[clip, trim = 3.4cm 10.8cm 4cm 11.05cm, width=0.32\textwidth]{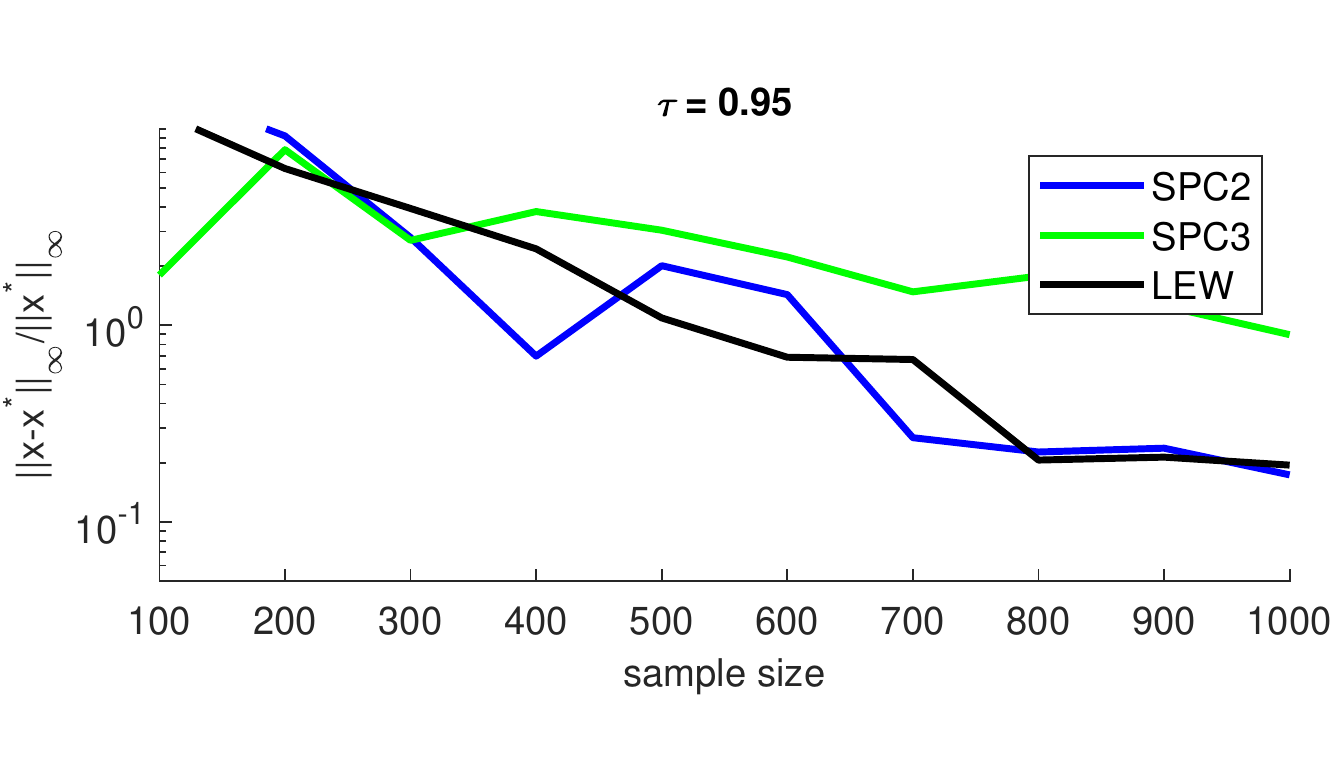}
}

\caption{\small Mean of the relative errors of the solution vector (measured in three different norms, the $\ell_2$ norm, the $\ell_1$ norm and the $\ell_\infty$ norm) of $3$ different methods on synthetic data. The plotted lines represent the mean of the errors of $50$ independent trials.}\label{fig:vary_s}
\end{figure*}

\begin{figure*}[!htb]
\captionsetup[subfigure]{width=0.3\textwidth,farskip=0pt,captionskip=-10pt}

\subfloat[$\tau=0.5$, $\|x\!-\!x^\ast\|_2/\|x^\ast\|_2$]{%
\includegraphics[clip, trim = 3.3cm 10.8cm 4.1cm 11.05cm, width=0.32\textwidth]{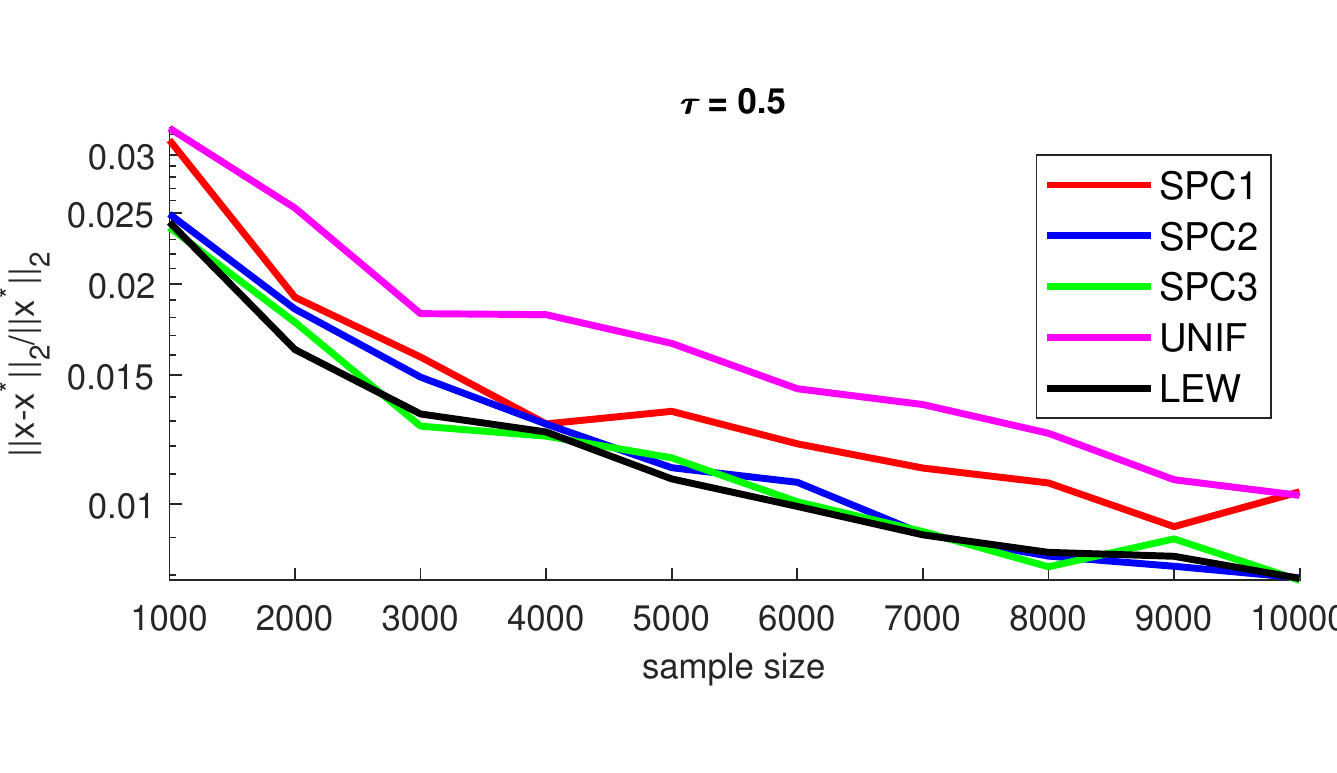}
}
\hfill
\subfloat[$\tau=0.75$, $\|x\!-\!x^\ast\|_2/\|x^\ast\|_2$]{%
\includegraphics[clip, trim = 3.45cm 10.8cm 4.1cm 11.05cm, width=0.32\textwidth]{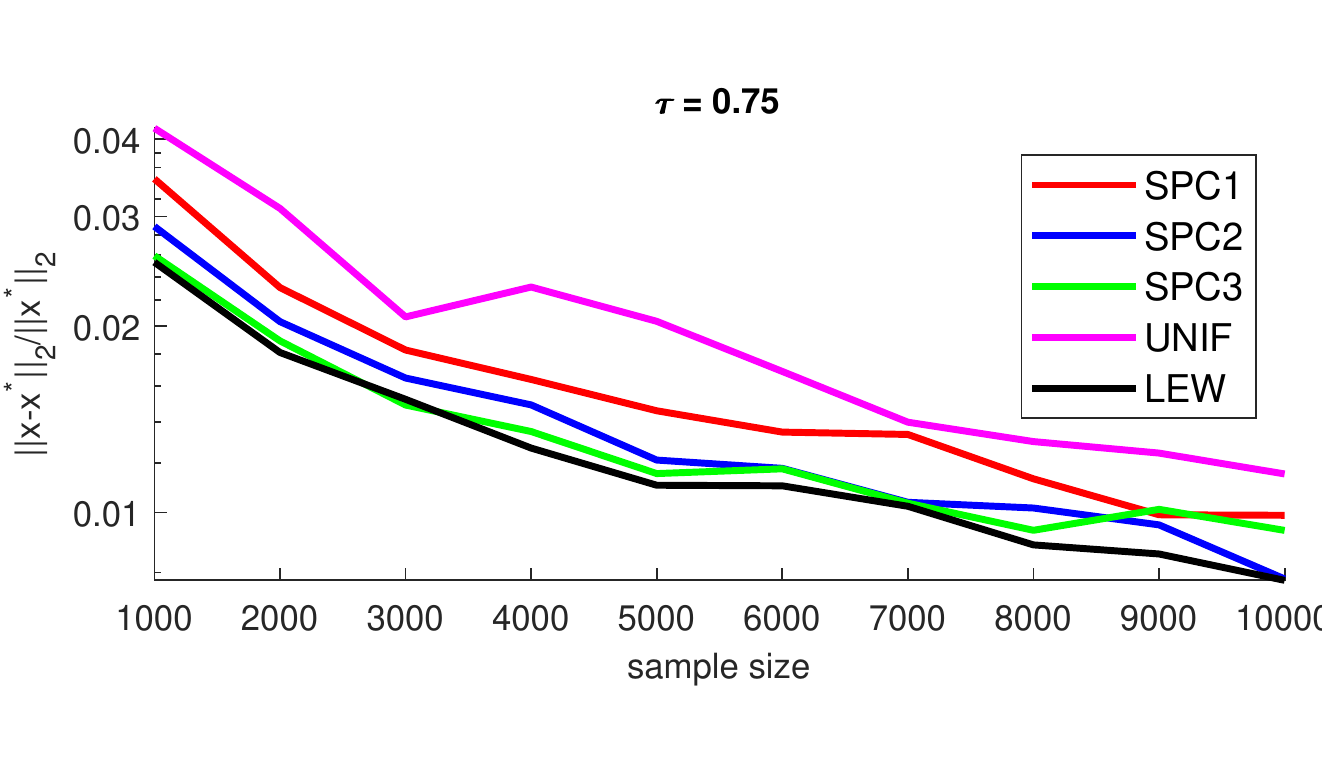}
}
\hfill
\subfloat[$\tau=0.95$, $\|x\!-\!x^\ast\|_2/\|x^\ast\|_2$]{%
\includegraphics[clip, trim = 3.45cm 10.8cm 4.1cm 11.05cm, width=0.32\textwidth]{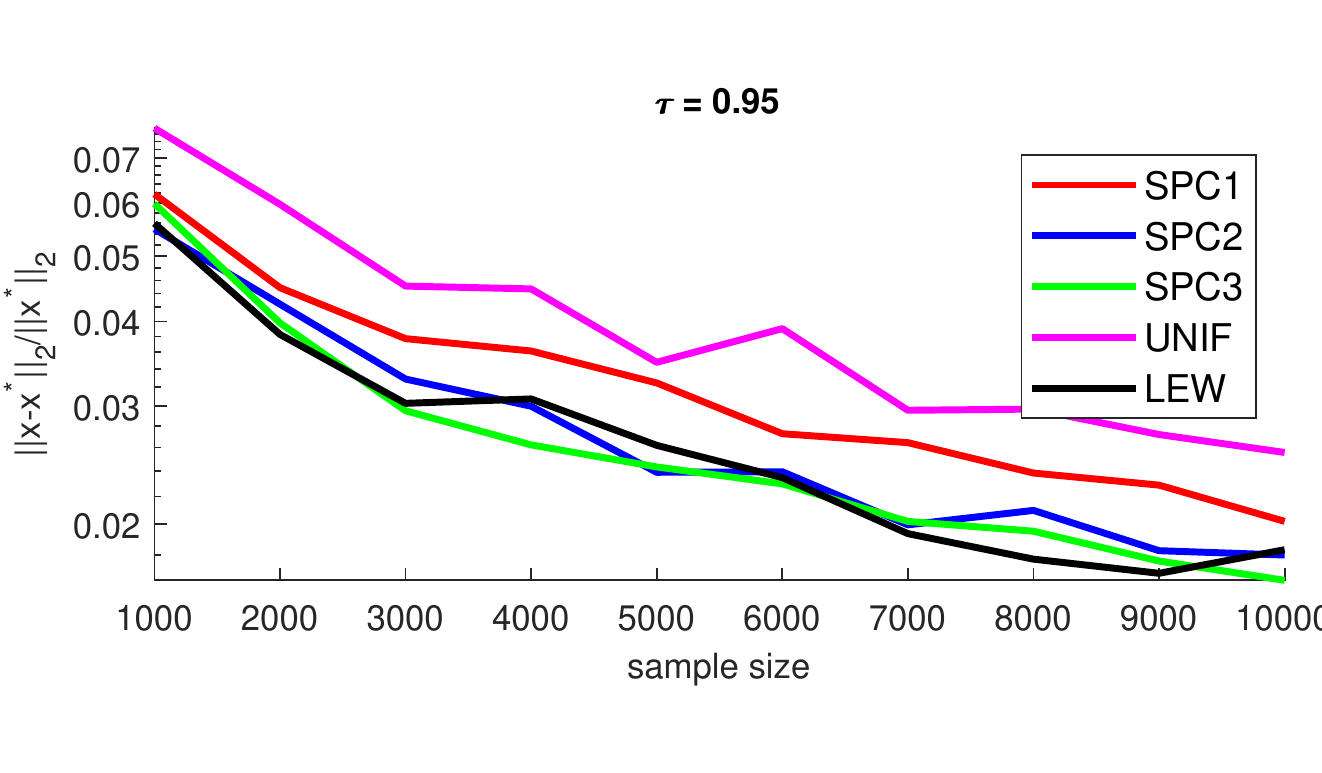}
}

\subfloat[$\tau=0.5$, $\|x\!-\!x^\ast\|_1/\|x^\ast\|_1$]{%
\includegraphics[clip, trim = 3.45cm 10.8cm 4.1cm 11.05cm, width=0.32\textwidth]{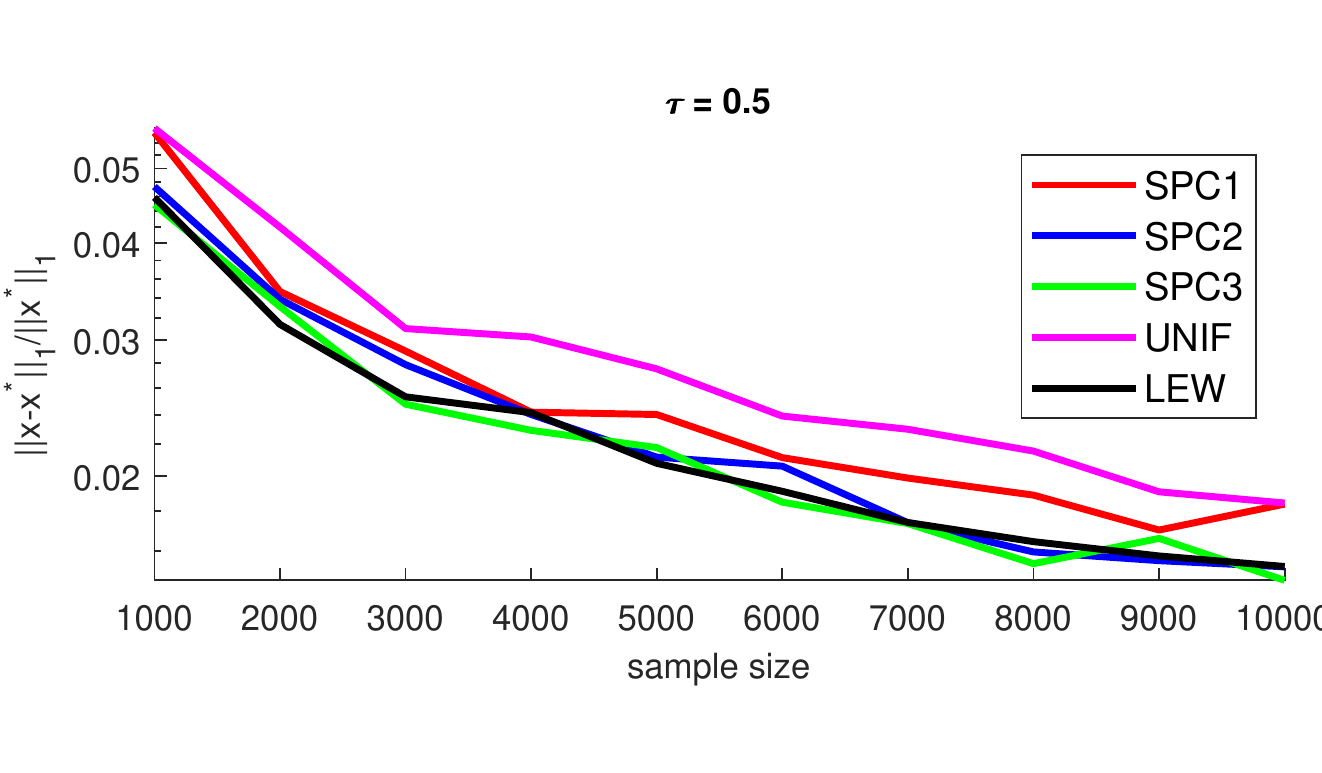}
}
\hfill
\subfloat[$\tau=0.75$, $\|x\!-\!x^\ast\|_1/\|x^\ast\|_1$]{%
\includegraphics[clip, trim = 3.45cm 10.8cm 4.1cm 11.05cm, width=0.32\textwidth]{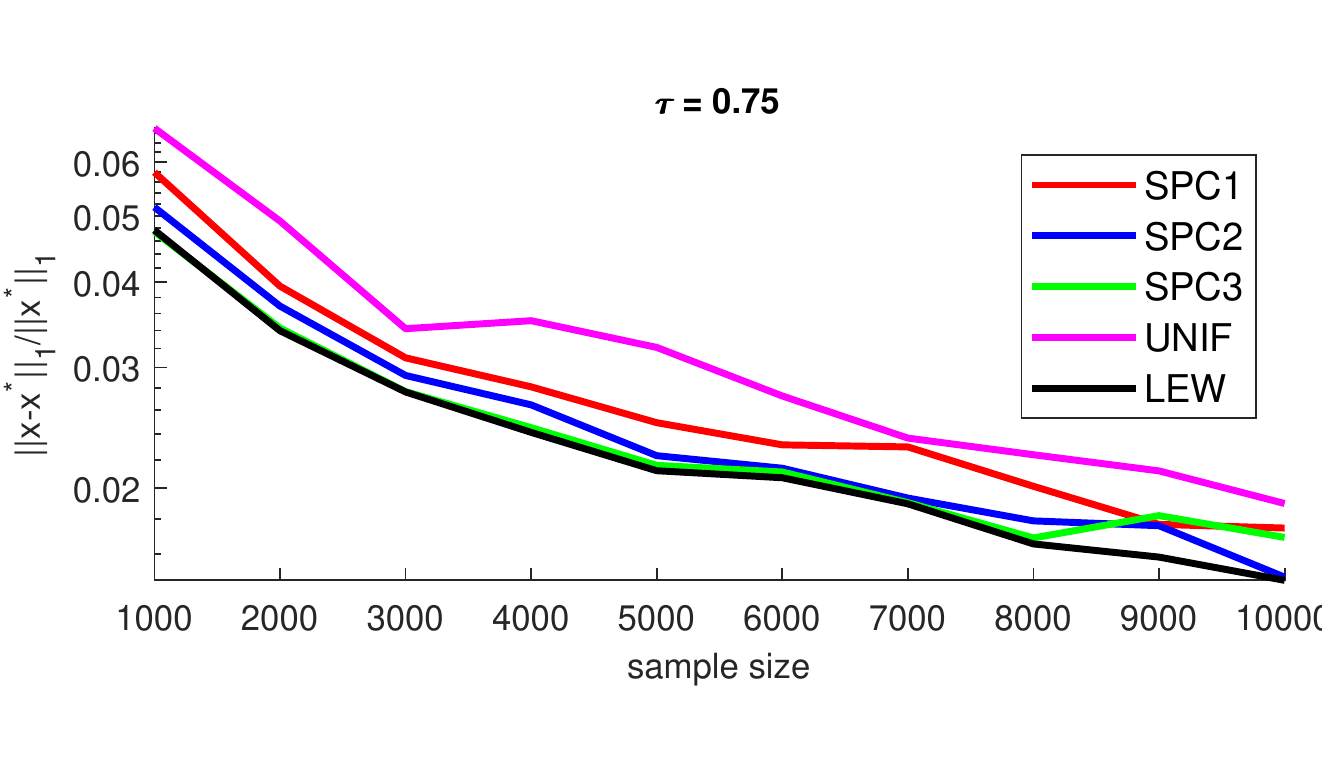}
}
\hfill
\subfloat[$\tau=0.95$, $\|x\!-\!x^\ast\|_1/\|x^\ast\|_1$]{%
\includegraphics[clip, trim = 3.45cm 10.8cm 4.1cm 11.05cm, width=0.32\textwidth]{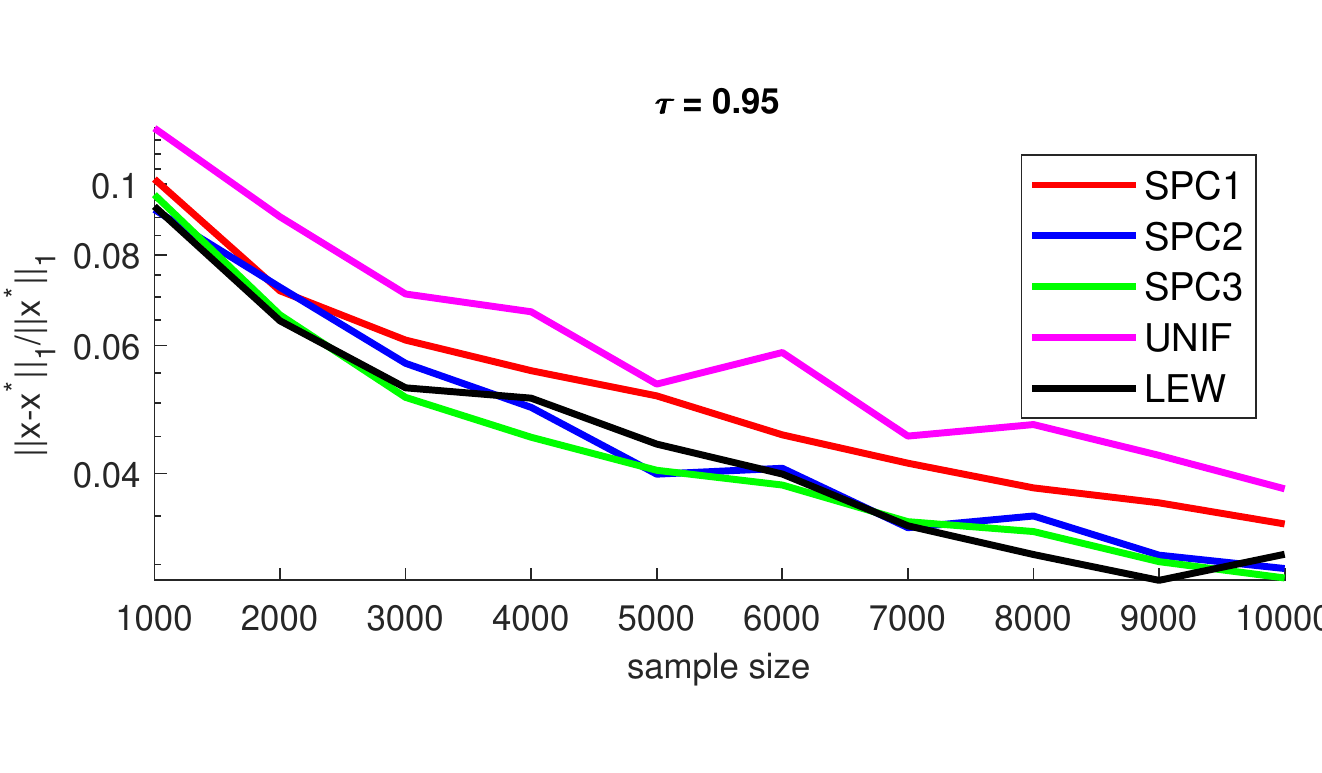}
}

\subfloat[$\tau=0.5$, $\|x\!-\!x^\ast\|_\infty/\|x^\ast\|_\infty$]{%
\includegraphics[clip, trim = 3.3cm 10.8cm 4.1cm 11.05cm, width=0.32\textwidth]{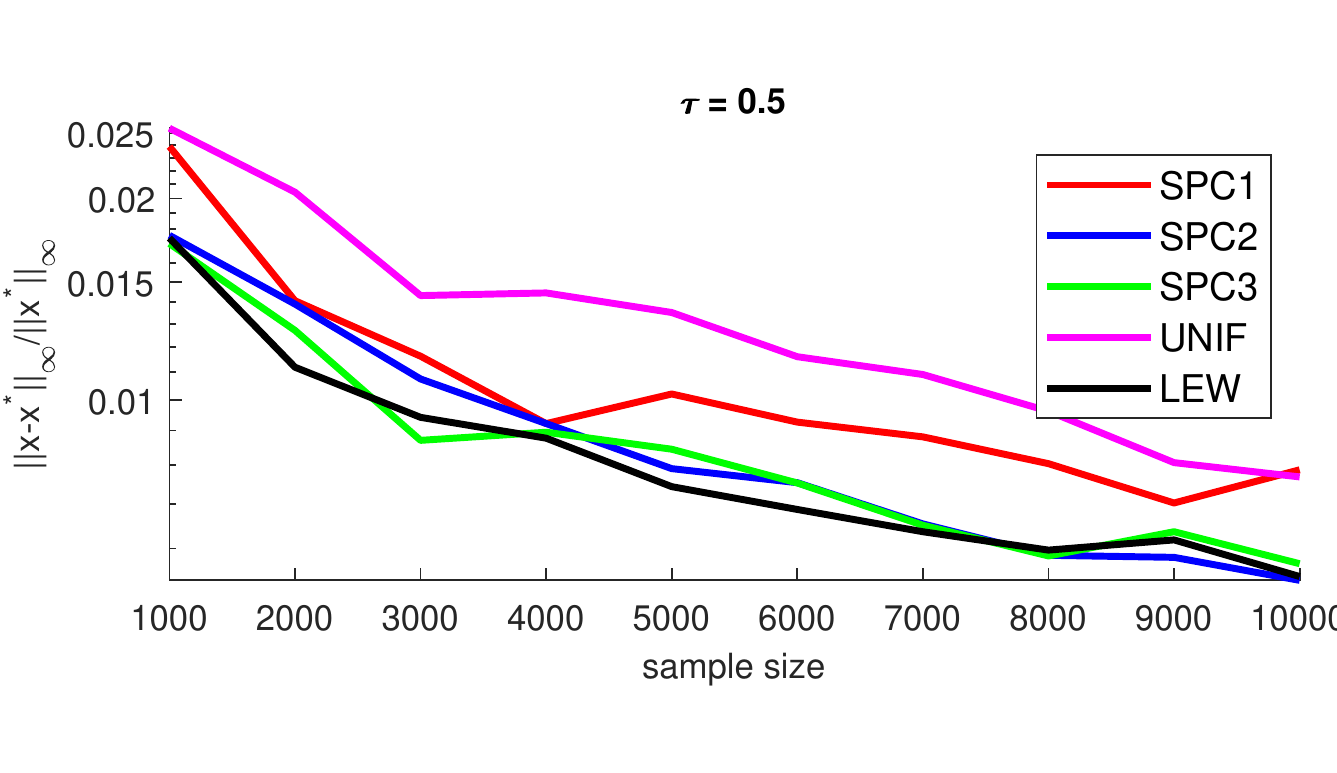}
}
\hfill
\subfloat[$\tau=0.75$, $\|x\!-\!x^\ast\|_\infty/\|x^\ast\|_\infty$]{%
\includegraphics[clip, trim = 3.3cm 10.8cm 4.1cm 11.05cm, width=0.32\textwidth]{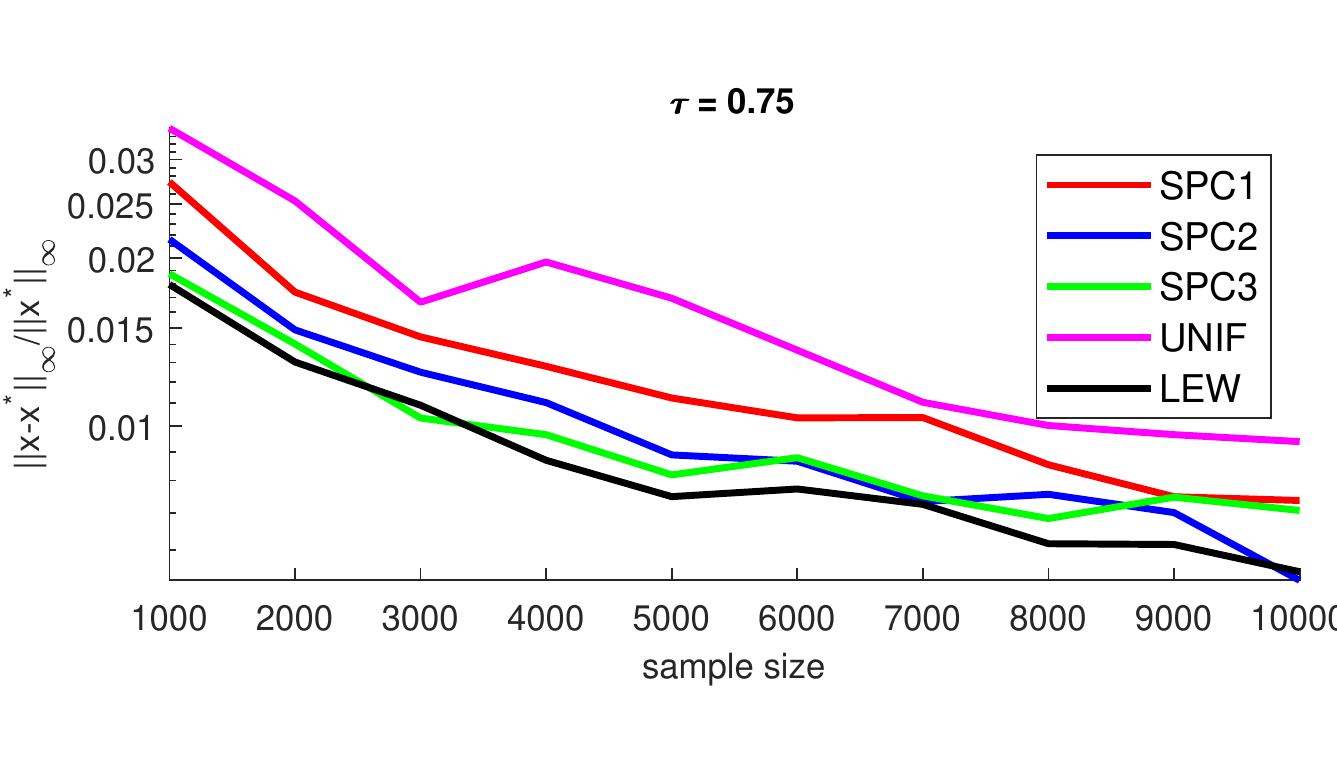}
}
\hfill
\subfloat[$\tau=0.95$, $\|x\!-\!x^\ast\|_\infty/\|x^\ast\|_\infty$]{%
\includegraphics[clip, trim = 3.45cm 10.8cm 4.1cm 11.05cm, width=0.32\textwidth]{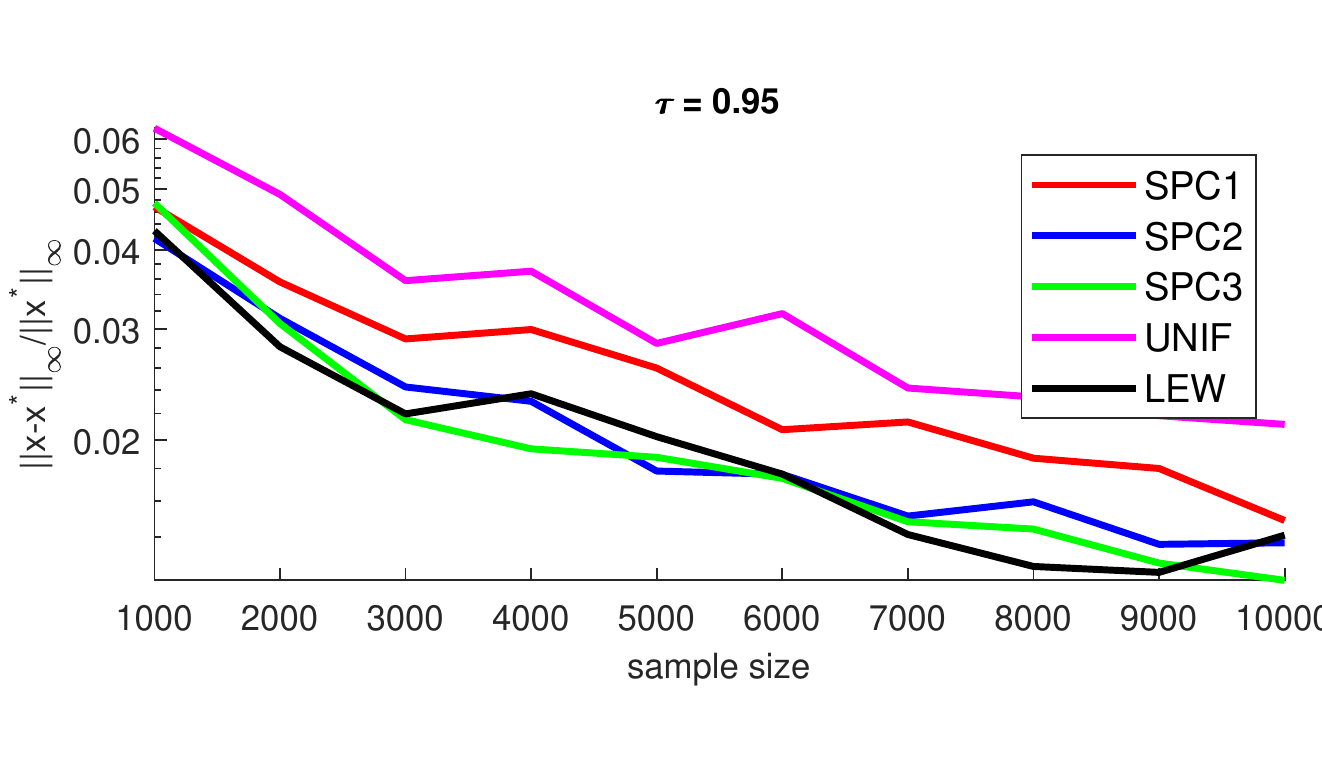}
}

\caption{\small Mean of the relative errors of the solution vector (measured in three different norms, the $\ell_2$ norm, the $\ell_1$ norm and the $\ell_\infty$ norm) of $5$ different methods on census data. The plotted lines represent the mean of the errors of $50$ independent trials.}\label{fig:vary_s_census}
\end{figure*}

\begin{figure}
\begin{minipage}[b]{0.45\linewidth}
\includegraphics[clip, trim = 3.2cm 9.3cm 3.85cm 9.6cm, width=\textwidth]{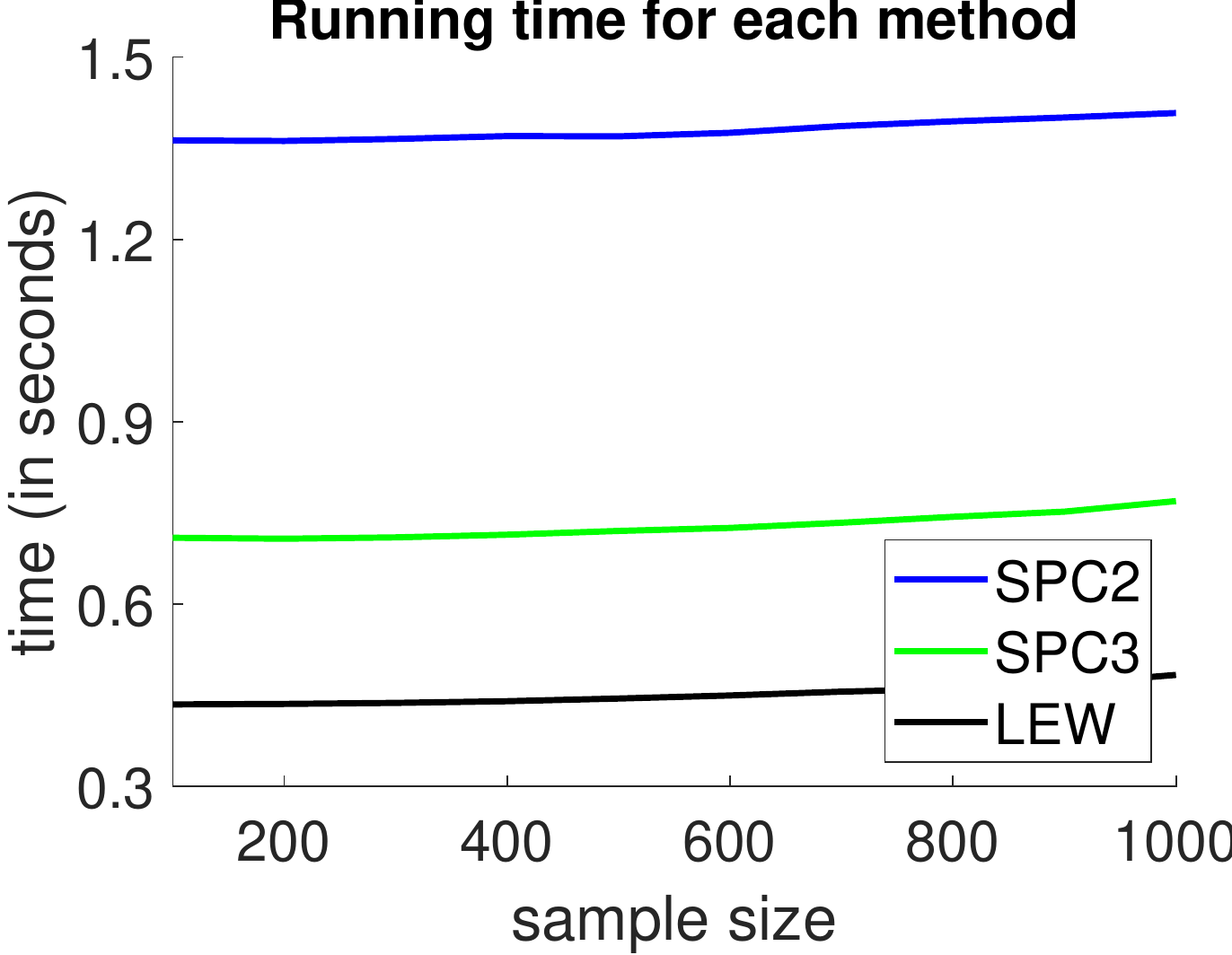}
\vspace*{-0.7cm}
\caption{\small Mean running time of $50$ independent trials on synthetic data.}\label{fig:synthetic_running time}
\end{minipage}
\hfill
\begin{minipage}[b]{0.45\linewidth}
\includegraphics[clip, trim = 3.2cm 9.3cm 3.85cm 9.6cm, width=\textwidth]{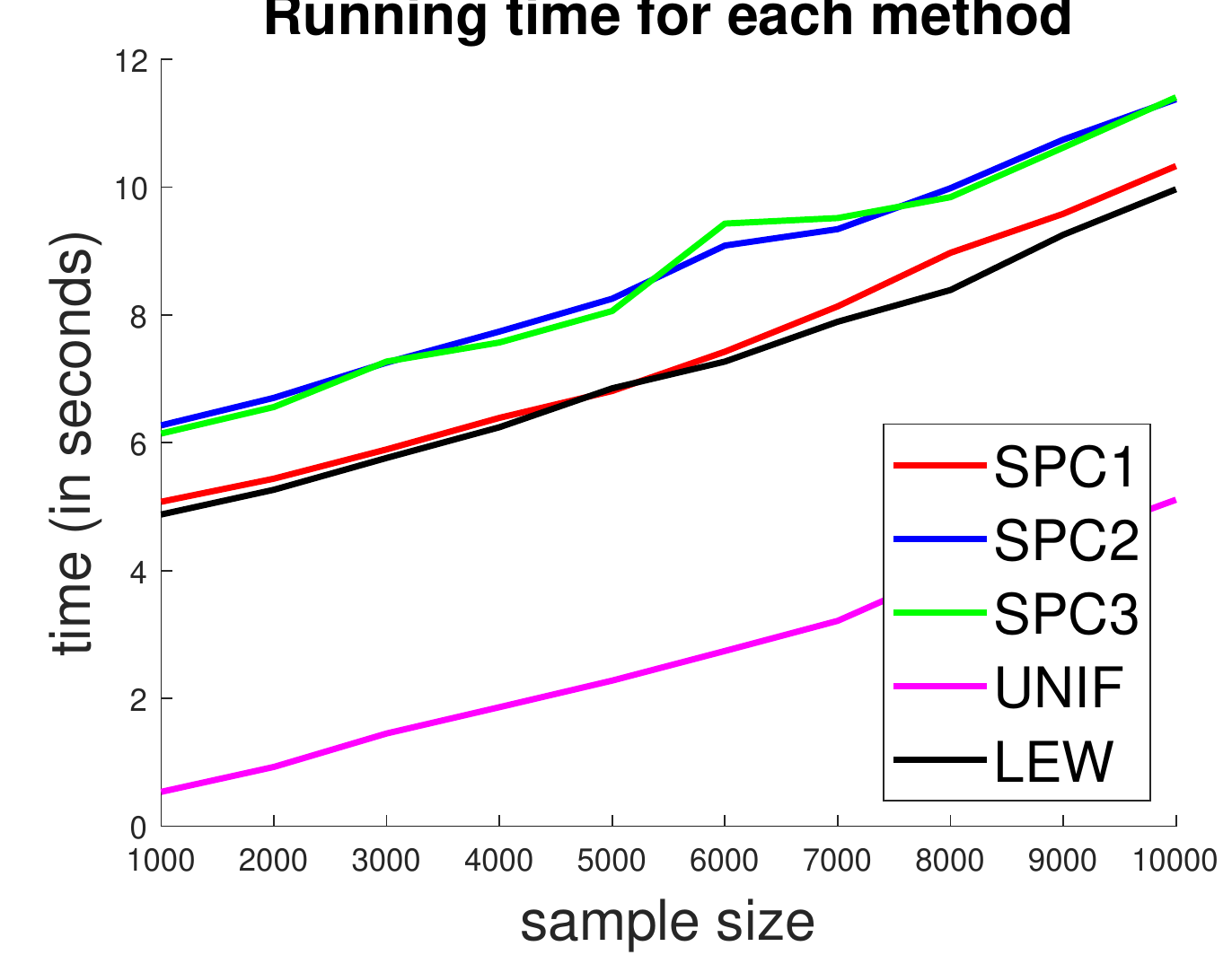}
\vspace*{-0.7cm}
\caption{\small Mean running time of $50$ independent trials on census data.}\label{fig:census_running time}
\end{minipage}
\vspace{-0.5ex}
\end{figure}

\section{Empirical Evaluation of Row Sampling Algorithms}\label{sec:exp}

In this section, we conduct an empirical evaluation\footnote{All tests are run under MATLAB 2019b on a machine of Intel Core i7-6550U CPU@2.20GHz with 2 cores.} of our sampling scheme against the three algorithms proposed in~\cite{yang2013quantile}, which are the current state of the art. The three algorithms, codenamed SPC1, SPC2 and SPC3, are all based on row sampling, but the weights are very different from Lewis weights. 
We also compare with uniform sampling as a baseline, which is widely used in practice for its simplicity.
Throughout this section, we use all samplinCorollary 3.7 algorithms only for preprocessing, and solve the reduced problem optimally using linear programming.
This allows us to remove the possible bias introduced by different optimization methods and focus on the comparison of the sampling schemes.

\paragraph{Synthetic Data.} We first evaluate our sampling scheme on a synthetic dataset using the same setup as in~\cite{yang2013quantile}.
Here, we only compare with SPC2 and SPC3,
which, as demonstrated in~\cite{yang2013quantile}, significantly outperform all other sampling schemes on this dataset.

As described in~\cite{yang2013quantile}, the data set is generated in the following manner. Each row of the matrix $A\in \RR^{n\times d}$ is a canonical basis vector. Suppose the number of occurrences of $e_j$ is $c_j$, where $c_j=qc_{j-1}$, for $j= 2, \dots, d$ and some $q\in (1,2]$ --- such distributions of rows introduce imbalanced measurements and therefore increase the difficulty of sampling. The true vector $x^\ast\in\RR^d$ has i.i.d.\@ standard normal coordinates. Let $b^\ast =A x^\ast$. The noise vector $\nu$ is generated with independent Laplacian entries, scaled such that $\|\nu\|_1/\|b^\ast\|_1= 0.2$. The response vector $b$ is given by $b_i = 500\nu_i$ with probability $0.001$ and $b_i = b_i^\ast + \nu_i$ with probability $0.999$. Here, we adhere to the same generation process so that our experimental results are comparable to those in~\cite{yang2013quantile}.

In a similar manner to~\cite{yang2013quantile}, rather than determining the sample size from a given distortion parameter $\eps$, we vary the quantile parameter $\tau\in \{ 0.5,0.75,0.95\}$, and for each $\tau$, vary the sample size among $\{100,200,\dots,1000\}$. Here the parameter $\tau$ corresponds to the parameter in the function $h_\tau(t)$ defined in Equation~\eqref{equ:quantile_def1} rather than the renormalized function $\rho_\tau(t)$ defined in Equation~\eqref{equ:quantile_def2}. We choose $n=10^5$ and $d=50$.

See Figure~\ref{fig:vary_s} for results on the synthetic dataset and Figure~\ref{fig:synthetic_running time} for the running time. When the same number of samples is at least $400$, our sampling scheme (LEW) achieves the best approximation to $x$, outperforming both SPC2 and SPC3 in all three norms ($\ell_\infty$, $\ell_1$ and $\ell_2$) under the setting $\tau = 0.5$ and $0.75$. When $\tau = 0.95$, our algorithm achieves a similar approximation error to that of SPC2, which are both better than SPC3.  This suggests that when the input is highly imbalanced and therefore more difficult to process, our method is accurate and stable with a faster running time.

\paragraph{Census Data.} We also evaluate our algorithm on a real-world dataset, the U.S.\ 2000 Census data,\footnote{http://www.census.gov/census2000/PUMS5.html} which consists of salary and related information on people who met certain criteria. As in~\cite{yang2013quantile}, we conduct an experiment on the same 5\% sample of the census data. In the corresponding regression problem, the matrix $A \in \mathbb{R}^{5048299\times 11}$ is the feature matrix and the vector $b \in \mathbb{R}^{5048299}$ records the salary of the sampled people. Our goal is to solve the regression problem $\min_x h_\tau(Ax-b)$, whose minimizer is denoted by $x^\ast$. We conduct a similar experiment as above with same values of $\tau$ and varying sample sizes in $\{1000,2000,\dots,10000\}$, which are at most 0.2\% of the data size.

See Figure~\ref{fig:vary_s_census} for approximation errors to $x^\ast$ on the census data, where we consider all methods in~\cite{yang2013quantile}. Our method, quite naturally, outperforms uniform sampling and SCP1 by a considerable margin, and  achieves almost the same relative error as, or a marginally smaller relative error in certain cases than, SPC2 and SPC3. The running time of different methods are compared in Figure~\ref{fig:census_running time}. Our method runs significantly faster than SPC2 and SPC3, suggesting that even on relatively easy-to-process real-world data, our method is robust with a superior running time.



\section{Conclusion} \label{sec:conclusion}
In this paper, we give the first row sampling algorithm for the quantile loss function with sample complexity nearly linear in the dimensionality of the data. 
Technically, we show that Lewis weights sampling can be applied in row sampling algorithms for a variety of loss functions.

There are a few interesting problems that remain open. 
Is it possible to prove lower bounds on sample complexity of row sampling algorithms for the quantile loss function using techniques in~\cite{li2020tight}? Furthermore, is it possible to use Lewis weights sampling to obtain nearly-linear sample complexity for other loss functions, e.g.\@ the $M$-estimators studied in~\cite{clarkson2015sketching, clarkson2015input}?
Answering these problems would lead to a better understanding of the power and limitation of row sampling algorithms.

\subsection*{Acknowledgements}
Y.~Li was supported in part by a Singapore Ministry of Education (AcRF) Tier 2 grant MOE2018-T2-1-013. H.~Zhang was supported by an National Science Foundation (NSF) grant IIS-1814056. The authors would like to thank Tianyi Zhang for insightful discussions.

\bibliographystyle{abbrv}

\bibliography{ref}

\appendix
\section{Additional Background}
\subsection{Rademacher Process}
We need the following classical results regarding comparison of Rademacher processes.

\begin{lemma}[{\cite[Proposition 1]{LT89}}]\label{thm:rademacher_comparison} Let $F:[0,\infty)\to [0,\infty)$ be convex and increasing. Let $\sigma_1,\dots,\sigma_n$ be a Rademacher sequence. Then for any bounded subset $T\subset \RR^n$ it holds that
\[
\E F\left(\sup_{t\in T} \left|\sum_{i=1}^n \sigma_i |t_i| \right| \right) \leq 2\E F\left(\sup_{t\in T} \left| \sum_{i=1}^n \sigma_i t_i \right|\right).
\]
\end{lemma}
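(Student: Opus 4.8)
This is a classical comparison inequality, and the plan is to derive it from the \emph{signed} contraction principle for Rademacher sums, with the factor of $2$ arising solely from splitting the outer absolute value. Concretely, set $X_+ = \sup_{t\in T}\sum_i \sigma_i |t_i|$ and $X_- = \sup_{t\in T}\sum_i \sigma_i(-|t_i|)$; then $\sup_{t\in T}|\sum_i \sigma_i |t_i|| = \max(X_+,X_-)$, and since $F$ is nonnegative and increasing, $F(\max(X_+,X_-)) = \max(F(X_+),F(X_-)) \le F(X_+)+F(X_-)$. Taking expectations, it suffices to prove $\E F(X_+) \le \E F(\sup_{t\in T}|\sum_i\sigma_i t_i|)$ and the same bound for $X_-$.

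Both of these would follow from the signed contraction principle: if $\psi_1,\dots,\psi_n:\RR\to\RR$ are $1$-Lipschitz with $\psi_i(0)=0$, then $\E F(\sup_{t\in T}\sum_i \sigma_i \psi_i(t_i)) \le \E F(\sup_{t\in T}\sum_i \sigma_i t_i)$. Indeed, $x\mapsto |x|$ is $1$-Lipschitz and vanishes at $0$, so applying this with $\psi_i(x)=|x|$ gives $\E F(X_+) \le \E F(\sup_t\sum_i\sigma_i t_i) \le \E F(\sup_t|\sum_i\sigma_i t_i|)$, and applying it with $\psi_i(x)=-|x|$ (also $1$-Lipschitz, vanishing at $0$) handles $X_-$.

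To prove the signed contraction principle, I would induct on the number of indices $i$ with $\psi_i\neq\mathrm{id}$, replacing one non-identity function per step. For the inductive step one conditions on every Rademacher variable except the one, say $\sigma_1$, attached to the coordinate being replaced; writing $h(t)=\sum_{i\ge 2}\sigma_i\psi_i(t_i)$ (a fixed function of $t$) and $\psi=\psi_1$, and averaging over $\sigma_1\in\{\pm1\}$, the step reduces to the deterministic inequality
\[
F\big(\sup_{t}(h(t)+\psi(t_1))\big)+F\big(\sup_{t}(h(t)-\psi(t_1))\big)\le F\big(\sup_{t}(h(t)+t_1)\big)+F\big(\sup_{t}(h(t)-t_1)\big).
\]
Taking near-maximizers $t^1,t^2$ for the two suprema on the left, I would prove this by a short case analysis on the signs of $t^1_1-t^2_1$ and of $\psi(t^1_1)-\psi(t^2_1)$: using only $|\psi(t^1_1)-\psi(t^2_1)|\le|t^1_1-t^2_1|$, one expresses the left-hand pair of arguments as a convex combination of the pairs $(h(t^1)+t^1_1,\,h(t^2)-t^2_1)$ and $(h(t^1)-t^1_1,\,h(t^2)+t^2_1)$, and then convexity plus monotonicity of $F$, together with the fact that each of these four numbers is at most $\sup_t(h(t)\pm t_1)$, finishes the step.

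The main obstacle is exactly this deterministic step: pairing up the left- and right-hand arguments correctly in every sign case while exploiting \emph{only} that $\psi$ is a contraction (note the Lipschitz bound controls the difference $\psi(p)-\psi(q)$ but not the individual values, so the maximizers must be paired with care). Two minor technicalities also need attention: suprema that are not attained should be handled via $\delta$-maximizers with $\delta\to 0$, and measurability of the suprema is arranged by passing to a countable dense subset of $T$ (which exists since $T$ is bounded, hence separable).
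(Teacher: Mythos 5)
The paper does not supply its own proof of this lemma: it is cited verbatim as Proposition~1 of \cite{LT89}, so you are in effect reconstructing a proof of a quoted classical fact. Your high-level strategy is the right one, and it is essentially the Ledoux--Talagrand route: split the outer absolute value as a $\max$ (picking up the factor $2$ from $\max(F(X_+),F(X_-))\le F(X_+)+F(X_-)$), then reduce to the signed contraction principle $\E F\bigl(\sup_t\sum_i\sigma_i\psi_i(t_i)\bigr)\le\E F\bigl(\sup_t\sum_i\sigma_i t_i\bigr)$ with $\psi_i(x)=\pm|x|$, and prove that principle by induction with one Rademacher variable frozen at a time.

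The gap is in your proof of the deterministic one-step inequality. The pair $(a,b)=\bigl(h(t^1)+\psi(t^1_1),\,h(t^2)-\psi(t^2_1)\bigr)$ is generally \emph{not} a convex combination of $\bigl(h(t^1)+t^1_1,\,h(t^2)-t^2_1\bigr)$ and $\bigl(h(t^1)-t^1_1,\,h(t^2)+t^2_1\bigr)$: a single weight $\lambda$ would have to satisfy $2\lambda-1=\psi(t^1_1)/t^1_1=\psi(t^2_1)/t^2_1$, which fails whenever the two maximizers sit on different branches of $\psi$. Concretely, for $\psi(x)=|x|$ and $T=\{(1,5),(-10,3)\}$ with $h(t)=t_2$, the near-maximizers give $(a,b)=(13,4)$ while the two reference pairs are $(-7,4)$ and $(13,6)$, and $(13,4)$ lies outside their convex hull. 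You also state you will use ``only'' the Lipschitz bound $|\psi(t^1_1)-\psi(t^2_1)|\le|t^1_1-t^2_1|$; but the step genuinely requires $\psi(0)=0$, since without $|\psi(s)|\le|s|$ the inequality is false (add a large constant to $\psi$). The correct finish proves the two scalar inequalities $a+b\le A+B$ (from the Lipschitz bound, by splitting on the sign of $t^1_1-t^2_1$) and $\max(a,b)\le\max(A,B)$ (from $|\psi(s)|\le|s|$, i.e., $\psi(0)=0$), and then deduces $F(a)+F(b)\le F(A)+F(B)$ by a short majorization argument: with $a\ge b$, $A\ge B$, either $b\le B$ and monotonicity suffices, or one slides mass from $b$ to $a$ until $b$ reaches $B$, which only increases $F(a)+F(b)$ by convexity and lands both coordinates below $(A,B)$.
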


\begin{lemma}[Contraction Principle, Theorem 4.4 in \cite{LT89}]\label{thm:contraction} Let $F:[0,\infty)\to [0,\infty)$ be convex. Let $\sigma_1,\dots,\sigma_n$ be a Rademacher sequence. Then for any finite sequence $(x_i)$ and any real numbers $(\alpha_i)$ such that $|\alpha_i|\leq 1$ for every $i$, it holds that
\[
\E F\left(\left|\sum_{i=1}^n \sigma_i a_i x_i \right| \right) \leq \E F\left( \left| \sum_{i=1}^n \sigma_i x_i\right|\right).
\]
\end{lemma}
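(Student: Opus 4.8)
The plan is to use the standard principle that a convex function on the cube $[-1,1]^n$ is maximized at a vertex. Fix the real numbers $x_1,\dots,x_n$ and, for $v=(v_1,\dots,v_n)\in\RR^n$, set
\[
G(v) \;=\; \E_\sigma F\!\left(\left|\textstyle\sum_{i=1}^n \sigma_i v_i x_i\right|\right),
\]
so that, writing $\mathbf 1=(1,\dots,1)$ and $a=(a_1,\dots,a_n)$, the assertion of the lemma is precisely $G(a)\le G(\mathbf 1)$; note $a\in[-1,1]^n$ because $|a_i|\le 1$ for every $i$. Throughout I take $F$ to be nondecreasing, matching the hypothesis of Lemma~\ref{thm:rademacher_comparison}; this is not a loss of generality but a genuine requirement, since already for $n=1$ the inequality fails for, e.g., $F(t)=(t-1)^2$, $x_1=\tfrac12$, $a_1=0$ (then the left-hand side is $F(0)=1$ while the right-hand side is $F(\tfrac12)=\tfrac14$).

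First I would show that $G$ is convex on $\RR^n$. For each fixed sign pattern $\sigma\in\{\pm1\}^n$ the map $v\mapsto\sum_i\sigma_i v_i x_i$ is affine, so $v\mapsto|\sum_i\sigma_i v_i x_i|$ is convex; composing with the convex nondecreasing $F$ preserves convexity, and averaging over the (finitely many) $\sigma$ preserves it as well. Since $[-1,1]^n$ is a polytope whose extreme points are exactly the sign vectors $\{-1,+1\}^n$, the convex function $G$ attains its maximum over $[-1,1]^n$ at some vertex $\epsilon=(\epsilon_1,\dots,\epsilon_n)\in\{-1,+1\}^n$; hence $G(a)\le\max_{\epsilon\in\{\pm1\}^n}G(\epsilon)$.

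It then remains to observe that $G(\epsilon)=G(\mathbf 1)$ for every $\epsilon\in\{\pm1\}^n$, which follows from the sign-symmetry of the Rademacher sequence: the map $\sigma\mapsto(\epsilon_1\sigma_1,\dots,\epsilon_n\sigma_n)$ is a measure-preserving bijection of $\{\pm1\}^n$, so $\sum_i\sigma_i\epsilon_i x_i$ has the same law as $\sum_i\sigma_i x_i$ and therefore $\E_\sigma F(|\sum_i\sigma_i\epsilon_i x_i|)=\E_\sigma F(|\sum_i\sigma_i x_i|)$. Combining with the previous paragraph gives $G(a)\le G(\mathbf 1)$, which is the claim. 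The only delicate point is the role of monotonicity of $F$: it is exactly what makes $F\circ|\cdot|$ convex and thereby drives the argument; granting that, everything else is routine. An alternative that sidesteps the polytope fact is to peel one coordinate at a time: conditioning on $\sigma_2,\dots,\sigma_n$, the function $t\mapsto\E_{\sigma_1}F(|\sigma_1 t x_1+\sum_{i\ge2}\sigma_i v_i x_i|)$ is convex and even in $t$, hence nondecreasing in $|t|$, so replacing $v_1$ by $1$ can only increase it; iterating over $i=1,\dots,n$ gives the result.
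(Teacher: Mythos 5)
The paper offers no proof of this lemma---it is quoted verbatim (as Theorem 4.4 of Ledoux--Talagrand) and used as a black box---so there is nothing internal to compare against; your argument is the standard textbook proof of the contraction principle (convexity of $v\mapsto\E_\sigma F(|\sum_i\sigma_i v_i x_i|)$, maximization of a convex function over the cube $[-1,1]^n$ at a sign vector, and sign-symmetry of the Rademacher sequence), and it is correct. Your one substantive addition is the observation that the hypothesis must include that $F$ is nondecreasing: your counterexample $F(t)=(t-1)^2$, $x_1=\tfrac12$, $\alpha_1=0$ is valid (the left side is $F(0)=1$, the right side is $F(\tfrac12)=\tfrac14$), so the lemma as transcribed, with ``convex'' alone, is literally false, and monotonicity is precisely what makes $F\circ|\cdot|$ (and hence $G$) convex. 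This does not affect the paper, which only invokes the lemma with $F(t)=t^{\ell}$, convex and increasing. Two minor remarks: in passing from $G(a)\le\max_{\epsilon\in\{\pm1\}^n}G(\epsilon)$ you can sidestep any attainment issue by writing $a$ as a convex combination of the vertices of the cube and applying Jensen directly; and the coordinate-peeling variant you sketch at the end is also correct and arguably cleaner, since it needs only the fact that a convex even function of one variable is nondecreasing in $|t|$.
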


The next one is a consequence of the contraction principle (see, e.g., Theorem 4.12 in~\cite{LT91}).

\begin{lemma}\label{thm:rademacher_contraction} Let $F:[0,\infty)\to [0,\infty)$ be convex and increasing. Suppose that $f(x),g(x)$ are nonnegative functions such that $|f(x)-f(y)|\leq L|g(x)-g(y)|$ for all $x,y\in\RR$. Let $\sigma_1,\dots,\sigma_n$ be a Rademacher sequence. Then for any bounded subset $T\subset \RR^n$ it holds that
\[
\E F\left(\frac{1}{2L}\sup_{t\in T} \left|\sum_{i=1}^n \sigma_i f(t_i) \right| \right) \leq \E F\left(\sup_{t\in T} \left| \sum_{i=1}^n \sigma_i g(t_i) \right|\right).
\]
\end{lemma}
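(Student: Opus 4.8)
The plan is to reduce the statement to the contraction principle for Rademacher processes in its ``$\sup$'' form, \cite[Theorem 4.12]{LT91} --- which can itself be recovered from Lemma~\ref{thm:contraction} by the standard one-coordinate-at-a-time peeling argument. We may assume $F$ is non-constant (otherwise the inequality is trivial) and that $g(T)$ is a bounded subset of $\RR^n$: if $g(T)$ were unbounded then $\sup_{s\in g(T)}\|s\|_1=\infty$, hence $\sup_{t\in T}|\sum_i\sigma_i g(t_i)| = \sup_{s\in g(T)}|\langle\sigma,s\rangle|$ would be infinite for at least one sign pattern $\sigma$, so the right-hand side $\E F(\cdot)$ would be infinite (a non-constant convex increasing $F$ is unbounded) and there would be nothing to prove.

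First I would factor $f$ through $g$. The hypothesis $|f(x)-f(y)|\le L|g(x)-g(y)|$ forces $f(x)=f(y)$ whenever $g(x)=g(y)$, so there is a well-defined $L$-Lipschitz map $h_0\colon g(\RR)\to\RR$ with $h_0\circ g=f$. Extending $h_0$ by a McShane--Whitney extension (e.g.\ $h(u)=\inf_{v\in g(\RR)}(h_0(v)+L|u-v|)$) yields an $L$-Lipschitz $h\colon\RR\to\RR$ with $h\circ g=f$. Using the normalization that the loss functions of interest vanish at the origin --- $g(0)=0$ and $f(0)=0$, which hold in all the applications (in Theorem~\ref{thm:quantile-sample}, $g$ is the identity and $f=\phi$ with $\phi(0)=0$) --- we get $0\in g(\RR)$ with $h(0)=h_0(0)=f(0)=0$. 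Set $\varphi:=h/L$; then $\varphi$ is a contraction with $\varphi(0)=0$ and $f(t_i)=L\,\varphi(g(t_i))$ for every $i$ and every $t\in T$.

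Then I would apply the contraction principle. Set $S=\{(g(t_1),\dots,g(t_n)):t\in T\}\subseteq\RR^n$, which is bounded by the reduction above, and invoke \cite[Theorem 4.12]{LT91} with the convex increasing $F$ and the identical contractions $\varphi_i=\varphi$:
\[
\E F\!\left(\tfrac12\sup_{s\in S}\Bigl|\sum_{i=1}^n\sigma_i\varphi(s_i)\Bigr|\right)\ \le\ \E F\!\left(\sup_{s\in S}\Bigl|\sum_{i=1}^n\sigma_i s_i\Bigr|\right).
\]
Substituting $\varphi(s_i)=f(t_i)/L$ on the left and $s_i=g(t_i)$ on the right turns this into $\E F\bigl(\tfrac{1}{2L}\sup_{t\in T}|\sum_i\sigma_i f(t_i)|\bigr)\le\E F\bigl(\sup_{t\in T}|\sum_i\sigma_i g(t_i)|\bigr)$, which is exactly the claim. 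The $\sup$ over a general bounded $T$ is handled routinely: prove the bound for finite subsets $T'\subseteq T$ and let $T'$ exhaust $T$, using monotone convergence and continuity of $F$.

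The main obstacle will be the factorization step together with the role of the normalization $f(0)=g(0)=0$: \cite[Theorem 4.12]{LT91} crucially requires the contractions to vanish at $0$, and without such a normalization the stated inequality is in fact false (take $f\equiv1$, $g\equiv0$, $n=1$, $T=\{0\}$: the left side is $F(1/(2L))$ while the right side is $F(0)$). Once $\varphi(0)=0$ is secured, the rest is a direct application of the cited theorem; the unbounded-$g(T)$ case and the reduction to finite index sets are the only other, and entirely routine, points.
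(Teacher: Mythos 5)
Your proof is correct and takes the same route the paper intends: the paper gives no proof of this lemma beyond the remark that it is ``a consequence of the contraction principle (Theorem 4.12 in \cite{LT91})'', and your factorization of $f$ through $g$ (well-definedness from the hypothesis, McShane extension, $\varphi = h/L$) is precisely the bridge needed to invoke that theorem. Your side observation is also a genuine catch rather than a defect of your argument: Theorem 4.12 of \cite{LT91} requires the contractions to vanish at $0$, the lemma as stated in the paper omits the corresponding normalization, and your example $f\equiv 1$, $g\equiv 0$ shows the inequality is false without it; the normalization $f(0)=g(0)=0$ does hold wherever the paper applies the lemma (e.g.\ $\phi(x)=a|x|+bx$ and $g=|\cdot|^p$), so the lemma should simply be read with that hypothesis added.
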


\subsection{Convex Optimization}
We recall some results in convex optimization.
\begin{definition}\label{def:lip}
A function $f : \mathbb{R}^d \to \mathbb{R}$ is $G$-Lipschitz if for any $x, y \in \mathbb{R}^d$,
\[
|f(x) - f(y)| \le G \|x - y\|_2.
\]
\end{definition}
We need the following result in~\cite{allen2017katyusha}.
\begin{theorem}[{\cite[Corollary 3.7]{allen2017katyusha}}]\label{thm:katyusha}
For a given function $f(x) = \frac{1}{n} \sum_{i = 1}^n f_i(x)$, let $x^* = \argmin_{x \in \mathbb{R}^d} f(x)$.
If each $f_i : \mathbb{R}^d \to \mathbb{R}$ is convex and $\sqrt{G}$-Lipschitz, then there is an algorithm that receives an initial solution $x_0 \in \mathbb{R}^d$, and outputs a solution $x \in \mathbb{R}^d$ satisfying $\E[f(x)] - f(x^*)$ in 
\[
T = O\left(n \log \frac{f(x_0) - f(x^*)}{\varepsilon}  + \frac{\sqrt{nG} \|x_0 - x^*\|}{\varepsilon} \right)
\]
stochastic subgradient iterations.

\end{theorem}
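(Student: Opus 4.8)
The plan is to deduce the statement (which evidently intends $\E[f(x)]-f(x^\ast)\le\varepsilon$) from the core Katyusha guarantee for finite sums of \emph{smooth} summands with a \emph{strongly convex} objective, by composing it with two standard black-box reductions --- Nesterov smoothing of each summand and adding a small quadratic regularizer --- that turn the present non-smooth, non-strongly-convex problem into that setting. Concretely, I would take as given the core estimate: if $F(x)=\frac1n\sum_{i=1}^n g_i(x)+\psi(x)$ with each $g_i$ convex and $L$-smooth and $F$ being $\sigma$-strongly convex, then Katyusha (accelerated variance reduction with a negative-momentum coupling layered on the SVRG epoch structure) returns $x$ with $\E[F(x)]-\min F\le\varepsilon'$ after $O\!\bigl((n+\sqrt{nL/\sigma})\log\tfrac{F(x_0)-\min F}{\varepsilon'}\bigr)$ stochastic gradient steps. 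Establishing this core rate (the potential-function analysis of the momentum coupling across epochs) is the genuinely hard part and is the content of the cited paper; within the scope of this corollary I would invoke it as a black box.

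Now the reductions. Because each $f_i$ is $\sqrt G$-Lipschitz, $\operatorname{dom} f_i^\ast\subseteq B(0,\sqrt G)$, so for $\mu>0$ the smoothed summand $f_i^\mu(x)=\max_u\{\langle u,x\rangle-f_i^\ast(u)-\tfrac\mu2\|u\|_2^2\}$ is convex, $(1/\mu)$-smooth, and satisfies $f_i^\mu\le f_i\le f_i^\mu+\tfrac{\mu G}{2}$; hence $f^\mu:=\frac1n\sum_i f_i^\mu$ has $(1/\mu)$-smooth summands and $|f^\mu-f|\le\mu G/2$ pointwise. Adding $\psi(x)=\tfrac\sigma2\|x-x_0\|_2^2$ makes $f^\mu+\psi$ $\sigma$-strongly convex, and since $f^\mu\ge f-\mu G/2$ and $\psi(x^\ast)\le\tfrac\sigma2\|x_0-x^\ast\|_2^2$, the minimum of $f^\mu+\psi$ lies within $\tfrac\sigma2\|x_0-x^\ast\|_2^2+\mu G/2$ of $f(x^\ast)$. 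Consequently, if the core algorithm returns $x$ with $(f^\mu+\psi)(x)-\min(f^\mu+\psi)\le\varepsilon'$, then using $\psi\ge 0$,
\[
\E[f(x)]-f(x^\ast)\le\varepsilon'+\tfrac\sigma2\|x_0-x^\ast\|_2^2+\tfrac{\mu G}{2}.
\]

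To finish I would balance the three error terms, taking $\varepsilon'=\tfrac{\mu G}{2}=\tfrac\sigma2\|x_0-x^\ast\|_2^2=\varepsilon/3$, i.e.\ $\mu=\Theta(\varepsilon/G)$ and $\sigma=\Theta(\varepsilon/\|x_0-x^\ast\|_2^2)$; then $L/\sigma=(1/\mu)/\sigma=\Theta(G\|x_0-x^\ast\|_2^2/\varepsilon^2)$, so $\sqrt{nL/\sigma}=\Theta(\sqrt{nG}\,\|x_0-x^\ast\|_2/\varepsilon)$, while $(f^\mu+\psi)(x_0)-\min(f^\mu+\psi)\le f(x_0)-f(x^\ast)+O(\varepsilon)$ (and if $f(x_0)-f(x^\ast)\lesssim\varepsilon$ then $x_0$ already suffices), so the logarithmic factor is $O(\log\tfrac{f(x_0)-f(x^\ast)}{\varepsilon})$. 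Plugging in gives $O\!\bigl((n+\sqrt{nG}\,\|x_0-x^\ast\|_2/\varepsilon)\log\tfrac{f(x_0)-f(x^\ast)}{\varepsilon}\bigr)$ iterations, each a single stochastic (sub)gradient call --- which matches the claim except for an extra log on the acceleration term. To remove that log and obtain the stated form exactly, I would replace the single-shot reduction by its adaptive version (the AdaptReg/AdaptSmooth scheme: halve $\sigma$ and $\mu$ over $O(\log\tfrac{f(x_0)-f(x^\ast)}{\varepsilon})$ warm-started epochs): the $n$-costs then sum to $O(n\log\tfrac{f(x_0)-f(x^\ast)}{\varepsilon})$, while the accelerated costs form a geometric series dominated by the final epoch, yielding $O(\sqrt{nG}\,\|x_0-x^\ast\|_2/\varepsilon)$ with no log. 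The main place care is needed --- the in-scope obstacle --- is the epoch bookkeeping: verifying that the per-epoch suboptimalities telescope, that the warm start keeps the distance-to-optimum controlled across epochs, and that the in-expectation guarantee composes through the epochs; the remaining estimates are routine.
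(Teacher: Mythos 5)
This statement is not proved in the paper at all: it is imported verbatim as a citation of Corollary~3.7 of Allen-Zhu's Katyusha paper, so there is no in-paper argument to compare against. Your sketch correctly reconstructs how that corollary is actually obtained in the cited source: take the core accelerated variance-reduction rate $O\bigl((n+\sqrt{nL/\sigma})\log\frac{F(x_0)-\min F}{\varepsilon'}\bigr)$ for $L$-smooth summands and a $\sigma$-strongly convex composite objective as a black box, apply Nesterov smoothing (using that $\sqrt{G}$-Lipschitzness confines $\operatorname{dom}f_i^\ast$ to a ball of radius $\sqrt{G}$, so the smoothing bias is $\mu G/2$) together with a quadratic regularizer centered at $x_0$, balance $\varepsilon'=\mu G/2=\tfrac{\sigma}{2}\|x_0-x^\ast\|^2=\Theta(\varepsilon)$, and then use the adaptive halving of $\mu$ and $\sigma$ with warm starts so that the accelerated term telescopes geometrically and sheds its logarithm while the $n$-term accumulates the $\log\frac{f(x_0)-f(x^\ast)}{\varepsilon}$ factor. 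The parameter arithmetic checks out ($\sqrt{nL/\sigma}=\Theta(\sqrt{nG}\,\|x_0-x^\ast\|/\varepsilon)$), and you rightly note that the statement as printed omits ``$\le\varepsilon$'' after $\E[f(x)]-f(x^\ast)$. Since the genuinely hard content (the momentum-coupling potential analysis behind the core rate) is deferred to the same source the paper cites, your proposal is as complete a proof as one could reasonably give within the scope of this corollary.
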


\section{Missing Proofs}
\label{sec:proof}

\subsection{Proof of Theorem~\ref{thm:calc_graph}}

\begin{proof}
By Lemma~2.4 in~\cite{cohen2015p}, to calculate approximate $\ell_1$ Lewis weights of rows of a matrix $A$, it suffices to calculate approximate leverage scores of rows of matrices of the form $WA$, for $O(\log n)$ different diagonal matrices $W \in \mathbb{R}^n$.

When $A$ is the edge-vertex incidence matrix of a graph $G$, $WA$ is the edge-vertex incidence matrix of a reweighted graph $G'$. 
In this case, leverage scores of rows of $WA$ are the effective resistances of $G'$ (cf.~\cite{drineas2010effective}), which can be computed in $\widetilde{O}(m)$ time using the algorithm in~\cite{spielman2011graph}.
\end{proof}

\subsection{Proof of Theorem~\ref{thm:quantile-sample}}\label{sec:claim_quantile_sampling}

\begin{proof}[Proof of Theorem~\ref{thm:quantile-sample}]
Suppose that $i_1,i_2,\dots,i_N$ are the indices of $\{e_1,\dots,e_m\}$ randomly chosen in the construction of $S$. Since $\phi(\alpha t) = \alpha\phi(t)$ for $\alpha > 0$, for each coordinate of $\wt{A}x$, we have
\begin{align*}
\E\phi((\wt{A}x)_k) = \E\phi\left(\frac{1}{p_{i_k}}\langle A_{i_k}, x\rangle\right) &= \sum_{j=1}^n \frac{1}{p_j}\phi(A_j^\top x)\cdot \frac{p_j}{N} = \frac{1}{N}\phi(Ax)
\end{align*}
and
\[
\E\phi(\wt{A}x) = \sum_{k=1}^N \E\phi((\wt{A}x)_k) = \phi(Ax).
\]
We just assume that $p_i\geq C_s\epsilon^{-2}w_i\log N$ for now and shall rescale $\eps$ in the end. The $\ell_1$ Lewis weight sampling result in~\cite{cohen2015p} implies that with probability at least $1-1/\poly(N)$ (over $i_1,\dots,i_N$),
\begin{equation}\label{eqn:condition}
\|\wt{A}x\|_1\leq C_1\|Ax\|_1,\quad \forall x\in \RR^d.
\end{equation}
We shall condition on this event in the rest of the proof.

Our goal is to derive a tail inequality for
\[
\sup_{x: \phi(Ax) = 1} \left| \phi(\wt{A}x) - 1 \right| = \sup_{x: \phi(Ax) = 1} \left| \sum_{k=1}^N \frac{\phi(\langle A_{i_k}, x\rangle)}{p_{i_k}} - 1\right|.
\]
We shall look at a higher moment of the deviation and apply Markov's inequality. To this end, we investigate
\begin{align*}
M &= \E_{S} \left(\sup_{\phi(Ax) = 1} \big| \phi(\wt{A}x) - 1\big|\right)^\ell = \E_{i}\left(\sup_{\phi(Ax) = 1} \left|\sum_{k=1}^N	\frac{\phi(\langle A_{i_k}, x\rangle)}{p_{i_k}} - 1\right|\right)^\ell,
\end{align*}
where $i=(i_1,i_2,\dots,i_N)$.

A standard symmetrization argument gives that
\begin{align}
\label{eqn:m}
M\le 2^{\ell} \E_{i,\sigma} \left[\left(\sup_{\phi(Ax)= 1}\left|\sum_{k=1}^N	\sigma_k\frac{\phi(\langle A_{i_k}, x\rangle)}{p_{i_k}}\right|\right)^\ell\right],
\end{align}
where $\sigma=(\sigma_1,\sigma_2,\ldots, \sigma_N)$ is a Rademacher sequence. It follows that
\begin{equation}
\label{eqn:quantile_aux0}
\begin{aligned}
 &\quad\ \E_{i,\sigma} \left(\sup_{\phi(Ax) = 1}\left|\sum_{k=1}^N	\sigma_k\frac{\phi(\langle A_{i_k}, x\rangle)}{p_{i_k}}\right|\right)^\ell \\
 &= \E_{i,\sigma} \sup_{\phi(Ax) = 1} \left|a\sum_{k=1}^N\sigma_k\frac{|\langle A_{i_k}, x\rangle|}{p_{i_k}} + b\sum_{k=1}^N\sigma_k\frac{\langle A_{i_k}, x\rangle}{p_{i_k}}\right|^\ell\\
 &\le \E_{i,\sigma} \sup_{\phi(Ax) = 1}\left(\left|a\sum_{k=1}^N\sigma_k\frac{|\langle A_{i_k}, x\rangle|}{p_{i_k}}\right| + b\left|\sum_{k=1}^N\sigma_k\frac{\langle A_{i_k}, x\rangle}{p_{i_k}}\right|\right)^\ell\\
 &\le 2^{\ell-1}\cdot \E_{i,\sigma} \sup_{\phi(Ax) = 1} \left[ \left|a\sum_{k=1}^N\sigma_k\frac{|\langle A_{i_k}, x\rangle|}{p_{i_k}}\right|^\ell + \left(b\left|\sum_{k=1}^N\sigma_k\frac{\langle A_{i_k}, x\rangle}{p_{i_k}}\right|\right)^\ell\right]\\
 &\le (2a^\ell+b^\ell)\cdot 2^{\ell-1} \cdot \E_{i,\sigma} \sup_{\phi(Ax) = 1} \left|\sum_{k=1}^N\sigma_k\frac{\langle A_{i_k}, x\rangle}{p_{i_k}}\right|^\ell\\
 &\leq 3a^\ell 2^{\ell-1} \E_{i,\sigma} \sup_{\phi(Ax) = 1} \left|\sum_{k=1}^N\sigma_k\frac{\langle A_{i_k}, x\rangle}{p_{i_k}}\right|^\ell,
\end{aligned}
\end{equation}
where the penultimate inequality follows from Lemma~\ref{thm:rademacher_comparison}.

For now, condition on the choices of $i_1,i_2,\ldots, i_N$. By Lemma B.1 in~\cite{cohen2015p},  there exists a matrix $A'$ of $O(d^2)$ rows such that
\begin{equation}\label{eqn:append}
\|A'x\|_1\leq C_1\|Ax\|_1,\quad \forall x\in \RR^d,
\end{equation}
and the Lewis weights of $A'$ are $O(1/d)$. Append $A'$ to the matrix $\wt{A}$, obtaining a new matrix $A''$ of $N' = N + O(d^2)$ rows, and it is a direct consequence of the contraction principle (Lemma~\ref{thm:contraction}) that
\begin{equation}\label{eqn:quantile_aux1}
\E_{\sigma} \sup_{\phi(Ax) = 1} \left|\sum_{k=1}^N\sigma_k\frac{\langle A_{i_k}, x\rangle}{p_{i_k}}\right|^\ell \leq \E_\sigma \sup_{\phi(Ax) = 1} \left|\sum_{k=1}^{N'} \sigma_k \langle A_k'', x \rangle \right|^\ell,
\end{equation}
since the sum on the right-hand side has more terms.

Furthermore, it can be proved that the Lewis weights of $A''$ are all $O(\eps^2/\log N')$ (see~\cite{cohen2015p}). In this case, for an appropriate choice of $N$ (and thus $N'$), it is implicitly shown in the proof of Theorem 2.3 in~\cite{cohen2015p}  that
\begin{equation}\label{eqn:quantile_aux2}
\E_{\sigma} \sup_{\|A''x\|_{1} =  1}\left|\sum_{k=1}^{N'} \sigma_k \langle A_k'', x\rangle \right|^\ell \le \frac{\eps^\ell}{\poly(N)}
\end{equation}
for some $\ell = \Theta(\log N')$. The next step is to relate the right-hand side of \eqref{eqn:quantile_aux1} to the left-hand side of \eqref{eqn:quantile_aux2}. Note that
\begin{align*}
\|A''x\|_1 &= \|\wt{A}x\|_1 + \|A'x\|_1 \\
&\leq C_1\|Ax\|_1 + C_2\|Ax\|_1\quad  (\text{by \eqref{eqn:condition} and \eqref{eqn:append}})\\
&\leq (C_1+C_2)B\phi(Ax)
\end{align*}
and thus
\begin{align*}
\sup_{\phi(Ax) =  1}\left|\sum_{k=1}^{N'} \sigma_k \langle A_k'', x\rangle \right|^\ell \leq  (C_1+C_2)^\ell B^\ell \sup_{\|A''x\|_{1} =  1}\left|\sum_{k=1}^{N'} \sigma_k \langle A_k'', x\rangle \right|^\ell.
\end{align*}
Taking expectation over $\sigma$ on both sides, we obtain using~\eqref{eqn:quantile_aux2} that
\[
\E_\sigma \sup_{\|Ax\|_{1} =  1}\left|\sum_{k=1}^{N'} \sigma_k \langle A_k'', x\rangle \right|^\ell \leq ((C_1+C_2)B)^\ell \frac{\epsilon^\ell}{\poly(N)}.
\]
Taking expectation over $i_1,\dots,i_N$ on both sides subject to the conditioning~\eqref{eqn:condition} and combining with \eqref{eqn:m}, \eqref{eqn:quantile_aux0} and \eqref{eqn:quantile_aux1}, we obtain that 
\[
M \leq C_3(C_4aB)^\ell \frac{\epsilon^\ell}{\poly(N)}
\]
The result follows from Markov's inequality with a rescaling of $\eps$ by a factor of $1/(C_5aB)$.
\end{proof}

\subsection{Proof of Theorem~\ref{thm:main_quantile}}\label{proof:quantile}
\begin{proof}
Recall that $\rho_{\tau}(x) = -\tau x$ if $x\le 0$ and $\rho_{\tau}(x)= x$ if $x\ge 0$. We can rewrite $\rho_{\tau}(x) = (1+\tau)|x|/2+ (1-\tau) x/2$. Also $\rho_\tau(x) \leq \|x\|_1/\tau$, and thus we can take $B=1/\tau$ in Theorem~\ref{thm:quantile-sample}.
By Theorem~\ref{thm:calc}, we can obtain $\{\overline{w}_i\}_{i=1}^n$, such that with probability $1 - 1 / \poly(d)$, for all $i \in [n]$, $w_i \le \overline{w}_i  \le 2 w_i$, where $\{w_i\}_{i=1}^n$ are the $\ell_1$ Lewis weights of $A$.
Now we invoke the row sampling algorithm in Theorem~\ref{thm:quantile-sample} and the fact in Lemma~\ref{lem:sum_lewis} to finish the proof.
\end{proof}

\subsection{Proof of Lemma~\ref{lem:sketch}}
\begin{proof}
By Theorem~\ref{thm:main_quantile}, with probability at least $0.9$, for all $x \in \mathbb{R}^d$, 
\[
 \left(1 - \frac{\varepsilon}{9}\right)\rho_\tau(Ax - b)  \le \rho_\tau(\wt{A}x - \wt{b}) \le \left(1 + \frac{\varepsilon}{9}\right) \rho_\tau(Ax - b).
\]
Let $x^{\mathrm{opt}} = \argmin_{x \in \mathbb{R}^d} \rho_\tau(Ax - b)$, we have
\begin{align*}
\rho_\tau(Ax^* - b) &= \rho_\tau(AR^{-1}\overline{x} - b) \\
&\le \frac{1}{1 - \varepsilon / 9}\rho_\tau(\wt{A}R^{-1}\overline{x} - \wt{b})  \\
&\le  \frac{1 + \varepsilon / 3}{1 - \varepsilon / 9}\rho_\tau(\wt{A}R^{-1} (Rx^{\mathrm{opt}}) - \wt{b}) \tag{By Equation~\eqref{equ:approx_sketched}}\\
&=  \frac{1 + \varepsilon / 3}{1 - \varepsilon / 9}\rho_\tau(\wt{A}x^{\mathrm{opt}} - \wt{b}) \\
&\le  \frac{(1 + \varepsilon / 3)(1 + \varepsilon / 9)}{1 - \varepsilon / 9}\rho_\tau(Ax^{\mathrm{opt}} - b)\\
& \le (1 + \varepsilon) \rho_\tau(Ax^{\mathrm{opt}} - b).\qedhere
\end{align*}
\end{proof}

\subsection{Proof of Lemma~\ref{lem:conditions}}
We need the following claim in our proof.
\begin{claim}\label{claim:matrix_quantile}
For a matrix $A \in \mathbb{R}^{n \times d}$ and a vector $y \in \mathbb{R}^n$,
\[
\|A^{\top} y\|_2 \le \frac{1}{\tau} \rho_\tau(y) \max_{1 \le i \le N} \|A_i\|_2.
\]
\end{claim}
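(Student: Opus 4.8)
\textbf{Proof proposal for Claim~\ref{claim:matrix_quantile}.}

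The plan is to bound $\|A^\top y\|_2$ by expressing it through the vector $u = A^\top y$ and relating $\langle u, u\rangle = \langle y, Au\rangle$ to $\rho_\tau(y)$. Writing $v = Au \in \mathbb{R}^n$, we have $\|A^\top y\|_2^2 = \langle y, v\rangle = \sum_{i=1}^N y_i v_i$. First I would observe that $|v_i| = |\langle A_i, u\rangle| \le \|A_i\|_2 \|u\|_2 \le (\max_i \|A_i\|_2)\|u\|_2$, so $\langle y, v\rangle \le (\max_i \|A_i\|_2)\|u\|_2 \sum_{i=1}^N |y_i| = (\max_i \|A_i\|_2)\|u\|_2 \|y\|_1$. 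Dividing through by $\|u\|_2 = \|A^\top y\|_2$ (the case $\|A^\top y\|_2 = 0$ being trivial) yields $\|A^\top y\|_2 \le \|y\|_1 \max_i \|A_i\|_2$.

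The remaining step is to replace $\|y\|_1$ by $\frac{1}{\tau}\rho_\tau(y)$. This is immediate from the definition of $\rho_\tau$ in Eqn.~\eqref{equ:quantile_def2}: since $0 < \tau \le 1$, for each coordinate we have $\rho_\tau(y_i) = y_i \ge \tau |y_i|$ when $y_i \ge 0$, and $\rho_\tau(y_i) = -\tau y_i = \tau|y_i|$ when $y_i < 0$; in either case $\rho_\tau(y_i) \ge \tau |y_i|$. Summing over $i$ gives $\rho_\tau(y) \ge \tau\|y\|_1$, i.e.\ $\|y\|_1 \le \frac{1}{\tau}\rho_\tau(y)$. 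Combining with the previous paragraph finishes the proof.

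There is no real obstacle here; the claim is essentially a Cauchy--Schwarz estimate combined with the elementary pointwise comparison $\rho_\tau(t) \ge \tau|t|$. The only mild subtlety is keeping track of which norm of $A$ appears: the bound uses $\max_i \|A_i\|_2$ rather than the operator norm, which is exactly what comes out of bounding each inner product $|\langle A_i, u\rangle|$ separately rather than bounding $\|Au\|_2$ in aggregate. This looseness is acceptable because in the intended application (Lemma~\ref{lem:conditions}) the rows are the rescaled sampled rows, whose $\ell_2$ norms are individually controlled.
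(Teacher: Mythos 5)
Your proof is correct and arrives at the same two-part bound as the paper: first $\|A^\top y\|_2 \le (\max_i \|A_i\|_2)\|y\|_1$, then the pointwise comparison $\rho_\tau(y_i) \ge \tau|y_i|$ to replace $\|y\|_1$ by $\frac{1}{\tau}\rho_\tau(y)$. The paper gets the first inequality a bit more directly by writing $A^\top y = \sum_i y_i A_i$ and applying the triangle inequality, whereas you take a small detour through the self-duality $\|u\|_2^2 = \langle y, Au\rangle$ with $u = A^\top y$ and then divide by $\|u\|_2$; the two derivations are essentially equivalent.
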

\begin{proof}
Observe that $\sum_{i = 1}^n |y_i| \le \frac{1}{\tau}\rho_\tau(y) $. It follows that
\begin{align*}
\|A^{\top} y\|_2 = \left\|\sum_{i = 1}^N y_i A_i\right\|_2 &\le \sum_{i = 1}^N |y_i| \|A_i\|_2 \le \max_{1 \le i \le N} \|A_i\|_2 \cdot \sum_{i = 1}^N |y_i| \le \frac{1}{\tau} \rho_\tau(y) \max_{1 \le i \le N} \|A_i\|_2.\qedhere
\end{align*}
\end{proof}

Now we are ready to prove Lemma~\ref{lem:conditions}.
\begin{proof}
By Lemma~30 in~\cite{durfee2018ell}, with probability at least $0.9$, the leverage score of each row of $\sAb$ satisfies $\tau_i(\sAb) = O(d / N)$.
We condition on this event in the remaining part of the proof.
By Lemma 2 in~\cite{cohen2015uniform}, 
\[
\tau_i(\sAb) = \min_{\sAb^{\top}x = (\sAb)_i} \|x\|_2^2,
\]
and
\[
\tau_i(\wt{A}) = \min_{\wt{A}^{\top}x = \wt{A}_i} \|x\|_2^2,
\]
which implies $\tau_i(\wt{A}) \le \tau_i(\sAb)$.
Thus, $\tau_i(\wt{A}) = O(d / N)$.
Since $\wt{A}R^{-1} = Q$ has orthonormal columns, for each row $Q_i$ of $Q$,
\[
\|Q_i\|_2^2 = \tau_i(\wt{A}) = O(d / N).
\]
Here we used the standard fact of the relation between leverage scores and the QR decomposition. 
%
See Definition 2.6 in~\cite{woodruff2014sketching} for details.

Let $f(x) = \rho_\tau(\wt{A}R^{-1}x - \wt{b})$, we can write
\[
f(x) = \sum_{i = 1}^n \rho_{\tau}(\langle (\wt{A}R^{-1})_i, x \rangle - b_i) = \frac{1}{N}\sum_{i=1}^n f_i(x),
\]
where $f_i(x) = N\cdot \rho_{\tau}(\langle (\wt{A}R^{-1})_i, x \rangle - b_i)$. Let $g_i(x) = \langle (\wt{A}R^{-1})_i, x \rangle - b_i$, then
\[
\|\nabla g_i(x)\|_2 \le \|(\wt{A}R^{-1})_i\|_2 = \|Q_i\|_2 = O(\sqrt{d/N}),
\]
which implies that each $g_i(x)$ is $O(\sqrt{d/N})$-Lipschitz. Observe that $\rho_\tau(\cdot)$ is $1$-Lipschitz, it follows that $f_i(x) = N\rho_\tau(g_i(x))$ is $O(\sqrt{Nd})$-Lipschitz.

By Claim~\ref{claim:matrix_quantile}, we have
\begin{align*}
 \|x_0 - \wt{x^{\mathrm{opt}}} \|_2 &=  \|(\wt{A}R^{-1})^{\top} b - \wt{x^{\mathrm{opt}}} \|_2\\
&=  \|(\wt{A}R^{-1})^{\top} b - \wt{x^{\mathrm{opt}}} \|_2 \\
&=  \|(\wt{A}R^{-1})^{\top} (b - (\wt{A}R^{-1}) \wt{x^{\mathrm{opt}}}) \|_2 \\
&\le \sqrt{\frac{d}{N\tau^2}} \cdot \rho_\tau((\wt{A}R^{-1}) \wt{x^{\mathrm{opt}}} - b).
\end{align*}
Finally, for any vector $y \in \mathbb{R}^N$, we have
\[
\tau \|y\|_2 \le \tau \|y\|_1 \le \rho_\tau(y) \le \|y\|_1 \le \sqrt{N} \|y\|_2.
\]

Since $\wt{A} R^{-1} = Q$ has orthonormal columns, $x_0 = (\wt{A}R^{-1})^{\top} b$ is the optimal solution to $\min_{x \in \mathbb{R}^d} \|\wt{A}R^{-1} x_0 - \wt{b}\|_2$.
Thus, we have $\|\wt{A}R^{-1} x_0 - \wt{b}\|_2 \le \|\wt{A}R^{-1} \wt{x^{\mathrm{opt}}}  - \wt{b}\|_2$.
Therefore, 
\begin{align*}
\rho_\tau(\wt{A}R^{-1} x_0 - \wt{b}) &\le \sqrt{N}\|\wt{A}R^{-1} x_0 - \wt{b}\|_2 \\
&\le \sqrt{N}\|\wt{A}R^{-1} \wt{x^{\mathrm{opt}}}  - \wt{b}\|_2 \\
&\le \sqrt{\frac{N}{\tau^2}}\rho_\tau(\wt{A}R^{-1} \wt{x^{\mathrm{opt}}}  - \wt{b}).
\end{align*}
\end{proof}

\subsection{Proof of Lemma~\ref{lem:katyusha}}
\begin{proof}
The initial solution $x_0 = (\wt{A}R^{-1})^{\top} b$ can be calculated in $O(Nd)$ time.
We condition on the event in Lemma~\ref{lem:conditions}.
By Theorem~\ref{thm:katyusha}, 
with probability at least $0.9$, after 
\begin{align*}
T = O\left(N \log \frac{f(x_0) - f(\wt{x^{\mathrm{opt}}} )}{\varepsilon \cdot f(\wt{x^{\mathrm{opt}}} )} + \frac{\sqrt{N^2 d} \|x_0 - x^*\|_2}{\varepsilon \cdot f(\wt{x^{\mathrm{opt}}} )}\right) = O\left(N \log \frac{N}{\varepsilon \tau^2} + \frac{d N^{1/2}}{\varepsilon \tau}\right) = \wt{O}\left(\frac{d^{1.5}}{\tau^2 \varepsilon^2} \right)
\end{align*}
stochastic subgradient iterations, we can find a solution $\overline{x}$ such that
\begin{align*}
\E\left[\rho_\tau(\wt{A}R^{-1}\overline{x} - \wt{b}) -\min_{x \in \mathbb{R}^d} \rho_\tau(\wt{A}R^{-1}x - \wt{b})\right] \le \varepsilon / 30 \cdot  \rho_\tau(\wt{A}R^{-1}x - \wt{b}).
\end{align*}
By Markov's inequality, with probability at least $0.9$, we have
\[
\rho_\tau(\wt{A}R^{-1}\overline{x} - \wt{b}) \le  (1 + \varepsilon / 3) \cdot \min_{x \in \mathbb{R}^d} \rho_\tau(\wt{A}R^{-1}x - \wt{b}).
\]
Furthermore, each stochastic subgradient can be calculated in $O(d)$ time, since
\[
\nabla f_i(x)= \begin{cases}
\operatorname{sign}(\langle A_i, x \rangle - b_i) \cdot A_i & \text{if }\langle A_i, x \rangle - b_i \geq 0 \\
\tau \cdot \operatorname{sign}(\langle A_i, x \rangle - b_i) \cdot A_i & \text{otherwise}
\end{cases}.
\]
where we choose a subgradient $\nabla f_i(x) = 0$ at the nondifferentiable points $x$.
\end{proof}

\subsection{Proof of Theorem~\ref{thm:reg}}
\begin{proof}
Finding the QR decomposition of the concatenated matrix $\sAb$ can be done in $\wt{O}(d^{\omega}/(\varepsilon^2 \tau^2))$ time (see Lemma 33 in~\cite{durfee2018ell}).
By Lemma~\ref{lem:katyusha}, we can obtain $\overline{x}$ such that $\rho_\tau(\wt{A}R^{-1}\overline{x} - \wt{b}) \le (1 + \varepsilon / 3) \cdot \min_{x \in \mathbb{R}^d} \rho_\tau(\wt{A}R^{-1}x - \wt{b})$  in $\wt{O}(d^{2.5}/(\tau^2 \varepsilon^2))$ time and succeeds with probability at least $0.8$.
By Lemma~\ref{lem:sketch}, with probability at least $0.9$, the obtained solution $x^* \in \mathbb{R}^d$ satisfies $\rho_\tau(Ax^*-b) \le (1 + \varepsilon) \min_{x \in \mathbb{R}^d}\rho_\tau(Ax-b)$.
We complete the proof of the theorem by taking a union bound over the two events mentioned above.
\end{proof}
\subsection{Proof of Lemma~\ref{lem:balanced_graph}}
\begin{proof}
By permuting the columns we may assume without loss of generality that $x_1\leq x_2\leq \cdots\leq x_n$. Suppose that $a = x_1 < x_n = b$, otherwise $Ax=0$ and the desired inequality holds automatically. 

Let $I = \{i\in [n]: (Ax)_i\geq 0\}$. 
Let $u = \sum_{i\in I} A_i$ and $v = \sum_{i\in I}A_i - \sum_{i\not\in I} A_i$, then $\rho_0(Ax) = \langle u,x\rangle$ and $\rho_1(Ax) = \langle v,x\rangle$ for all $x\in P$, where $P = \{x\in [a,b]^n: a = x_1\leq x_2\leq \cdots \leq x_n = b\}$ is a polytope.

Observe that $\rho_1(Ax)\neq 0$ on $P$ and thus the function $f(x) = \langle u,x\rangle/\langle v,x\rangle$ attains its minimum value $\lambda$ on the compact set $P$. Suppose that $\langle u,x^*\rangle =\lambda \langle v,x^*\rangle$ for some $x^*\in P$. We claim that $x^*$ is also the minimizer of $\langle u-\lambda v,x\rangle$ on $P$. Indeed, if $\langle u-\lambda v,x'\rangle < \langle u-\lambda v,x^*\rangle$ for some $x'\in P$, then $\langle u,x'\rangle/\langle v,x'\rangle < \lambda$, contradicting the minimality of $x'$.

Now, since $x^*$ is a minimizer of $\langle u-\lambda v,x\rangle$ on the polytope $P$, it must be some vertex of $P$, that is, there exists $k$ such that $x^*_1 = \cdots = x^*_k = a$ and $x^*_{k+1} = \cdots = x^*_n = b$. Let $S = \{x_1,\dots,x_k\}$, then 
\[
\frac{\rho_0(Ax^*)}{\rho_1(Ax^*)} = \frac{w(S, V \setminus S)}{w(S, V \setminus S) + w(V \setminus S, S)} \ge \frac{1}{\alpha + 1},
\]
where the last step follows from the definition of the $\alpha$-balanced graph.
We complete the proof by noticing that \[\frac{\rho_0(Ax)}{\rho_1(Ax)} \ge \frac{\rho_0(Ax^*)}{\rho_1(Ax^*)} \ge \frac{1}{\alpha + 1}.\]
\end{proof}

\subsection{Proof of Corollary~\ref{cor:spar}}
\begin{proof}
Observe that $\rho_0(x) = \frac12 |x| + \frac12 x$. 
Moreover, by Lemma~\ref{lem:balanced_graph}, we have $\|Bx\|_1 \leq (1+\alpha)\rho_0(Bx)$ for all $x\in \RR^n$. Now we invoke Theorem~\ref{thm:quantile-sample} with $a = b = \frac12$ and $B = \alpha + 1$, which states that with probability at least $1 - 1 / \poly(n)$, for all $x \in \mathbb{R}^n$,
$(1-\eps) \rho_0(Bx) \leq \rho_0(B'x)  \leq (1+\eps) \rho_0(Bx)$.
Moreover, the rows of $B'$ are reweighted rows of $B$, which implies $B'$ is the edge-vertex matrix of a graph $G'$, whose edges are reweighted edges of $G$.
The running time of Algorithm~\ref{fig:spar} directly follows from Theorem~\ref{thm:calc_graph}.
\end{proof}

\section{Proof of Theorem~\ref{thm:generalized-sample}}\label{sec:extra_proof}

Similar to the proof of Theorem~\ref{thm:quantile-sample} and below we shall only highlight the changes. Instead of the comparison result of Lemma~\ref{thm:rademacher_comparison}, we shall use Lemma~\ref{thm:rademacher_contraction}.

Note that we have here that
\[
\E \phi^{w}(\widetilde{A}x) = \phi(Ax)
\]
and want to upper bound
\[
M := \E_S \left[\sup_{x\neq 0} \left|\frac{\phi^w(\widetilde{A}x)}{\phi(Ax)} -1\right|^\ell \right].
\]

Again by a standard symmetrization argument,
\begin{align*}
M &= \E_{S}\left[\sup_{x\neq 0} \left|\sum_{i=1}^N \frac{\phi((\wt{A}x)_i)}{p_{i_k}\phi(Ax)}  - 1\right|^\ell\right] \le 2^{\ell} \E_{i,\sigma} \left[\sup_{x\neq 0}\left|\sum_{k=1}^N	\sigma_k\frac{\phi(\langle A_{i_k}, x\rangle)}{p_{i_k}\phi(Ax)}\right|^\ell\right],
\end{align*}

where $\sigma_1,\sigma_2,\dots$ is a Rademacher sequence and $i_1,i_2,\dots$ are the indices of the rows chosen randomly during the construction of $S$.

Invoking Lemma~\ref{thm:rademacher_contraction} we have for fixed $i_1,i_2,\dots$ that
\begin{align*}
&\quad\ \E_{\sigma} \left[\sup_{x\neq 0}\left|\sum_{k=1}^N \sigma_k\frac{\phi(\langle A_{i_k}, x\rangle)}{p_{i_k}\phi(Ax)}\right|^\ell\right] \\
&\leq (2L)^\ell \E_{\sigma} \left[\sup_{x\neq 0}\left|\sum_{k=1}^N	\sigma_k\frac{ |\langle A_{i_k}, x\rangle|^p}{p_{i_k}\phi(Ax)}\right|^\ell\right] \\
&\leq \left(\frac{2L}{\gamma}\right)^\ell \E_{\sigma} \left[\left(\sup_{x\neq 0}\left|\sum_{k=1}^N \sigma_k\frac{ |\langle A_{i_k}, x\rangle|^p}{p_{i_k}\|Ax\|_p^p}\right|\right)^\ell\right] \\
&\leq \left(\frac{2L}{\gamma}\right)^\ell \E_{\sigma} \left[\left(\sup_{\|Ax\|_{p} = 1}\left|\sum_{k=1}^N	\sigma_k\frac{ |\langle A_{i_k}, x\rangle|^p}{p_{i_k}}\right|\right)^\ell\right]
\end{align*}
The problem is thus reduced to the analysis of $\ell_p$ Lewis weight sampling and it has been proved implicitly in~\cite{cohen2015p} that
\[
\E_{i,\sigma} \left[\left(\sup_{\|Ax\|_{p} = 1}\left|\sum_{k=1}^N	\sigma_k\frac{ |\langle A_{i_k}, x\rangle|^p}{p_{i_k}}\right|\right)^\ell\right] \leq \frac{(C_1\eps)^\ell}{\poly(N)}.
\]
and hence
\[
M \leq 2^{2\ell} \cdot \left(\frac{2L}{\gamma}\right)^\ell \cdot \frac{(C_1 \epsilon)^\ell}{\poly(N)} \leq \frac{(C_2L/\gamma \cdot \eps)^\ell}{\poly(N)}.
\]
The result follows from Markov's inequality with a rescaling of $\eps$ by a factor of $\gamma/(C_2 L)$.

\end{document}